\newcommandx{\itodo}[2][1=]{\todo[inline,caption={},linecolor=red,backgroundcolor=red!25,bordercolor=red,#1]{#2}}
\newcommand{\ignore}[1]{}
\renewcommand{\phi}{\varphi}
\newcommand{\DLS}{\ensuremath{\mathrm{DLS}^*}}
\newcommand{\States}{\ensuremath{\mathsf{States}}}
\DeclarePairedDelimiter\abs{\lvert}{\rvert}%
\newcommand{\Const}{\mathcal{C}}
\newcommand\start{v_0}
\newcommand\term{\textsf{term}}
\newcommand{\Init}{{\sf Init}}
\newcommand\Transitions{\Delta}
\newcommand\traces{\ensuremath{\mathsf{Traces}}}
\newcommand{\rels}{\mathcal{R}}
\newcommand{\relsS}{{\ensuremath{\rels_{\mathit{state}}}}}
\newcommand{\relsSn}{{\ensuremath{\rels_{\mathit{state}}^n}}}
\newcommand{\relsI}{{\ensuremath{\rels_{\mathit{input}}}}}
\newcommand\relsA{{\ensuremath{\rels_{\mathcal A}}}}
\newcommand\relsB{{\ensuremath{\rels_{\mathcal B}}}}
\newcommand\pA{\ensuremath{\mathcal A}}
\newcommand\pB{\ensuremath{\mathcal B}}
\newcommand\wadd{\mathbin{{+}{=}}}
\newcommand\wsub{\mathbin{{-}{=}}}
\newcommand\sem[1]{\ensuremath{\llbracket #1 \rrbracket}}
\newcommand\fol{\textnormal{FOL}\xspace}
\newcommand\eufol{\text{$\exists^*\forall^*$\fol}\xspace}
\newcommand\uefol{\text{$\forall^*\exists^*$\fol}\xspace}
\renewcommand\implies{\rightarrow}
\newcommand\rank{\ensuremath{\mathit{qr}}}
\newcommand\Conf{\textit{Conflict}} %% conflict
\newcommand\Assign{\textit{Assign}} %% assignment
\newcommand\Read{\textit{Read}} %% reviews
\newcommand\Review{\textit{Review}} %% reviews
\newcommand\Next{\ensuremath{\mathit{next}}\xspace}
\newcommand\Leader{\ensuremath{\mathit{leader}}\xspace}
\newcommand\Msg{\ensuremath{\mathit{msg}}\xspace}
\newcommand\Btw{\ensuremath{\mathit{between}}\xspace}
\newcommand\err{\textit{error}}
\newcommand\ttrue{\ensuremath{\mathit{true}}\xspace}
\newcommand\tfalse{\ensuremath{\mathit{false}}\xspace}
\newcommand{\T}{\ensuremath{\mathcal{T}}}
\definecolor{darkgreen}{rgb}{0,0.6,0}
\lstdefinelanguage{workflows}{
  %% backgroundcolor = \color{lightgray!20!},
  %% numbers = left,
  %% stepnumber = 5,
  %% firstnumber = 1,
  % morekeywords={loop, forall, may},
  autogobble,
  columns=fullflexible,
  sensitive=true,
  commentstyle = \itshape, % \color{blue},
  morecomment={[l]//},
  mathescape=true,
  % basicstyle=\small,
  identifierstyle={\itshape},
  %escapechar = \&
  literate=
    {forall}{\textnormal{\textbf{forall}}\ }{2} 
    {may}{\textnormal{\textbf{may}}}{2} 
    {¬}{$\neg$}{2} 
    {!}{$\neg$}{2} 
    {<-}{$\leftarrow\ $}{2} 
    {->}{$\rightarrow\ $}{2} 
    {:=}{$\leftarrow\ $}{2} 
    {+=}{$\wadd\ $}{2} 
    {-=}{$\wsub\ $}{2}
}
\lstdefinelanguage{fots}{
  %% backgroundcolor = \color{lightgray!20!},
  %% numbers = left,
  %% stepnumber = 5,
  %% firstnumber = 1,
  morekeywords={while, do, choose, or},
  autogobble,
  columns=fullflexible,
  sensitive=true,
  commentstyle = \itshape\color{darkgreen},
  morecomment={[l]//},
  mathescape=true,
  % basicstyle=\small,
  identifierstyle={\itshape},
  %escapechar = \&
  literate=
    % {forall}{\textnormal{\textbf{forall}}\ }{2} 
    % {may}{\textnormal{\textbf{may}}}{2} 
    {¬}{$\neg$}{2} 
    {∧}{$\land\ $}{2} 
    {∨}{$\lor\ $}{2} 
    {∃}{$\exists$}{2} 
    {∀}{$\forall$}{2} 
    {!}{$\neg$}{2} 
    {<-}{$\leftarrow\ $}{2} 
    {->}{$\rightarrow\ $}{2} 
    {:=}{$\coloneqq\ $}{2} 
    {⊥}{$\emptyset\ $}{2} 
    {+=}{$\wadd\ $}{2} 
    {-=}{$\wsub\ $}{2}
}
\tikzset{
  fotsnode/.style = {draw, circle},
  font=\footnotesize
}
\begin{document}

\title{
	How to Win First-Order Safety Games
    \thanks{
        \vtop{%
          \vskip-5pt
          \hbox{%
            \protect\includegraphics[width=1 cm]{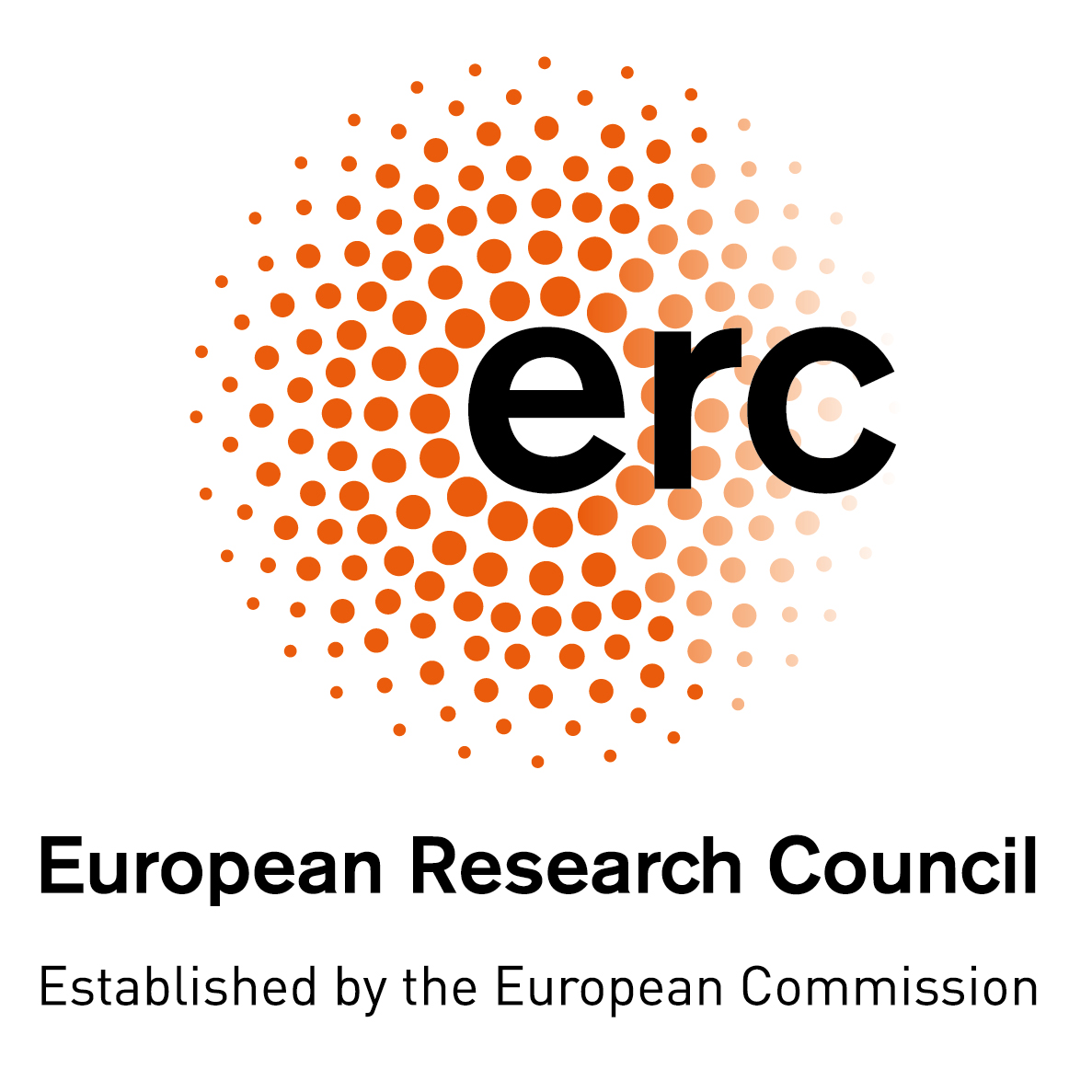}%
          }%
        }
        \parbox[t]{0.85\textwidth}{
        The project has received funding from the European Research Council (ERC) 
        under the European Union’s Horizon 2020 research and innovation programme 
        under grant agreement No. 787367 (PaVeS). 
        }
    }
} 

\author{
Helmut Seidl\inst{1}
% \orcidID{0000-0002-2135-1593} 
\and
Christian Müller\inst{1}
% \orcidID{0000-0001-9560-826X} 
\and
Bernd Finkbeiner\inst{2}
% \orcidID{?}
}
\authorrunning{H. Seidl et al.}
\institute{Technische Universität München\\
\email{\{seidl,christian.mueller\}@in.tum.de} \and
CISPA, Saarland University\\
\email{finkbeiner@cs.uni-saarland.de}}

\maketitle

\begin{abstract}
  \vspace{-1em}
  First-order (FO) transition systems have recently attracted attention for the verification 
of parametric systems such as network protocols, software-defined networks or 
multi-agent workflows like conference management systems.
Functional correctness or noninterference of these systems 
have conveniently been formulated as safety or hypersafety properties, respectively.
In this article, we take the step from verification to synthesis ---
tackling the question whether it is possible to automatically synthesize predicates
to enforce safety or hypersafety properties like noninterference.
For that, we generalize FO transition systems to FO safety games. 
For FO games with monadic predicates only, 
we provide a complete classification into decidable and undecidable cases.
For games with non-monadic predicates, we concentrate on universal first-order invariants,
since these are sufficient to express a large class of properties --- for example
noninterference.
We identify a non-trivial sub-class where invariants can be proven inductive and
FO winning strategies be effectively constructed. 
We also show how the extraction of weakest FO winning strategies can be reduced
to SO quantifier elimination itself.
We demonstrate the usefulness of our approach by 
automatically synthesizing nontrivial FO specifications of messages
in a leader election protocol as well as for 
paper assignment in a 
conference management system to exclude unappreciated disclosure of reports.

\end{abstract}

\keywords{
First Order Safety Games \and
Universal Invariants \and
First Order Logic \and
Second Order Quantifier Elimination
}

\section{Introduction}\label{s:intro}

Given a network of processes, can we synthesize the content of messages to be sent to
elect a single leader?
Given a conference management system, can we automatically synthesize a
strategy for paper assignment so that no PC member is able to obtain illegitimate information 
about reports?
Parametric systems
like conference management systems 
can readily be formalized as 
first order (FO) transition systems where the attained states
of agents are given as a FO structure, 
i.e., a finite set of relations.
This approach was pioneered by
abstract state machines (ASMs)~\cite{gurevich2018evolving}, and has
found many practical applications, for example in the verification of
network protocols~\cite{padon2016ivy}, software defined networks~\cite{ball2014vericon}, and multi-agent
workflows~\cite{DBLP:conf/ccs/Finkbeiner0SZ17,ATVA16,DBLP:conf/csfw/0008SZ18}.  
FO transition systems rely on \emph{input} predicates to receive information
from the environment such as network events, interconnection topologies, 
or decisions of agents.
In addition to the externally provided inputs, there are also
\emph{internal} decisions that are made to ensure well-behaviour of the system.
This separation of input predicates into these two groups turns the underlying
transition system into a two-player \emph{game}.
In order to systematically explore possibilities of synthesizing 
message contents in protocols or strategies in workflows,
we generalize FO transition systems to FO games.

\begin{example}\label{e:leader_election}
	\Cref{fig:leaderelection} shows
	a slightly simplified version of the network leader election protocol from \cite{padon2016ivy}
    turned into a FO game.
    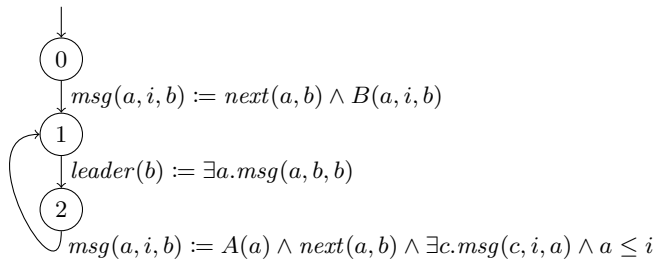
\begin{figure}[h!]
    \begin{tikzpicture}[
    node distance=1cm
]
    % \useasboundingbox (-1,0.6) rectangle (7.9,-4.8);

    \node[fotsnode, initial above, initial text={}] (0) at (0,0) {0};
    \node[fotsnode] (1) [below of=0] {1};
    \node[fotsnode] (2) [below of=1] {2};

    \path[->, draw]
        (0) edge node[right, align=left] {
            $\Msg(a,i,b) \coloneqq \Next(a,b) \land B(a,i,b)$
            } (1)
        (1) edge node[right, align=left] {
            $\Leader(b) \coloneqq \exists a. \Msg(a,b,b)$
        } (2);
        \path[->, draw] (2) edge [out=-90,in=180, looseness=2]
            node[pos=0.1, right, align=left] {
                $\Msg(a,i,b) \coloneqq A(a)\land \Next(a,b) \land
			\exists c. \Msg(c,i,a)\land a\leq i$
		% \exists b. \Next(a,b) \land \Msg(a,b) \land b \leq a$
            }  (1);
\end{tikzpicture}
    \vspace{-2em}
    \caption{\label{fig:leaderelection}
    FO safety game for the running leader election example}
    \end{figure}
    % It models a leader election protocol in a ring topology.
    The topology of the network, here a ring, is given by the predicates $\Next$ and $\leq$, 
    which are appropriately axiomatized.
    The participating agents communicate via messages through the predicate $\Msg$
    but are only allowed to send messages to the next agent in the ring topology.
    In the first step, agents can send any message (determined via the input predicate
    $B$) to their neighbor. Afterwards they check if they have received 
    a message containing their own id. If so, they declare themselves leader
    and add themselves to the $\Leader$ relation.
    Then, a subset of processes determined by the input predicate $A$ decides to send any id 
    to their next neighbor that they have received which is not exceeded by their own.

    At no point more than one process should have declared itself leader
    --- regardless of the size of the ring. 
    This property is enforced, e.g., if the initial message to be sent is given by the 
    id of the sending process itself, i.e., $B(a,i,b)$ is given by the literal $(i=a)$.
\qed
\end{example}

\begin{example}\label{e:easychair}
Consider the 
workflow of a conference management system as specified in
\cref{fig:easychair}.
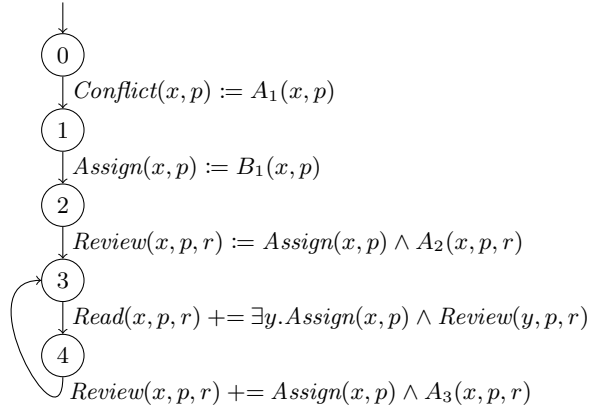
\begin{figure}[t]
    \begin{tikzpicture}[
	node distance=1cm
]
	\useasboundingbox (-1,0.6) rectangle (7.9,-4.9);

	\node[fotsnode, initial above, initial text={}] (0) at (0,0) {0};
	\node[fotsnode] (1) [below of=0] {1};
	\node[fotsnode] (2) [below of=1] {2};
	\node[fotsnode] (3) [below of=2] {3};
	\node[fotsnode] (4) [below of=3] {4};

	\path[->, draw]
		(0) edge node[right, align=left] {
			$\Conf(x,p) \coloneqq A_1(x,p)$
			} (1)
		(1) edge node[right, align=left] {
			$\Assign(x,p) \coloneqq B_1(x,p)$
		} (2)
		(2) edge node[right, align=left] {
			$\Review(x,p,r) \coloneqq \Assign(x,p) \land A_2(x,p,r)$
		} (3)
		(3) edge node[right, align=left] {
			$\Read(x,p,r) \wadd \exists y. \Assign(x,p) \land \Review(y,p,r)$
		} (4);

		\path[->, draw] (4) edge [out=-90,in=180, looseness=2]
			node[pos=0.1, right, align=left] {
				$\Review(x,p,r) \wadd \Assign(x,p) \land A_3(x,p,r)$
			}  (3);
\end{tikzpicture}
    \caption{\label{fig:easychair}
    FO safety game for the running conference management example}
\end{figure}
The specification maintains the binary predicates 
$\Conf$ and $\Assign$ together with the ternary predicates $\Review$ and $\Read$ to record
conflicts of interest between PC members and papers,
the paper assignment as well as the reports provided by PC members for papers.
After the initial declaration of conflicts of interest, PC members write reviews for
the papers they are assigned and update them after reading the other reviews to the
same paper. 
The predicates $A_1,A_2,A_3$ represent choices by PC members, 
while the predicate $B_1$ is under control of the PC chair.
The operator $\wadd$ adds tuples to a relation instead of 
replacing all contents. Specifically, $R\bar y \wadd \phi$ 
abbreviates $R \bar y \coloneqq R \bar y \lor \phi$.
\qed
\end{example}
One property to be checked in \cref{e:easychair} is that no PC member can learn 
anything about papers she has declared conflict with.
\emph{Noninterference} properties like this one can be formalized as
\emph{hyper-safety} properties, but
can be reduced to \emph{safety} properties of suitable \emph{self-compositions} of 
the system in question \cite{DBLP:conf/csfw/0008SZ18}.
This reduction is explained in \cref{a:ni}.
A plain safety property in this example would be, e.g.,
the more humble objective 
that no PC member $x$ is going to read a report on a paper $p$
which she herself has authored, i.e., 
\[
{\small\forall x,p,r. \neg(\Conf(x,p) \land \Read(x,p,r))}
\]
Obvious choices for $B_1$ to enforce this property are
\[
{\small
\begin{array}{lll}
B_1(x,p)	&{:=}	&\neg\Conf(x,p)\qquad\text{or}	\\
B_1(x,p)	&{:=}	&\tfalse
\end{array}
}
\]
The second choice is rather trivial. The first choice, on the other hand,
which happens to be the \emph{weakest} possible, represents a meaningful strategy.

In this paper, we therefore investigate cases where \emph{safety} is decidable and 
winning strategies for safety player are effectively computable and as weak as possible.
For FO transition systems as
specified by the Relational Modeling Language (RML)~\cite{padon2016ivy}, 
typed update commands are restricted to 
preserve Bernays-Sch\"onfinkel-Ramsey (also called \emph{effectively propositional}) formulas. 
As a consequence,
inductiveness of a universal invariant can be checked automatically.
We show that this observation can be extended to FO safety \emph{games} --- given 
that appropriate winning strategies for safety player are either provided 
or can be effectively constructed (see \cref{s:check}).
We also provide sufficient conditions under which a \emph{weakest}
such strategy can be constructed (see \cref{s:hilbert}).

The question arises whether a similar transfer of the
decidability of the logic to the decidability of the verification
problem is possible for other decidable fragments of FO logic. 
A both natural and useful candidate is \emph{monadic} logic.
Interestingly, this transfer is only possible for specific \emph{fragments} 
of monadic FO safety games, while in general safety is undecidable.
For FO safety games using arbitrary predicates, 
we restrict ourselves to FO universal invariants only,
since the safety properties, e.g., arising
from noninterference can be expressed in this fragment.
For universal invariants, we show how general methods for second order quantifier elimination 
can be instantiated to compute winning strategies.
Existential SO quantifier elimination, though, is not always possible.
Still, we provide a non-trivial class of universal invariants where 
optimal strategies can be synthesized.
In the general case and,
likewise, when existential FO quantifiers are introduced during game solving, 
we resort to \emph{abstraction} as in \cite{DBLP:conf/csfw/0008SZ18}.
This allows us to automatically construct strategies that guarantee safety
or, in the case of information-flow, to enforce noninterference.

The paper is organized as follows.
In \cref{s:fo-trans,s:games}, the notion of first-order safety games is introduced.
We prove that safety player indeed has a positional winning strategy, whenever 
the game is safe. 
We also prove that safety of \emph{finite} games is already inter-reducible to
SO predicate logic.
In \cref{s:monadic}, we consider the important
class of FO safety games where all predicates are either monadic 
or boolean flags. 
Despite the fact that this logic is decidable and admits SO
quantifier elimination, safety for this class is undecidable.
Nonetheless, we identify three subclasses of monadic games where
decidability is retained.
\Cref{s:check} proves that even
when a universally quantified FO candidate for
an inductive invariant of the safety game is already provided, 
checking whether or not the candidate invariant
is inductive, can be reduced to SO existential
quantifier elimination.
\Cref{s:hilbert} provides background techniques for
SO universal as well as existential quantifier elimination.
It proves that for universal FO formulas, 
the construction of a \emph{weakest} SO Hilbert choice operator can be reduced to 
SO quantifier elimination itself.
Moreover, it provides sufficient conditions when a universal invariant for 
a FO safety game can effectively proven inductive and a corresponding weakest strategy 
for safety player be extracted.
Based on the candidates for the second-order Hilbert choice operator from \cref{s:hilbert}, 
and abstraction techniques from \cite{DBLP:conf/csfw/0008SZ18}, a practical implementation
is presented in \cref{s:experimental} which allows to infer inductive invariants and
FO definable winning strategies for safety player.
Finally, \cref{s:refs} provides a more detailed comparison with related work while
\cref{s:conclusion} concludes.

\section{First-Order Transition Systems}\label{s:fo-trans}

Assume that we are given finite sets $\relsS$, $\relsI$, 
$\Const$ of relation symbols and constants,
respectively. 
A first-order (FO) transition system $\mathcal S$ (over $\relsS$, $\relsI$ and $\Const$)
consists of a control-flow graph $(V,E,\start)$ 
underlying $\mathcal S$
where $V$ is a finite set of program points, $\start\in V$ is the start point 
and $E$ is a finite set of edges between vertices in $V$. 
Each edge thereby is of the form $(v,\theta,v')$ where $\theta$ signifies
how the first-order structure for program point $v'$ is determined in terms of a
first-order structure at program point $v$.
Thus, $\theta$ is defined as a mapping which provides for each predicate $R\in\relsS$ of 
arity $r$, a first-order formula $R\theta$ with free variables from $\Const$ as well
as a dedicated sequence of fresh FO variables $\bar y=y_1\ldots y_r$. 
Each formula $R\theta$ may use FO quantification, equality or disequality literals as well as
predicates from $\relsS$.
Additionally, we allow occurrences of dedicated \emph{input} predicates from $\relsI$.
For convenience, we denote a substitution $\theta$ of predicates 
$R_1,\ldots,R_n$ with $\phi_1,\ldots,\phi_n$ by 
\[
{\small\theta = \{ R_1\bar y_1\coloneqq\phi_1,\ldots,R_n\bar y_n\coloneqq\phi_n\}}
\]
where $\bar y_j=y_1\ldots y_{r_j}$ are the formal parameters
of $R_i$ and may occur free in $\phi_i$.

\begin{example}
	In the example from \cref{fig:easychair}, the state predicates in $\relsS$ are 
	$\Conf$, $\Assign$, $\Review$ and $\Read$, while the input predicates $\relsI$ 
	consist of $A_1$, $A_2$, $A_3$ and $B_1$.
	As there are no global constants, $\Const$ is empty.
	For the edge from node $2$ to node $3$, $\theta$ maps $\Review$ to the 
	formula $\Assign(y_1,y_2)\land A_2(y_1,y_2,y_3)$ 
	and each other predicate $R$ from $\relsS$ to itself (applied to the appropriate list 
	of formal parameters).
	Thus, $\theta$ maps, e.g., $\Conf$ to $\Conf(y_1,y_2)$.
\qed
\end{example}

Let $U$ be some universe and $\rho:\Const\to U$ be a valuation of the globally free variables. 
Let $\relsSn$ denote the set of predicates with arity $n$.
A \emph{state} $s: \bigcup_{n\geq 0}\relsSn \times U^n \rightarrow \mathbb B$ 
is an evaluation of the predicates $\relsS$ by means of relations over $U$.
Let $\States_U$ denote the set of all states with universe $U$.
For an edge $(v,\theta,v')$, a valuation $\omega$ of the input predicates,
and states $s,s'$, there is a transition from $(v,s)$ to $(v',s')$ iff
for each predicate $R\in\relsS$ of arity $r$ together 
with a vector $\bar y= y_1\ldots y_r$ and an element $u\in U^r$
\[
s',\rho\oplus\{y\mapsto u\}\models Ry \text{ iff } s\oplus\omega,\rho\oplus\{y\mapsto u\}\models(R\theta)
\] holds. Here, the operator ``$\oplus$'' is meant to update the assignment in the left argument
with the variable/value pairs listed in the second argument.
The set of all pairs $((v,s),(v',s'))$ constructed in this way, 
constitute the \emph{transition relation}
$\Transitions_{U,\rho}$ of $\mathcal S$ (relative to universe $U$ and valuation $\rho$).
A finite \emph{trace} from $(v,s)$ to $(v',s')$
is a finite sequence $(v_0,s_0),\ldots,(v_n,s_n)$ with 
$(v,s)=(v_0,s_0)$ and $(v_n,s_n)=(v,s')$
such that for each $i=0,\ldots,n-1$,
$((v_i,s_i),(v_{i+1},s_{i+1}))\in\Transitions_{U,\rho}$ holds.
We denote the set of all finite traces of a transition system $\mathcal S$ as $\traces(\mathcal S)$.

\begin{example}
	Let us instantiate the running example from \cref{fig:easychair} for 
	the universe $\{ x_1, x_2, p_1, p_2, r_1 \}$. A possible 
	state attainable at node $2$ could have
	$\Conf = \{ (x_1, p_1) \}$, $\Assign = \{ (x_1,p_2), (x_2, p_1), (x_2, p_2) \}$ and all other relations empty.
	For the valuation $A_2 = \{ (x_2, p_2, r_1) \}$ of the input predicate, there would be a transition to node $3$ 
	and a state where $\Review = \{ (x_2, p_2, r_1) \}$, with $\Conf$ and $\Assign$ unchanged and $\Read$ still empty.
\qed
\end{example}

\section{First-Order Safety Games}\label{s:games}

For a first-order transition system, a FO \emph{assertion} is a mapping $I$ that
assigns to each program point $v\in V$ a FO formula $I[v]$ with relation symbols
from $\relsS$ and free variables from $\Const$.
Assume that additionally we are given a FO formula $\Init$ (also with relation symbols 
from $\relsS$ and free variables from $\Const$) describing the potential initial states. 
The assertion $I$ \emph{holds} if
for all universes $U$, all valuations 
$\rho$, all states $s$ with $s,\rho\models\Init$ and 
all finite traces $\tau$ from $(\start,s)$ to $(v,s')$,
we have that $s',\rho\models I[v]$.
In that case, we say that $I$ is an \emph{invariant} of the
transition system (w.r.t.\ the initial condition $\Init$).

\begin{example}\label{ex:release}
	For our running example from \cref{fig:easychair}, the initial condition 
	specifies that all relations $R$ in $\relsS$ are
	empty, i.e.,
	$
	\Init = \bigwedge_{R\in\relsS}\forall \bar y. \neg R\bar y
	$
	where we assume that the length of the sequence of variables $\bar y$
	matches the rank of the corresponding predicate $R$.
	Since the example assertion should hold everywhere, we have for every $u$,
	% $I[u]$ is 
	% \[
	$
	I[u] = \forall x,p,r. \neg(\Conf(x,p) \land \Read(x,p,r))
	$
	\qed
	% \eqno\qed
	% \]

\end{example}
We now generalize FO transition systems to \emph{FO safety games}, i.e., 2-player games
where reachability player $\pA$ aims at violating the given assertion $I$ while 
safety player $\pB$ tries to establish $I$ as an invariant.
To do so, player $\pA$ is able to choose the universe, 
which outgoing edges are chosen at a given node and
all interpretations of relations under his control.
Accordingly, we partition the set of input predicates $\relsI$ into subsets $\relsA$ and $\relsB$.
While player $\pB$ controls the valuation of the predicates in $\relsB$, player $\pA$ has 
control over the valuations of predicates in $\relsA$ 
as well as over the universe and the valuation of the FO variables in $\Const$.
For notational convenience, 
we assume that each substitution $\theta$ in
the control-flow graph contains at most one input predicate, and that all these are
distinct 
\footnote{
	In general, edges may use multiple input predicates of the same type.
	% An edge which uses multiple input predicates the same type,
	This can, however, always be simulated by a sequence of edges that stores the contents 
	of the input relations in auxiliary predicates from  $\relsS$ one by one, before realizing
	the substitution of the initial edge by means of the auxiliary predicates.}.
Also we consider a partition of the set $E$ of edges into the subsets $E_\pA$ and $E_\pB$ 
where the substitutions only at edges from $E_\pB$ may use predicates from $\relsB$. 
Edges in $E_\pA$ or $E_\pB$ will also be called $\pA$-edges or $\pB$-edges, respectively. 
For a particular universe $U$ and valuation $\rho$, 
a trace $\tau$ starting in some $(\start,s)$ with $s,\rho\models\Init$
and ending in some pair $(v,s')$ is considered a \emph{play}.
For a given play, player $\pA$ \emph{wins} iff $s',\rho\not\models I[v]$
and player $\pB$ wins otherwise.

A \emph{strategy} $\sigma$ for player $\pB$ is a mapping which for 
each $\pB$-edge $e=(u,\theta,v)$ with input predicate $B_e$ (of some arity $r$), 
each universe $U$, valuation $\rho$, each state $s$ and each play
$\tau$ reaching $(u,s)$, returns a relation $B'\subseteq U^r$.
Thus, $\sigma$ provides for each universe, the history of the play and the next edge 
controlled by $\pB$, a possible choice.
$\sigma$ is \emph{positional} or \emph{memoryless}, 
if it depends on the universe $U$, the valuation 
$\rho$, the state $s$ and the $\pB$-edge $(u,\theta,v)$ only.

A play $\tau$ \emph{conforms} to a strategy $\sigma$ for safety player $\pB$, if all input relations
at $\pB$-edges occurring in $\tau$ are chosen according to $\sigma$.
The strategy $\sigma$ is \emph{winning} for $\pB$ if $\pB$ wins all plays that conform to $\sigma$.
An FO safety game can be won by $\pB$ iff there exists a winning strategy for $\pB$.
In this case, the game is \emph{safe}.

\begin{example}
	In the running conference management \cref{e:easychair}, player $\pA$, who wants to reach a state where 
	the invariant from \cref{ex:release} is violated 
	(a state where someone reads a review to his own paper
	before the official release) has control over the predicates $A_1$, $A_2$, $A_3$ and
	thus provides the values for the predicates $\Conf$ and 	 
	$\Review$ and also determines how often the loop body is iterated.
	Player $\pB$ only has control over predicate $B_1$ which is used to determine the
	value of predicate $\Assign$.
	This particular game is \emph{safe}, and player $\pB$ has several winning strategies, 
	e.g., $B_1(x,p) \coloneqq \neg \Conf(x,p)$.
\qed
\end{example}

\begin{lemma}\label{l:positional}
If there exists a winning strategy for player $\pB$, then there also exists a winning strategy that is positional.
\end{lemma}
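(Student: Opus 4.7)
The plan is to argue via the standard winning-region construction: a history-dependent winning strategy can be converted to a positional one by choosing, at each configuration, one move that preserves the invariant that the game stays in $\pB$'s winning region. Because the game's transition relation depends only on the current configuration (program point, state, universe, valuation), the ability of $\pB$ to win from a configuration is itself history-independent, which is what powers the construction.

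More concretely, I would fix a universe $U$ and valuation $\rho$ and define the \emph{winning region} $W \subseteq V \times \States_U$ to consist of all pairs $(v,s)$ such that $\pB$ has some (possibly history-dependent) winning strategy in the game restricted to plays starting from $(v,s)$. The first step is to show that from any $(v,s) \in W$: (i) $s,\rho \models I[v]$, since otherwise $\pA$ has already won the empty play; (ii) for every $\pA$-edge $(v,\theta,v')$ and every valuation $\omega$ of the $\pA$-input predicate that produces a successor $(v',s')$, we have $(v',s') \in W$; and (iii) for every $\pB$-edge $(v,\theta,v')$ with input predicate $B_e$ of arity $r$, there exists a relation $B' \subseteq U^r$ such that the resulting successor $(v',s')$ lies in $W$. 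Items (ii) and (iii) follow by noting that any winning strategy from $(v,s)$ induces, after prepending the corresponding single transition, a winning strategy from the respective successors. The proof of (iii) is the main point: given a history-dependent winning $\sigma$ from $(v,s)$, take $B'$ to be $\sigma$'s answer on the length-one play $(v,s)$ at the edge $(v,\theta,v')$; then the residual $\sigma'$ obtained by prepending this move to any subsequent play is winning from $(v',s')$.

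Next I would construct the positional strategy $\sigma^*$: for every $\pB$-edge $e=(v,\theta,v')$ and every configuration $(v,s) \in W$, pick (invoking the axiom of choice over the, in general, uncountably many configurations and universes) one witness $B'$ as in (iii) and let $\sigma^*(U,\rho,s,e) := B'$. On configurations outside $W$, $\sigma^*$ may be defined arbitrarily. By construction $\sigma^*$ depends only on $(U,\rho,s,e)$, hence is positional in the sense of the definition.

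Finally, a straightforward induction on the length of plays that conform to $\sigma^*$ shows that every such play starting from an initial configuration $(\start,s_0)$ with $s_0,\rho \models \Init$ and $(\start,s_0) \in W$ remains inside $W$: $\pA$-moves stay in $W$ by (ii), $\pB$-moves stay in $W$ by the choice made in $\sigma^*$. By (i), the invariant then holds at every endpoint, so $\pB$ wins every such play. It remains to observe that if $\pB$ has any winning strategy on the whole game, then for every initial $(\start,s_0)$ with $s_0,\rho \models \Init$ we have $(\start,s_0) \in W$, so $\sigma^*$ is indeed globally winning. The main obstacle is really only bookkeeping: the game has unbounded-size states and an arbitrary universe parameter, so one must be careful that the choice of $\sigma^*$ is made uniformly per configuration rather than per play, and that restricting a winning strategy to a subgame really yields a winning strategy there.
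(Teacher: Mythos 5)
Your proof is correct and follows essentially the same route as the paper: fix a universe $U$ and valuation $\rho$, observe that the game becomes an ordinary safety game on configurations, extract a positional strategy there, and glue the per-$(U,\rho)$ strategies together. The only difference is that you prove positional determinacy of the resulting safety game inline via the winning-region construction, whereas the paper delegates exactly this step to a citation (Lemma 2.12 of Mazala et al.), likewise invoking the axiom of choice.
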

\begin{proof}
Once a universe $U$ is fixed, together with a
valuation $\rho$ of the globally free variables, the FO safety game $G$
turns into a reachability game $G_{U,\rho}$ where the positions are given by all pairs
$(v,s)\in V\times\States_U$
(controlled by reachability player $\pA$) together with all pairs
$(s,e)\in\States_U\times E$ controlled by safety player $\pB$ if $e\in E_\pB$ and by $\pA$ otherwise.
For an edge $e=(v,\theta,v')$ in $G$, $G_{U,\rho}$ contains all edges $(v,s)\to(s,e)$,
together with all edges $(s,e)\to(v',s')$ where $s'$
is a successor state of $s$ w.r.t.\ $e$ and $\rho$.

Let $\Init_{U,\rho}$ denote the set of all positions $(\start,s)$ where $s,\rho\models\Init$,
and $I_{U,\rho}$ the set of all positions $(v,s)$ where $s,\rho\models I[v]$
together with all positions $(s,e)$ where $s,\rho\models I[v]$ for edges $e$ starting in $v$.
Then $G_{U,\rho}$ is safe iff safety player $\pB$ has a strategy $\sigma_{U,\rho}$
to force each play started in some position $\Init_{U,\rho}$
to stay within the set $I_{U,\rho}$.
Assuming 
the axiom of choice for set theory,
the set of positions can be well-ordered. 
Therefore, the strategy $\sigma_{U,\rho}$ for safety player $\pB$ can be chosen positionally,
see, e.g., lemma 2.12 of \cite{Mazala1998}.
Putting all positional strategies $\sigma_{U,\rho}$ for safety player $\pB$
together, we obtain a single positional strategy for $\pB$.
\qed
\end{proof}

\noindent
In case the game is safe, 
we are interested in strategies that can be included into
the FO transition system itself, i.e., are themselves first-order definable.
\Cref{l:positional} as is, gives no clue
whether or not there is a winning strategy which is positional and can be expressed in 
FO logic, let alone be effectively computed.

\begin{theorem}\label{lem:expressiveness}
There exist safe FO safety games where no winning strategy is expressible in FO logic.  
\end{theorem}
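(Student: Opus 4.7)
The plan is to construct an explicit FO safety game that is won by safety player $\pB$ but admits no FO-definable winning strategy. The idea is to force $\pB$, whenever $\pA$ cooperates, to produce an ``alternating'' subset of a linear order chosen by $\pA$; a classical result of finite model theory says that no single FO formula can uniformly pick out such a subset. Concretely, let $\relsS = \{L, P\}$ with $L$ binary and $P$ unary, $\relsA = \{A\}$ with $A$ binary, and $\relsB = \{B\}$ with $B$ unary. Take a two-node control-flow graph with a $\pA$-edge $(v_0, \theta_A, v_1)$ setting $L(y_1,y_2) \coloneqq A(y_1,y_2)$ (leaving $P$ unchanged) and a $\pB$-edge $(v_1, \theta_B, v_0)$ setting $P(y) \coloneqq B(y)$ (leaving $L$ unchanged). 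Let $\Init$ require all predicates to be empty, $I[v_1] = \ttrue$, and
\[
I[v_0] \;=\; \mathrm{LinOrd}(L) \to \forall x,y.\ \mathrm{Succ}_L(x,y) \to \bigl(P(x) \leftrightarrow \neg P(y)\bigr),
\]
where $\mathrm{LinOrd}$ and $\mathrm{Succ}_L$ are the standard FO formalizations of ``$L$ is a total linear order'' and ``$y$ is the immediate $L$-successor of $x$''. Both are FO-definable, so $I[v_0]$ is a legitimate FO assertion.

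Next I argue that $\pB$ wins semantically. Whenever $\pA$ produces an $L$ that is not a linear order, the antecedent of $I[v_0]$ fails and $\pB$ may reply with anything. Whenever $\pA$ provides a linear order $L$ on a finite universe, $\pB$ responds by choosing $P$ to be the set of elements at odd $L$-positions, which is well-defined on every finite universe and clearly satisfies the consequent of $I[v_0]$. This is a positional strategy in the sense of \cref{l:positional}.

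The main obstacle is to prove that no FO formula defines such a strategy. Suppose, for contradiction, that $\psi(y; L, P)$ of quantifier rank $k$ is a winning $\pB$-strategy, so that $\pB$ plays $B \coloneqq \{y : \psi(y; L, P)\}$. Let $\pA$'s first move yield the usual linear order $L_n$ on $\{1, \dots, n\}$ for $n > 2^{k+2}$; at this point $P$ is still empty. A standard Ehrenfeucht-Fra\"iss\'e argument on the structure $(\{1, \dots, n\}, L_n, \emptyset)$ shows that any two $L_n$-consecutive elements $a, a+1$ at distance at least $2^k$ from each endpoint are $\equiv_k$-equivalent, so $\psi(a; L_n, \emptyset) \leftrightarrow \psi(a+1; L_n, \emptyset)$. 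The resulting $P$ therefore fails to alternate across $(a, a+1)$, violating $I[v_0]$ upon the next return to $v_0$. The delicate step is designing Duplicator's strategy so that the argument survives the simultaneous presence of the order $L$, equality, and the unary $P$ (empty in this round, but possibly non-empty when the same argument is repeated on later rounds); this is, however, standard, cf.\ Libkin, \emph{Elements of Finite Model Theory}.
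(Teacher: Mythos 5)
Your construction is correct in spirit but takes a genuinely different route from the paper. The paper's proof uses a three-step acyclic game in which $\pA$ first chooses an edge relation $E$, $\pB$ chooses $R_1$, and $\pA$ then chooses $R_2$; the assertion is rigged (via the clause $\textsf{closure}(R_2,E)\implies \textsf{closure}(R_1,E)\land R_1\subseteq R_2$) so that the \emph{unique} winning reply of $\pB$ is the transitive closure of $E$, and non-FO-definability follows from the classical fact that transitive closure is not FO-expressible. You instead force $\pB$ to produce an alternating subset of a linear order supplied by $\pA$, and derive non-definability from the Ehrenfeucht--Fra\"iss\'e fact that consecutive middle elements of a long linear order are $\equiv_k$-equivalent (i.e., FO cannot define parity of position). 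Both inexpressibility results are standard and both game gadgets are sound; the paper's version has the aesthetic advantage that the winning reply is literally unique, whereas yours admits two winning replies, but your EF argument correctly rules out \emph{any} FO-definable one, and it does so already after the first round, so the loop in your control-flow graph (and your worry about later rounds with non-empty $P$) is unnecessary.

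There is one concrete gap: your argument that $\pB$ wins the game \emph{semantically} only covers finite universes (``the set of elements at odd $L$-positions, which is well-defined on every finite universe''). In this paper's semantics, reachability player $\pA$ chooses the universe, which may be infinite, and for an infinite linear order with no least element ``odd position'' is undefined. The claim survives, but you must say why: partition $U$ into equivalence classes under ``finite $L$-distance''; each class is order-isomorphic to a suborder of $\mathbb{Z}$ and can be $2$-coloured alternately (choosing a base point per class), and no immediate-successor pair crosses classes, so an alternating $P$ always exists. Without this (or an explicit restriction of the antecedent of $I[v_0]$ that you would then have to show is FO-expressible), the assertion ``$\pB$ wins'' is unproven for infinite universes and the theorem statement is not established. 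With that patch, and noting that $\Const=\emptyset$ so the strategy formula has only $L,P,y$ to work with, your proof goes through.
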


\begin{proof}
Consider a game with $\relsS = \{E,R_1,R_2\}$, $\relsA = \{A_1,A_2\}$ and 
$\relsB = \{B_1\}$, performing three steps in sequence:
\[
{\small
\begin{array}{lll@{\quad}lll@{\quad}lll}
    E(x,y) &\coloneqq& A_1(x,y);	&
    R_1(x,y) &\coloneqq& B_1(x,y);	&
    R_2(x,y) &\coloneqq& A_2(x,y)
\end{array}
}\]
In this example, reachability player $\pA$ chooses an arbitrary relation $E$,
then safety player $\pB$ chooses $R_1$ and player $\pA$ chooses $R_2$. 
The assertion $I$ ensures that at the endpoint $R_1$ is at least 
the transitive closure of $E$ and $R_1$ is smaller or equal to $R_2$
(provided $\pA$ chose $R_2$ to include the closure of $E$)
i.e.,
\[
{\small
		\textsf{closure}(R_2, E) \implies (\textsf{closure}(R_1, E) \land
				\forall x,y.  (R_1(x,y) \implies R_2(x,y)))}
\]
where ${\small\textsf{closure}(R, E)}$ is given by
$
	{\small\forall x,y. R(x,y) \leftarrow (E(x,y)\ \lor \exists z. R(x,z) \land E(z,y))}
$.
The only winning strategy for safety player $\pB$ (choosing $R_1$) is to select the smallest relation 
$R_1$ satisfying $\mathit{closure}(R_1, E)$, which is the transitive closure of $E$.
In this case, no matter what reachability player $\pA$ chooses for $R_2$, 
safety player $\pB$ wins, but the winning strategy for $\pB$ is not expressible in FO logic.
\qed
\end{proof}

\noindent
Despite this negative result, effective
means are sought for of computing FO definable strategies, whenever they exist.
In order to do so, we rely on a \emph{weakest precondition} operator $\sem{e}^\top$
corresponding to edge $e= (u,\theta,v)$ of the control-flow graph of a FO safety game $\T$ 
by
\[
{\small\sem{e}^\top \Psi =\left\{
        \begin{array}{lll}
        \forall A_e.(\Psi\theta) &\text{if}\;e\;\pA\text{-edge}  \\
        \exists B_e.(\Psi\theta) &\text{if}\;e\;\pB\text{-edge}  \\
        \end{array}\right.}
\]
The weakest pre-condition operator captures the minimal requirement at the start point of an edge 
to meet the post-condition $\Psi$ at the end point.
That operator allows to define the following iteration:
Let $\T$ denote a game and $I$ an assertion. 
For $h\geq 0$, let the assignment $\Psi^{(h)}$
of program points $v$ to formulas be
\begin{eqnarray}
{\small
\begin{array}{lll}
\Psi^{(0)}[v]	&=&	I[v]	\\
\Psi^{(h)}[v]	&=&	\Psi^{(h-1)}[v]\;	
% \\
% &&	\begin{array}[t]{@{}l}
	\wedge\;\bigwedge_{e\in \mathit{out(v)}}\sem{e}^\top \Psi^{(h-1)}
	\quad	\text{for}\; h>0
	% \end{array}
\end{array} \label{def:fixpoint}
}
\end{eqnarray}
where $\mathit{out}(v)$ are the outgoing edges of node $v$.
Then the following holds:

\begin{restatable}{theorem}{tapprox}
 \label{tapprox}
% \begin{lemma}\label{l:approx}
A FO safety game $\T$ is safe iff
% There is a winning strategy for the safety player $\pB$ iff
% {\small
% \begin{eqnarray}
$
\Init\implies\Psi^{(h)}[\start]	\label{eq:init}
$
% \end{eqnarray}
% }
holds for all $h\geq 0$. 
\end{restatable}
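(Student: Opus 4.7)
The plan is to read $\Psi^{(h)}[v]$ semantically as the set of states at $v$ from which safety player $\pB$ can maintain the invariant $I$ for at least the next $h$ transitions, regardless of $\pA$'s choices. Under this reading, both directions of the biconditional follow.

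For the forward direction, I would assume the game is safe and, by \cref{l:positional}, fix a positional winning strategy $\sigma$ for $\pB$. Then I would prove by induction on $h$ that every state $s$ reachable at $v$ under $\sigma$ from some initial state satisfies $\Psi^{(h)}[v]$. The base case is immediate, since $\Psi^{(0)} = I$ and $\sigma$ is winning. For the inductive step, each conjunct $\sem{e}^\top \Psi^{(h-1)}$ at an outgoing edge $e = (v,\theta,v')$ is verified as follows: at $\pA$-edges any choice of $A_e$ yields a successor still reachable under $\sigma$, so the induction hypothesis applies; at $\pB$-edges the choice prescribed by $\sigma$ witnesses the existential over $B_e$. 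Specialising to $v = \start$ yields $\Init \implies \Psi^{(h)}[\start]$ for every $h$.

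For the backward direction, assuming $\Init \implies \Psi^{(h)}[\start]$ for all $h$, I would fix an arbitrary universe $U$ and valuation $\rho$ and consider the semantic set $Z[v] := \{ s : s,\rho \models \Psi^{(h)}[v]\text{ for every }h\}$, which by hypothesis contains every initial state at $\start$. The aim is to construct a positional strategy $\sigma$ for $\pB$ that, at each $\pB$-edge $e = (u,\theta,v)$ from a state $s \in Z[u]$, selects some $B_e$ whose successor lies in $Z[v]$. Closure of $Z$ under $\pA$-successors is automatic: from $s \in Z[u]$ and any $A_e$, the conjunct $\forall A_e.\,\Psi^{(h-1)}[v']\theta$ present in each $\Psi^{(h)}[u]$ forces the successor $s'$ into $\Psi^{(h-1)}[v']$ for every $h$, whence $s' \in Z[v']$. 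An induction on play length then shows every state reachable under $\sigma$ lies in $Z$ and hence satisfies $I$, so $\sigma$ is winning.

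The principal obstacle is the existence of a uniform witness $B_e$ in the $\pB$-edge case: for each $h$ one has only a possibly $h$-dependent witness $B_e^{(h)}$ of $\exists B_e.\,\Psi^{(h-1)}[v']\theta$, whereas the construction requires a single $B_e$ that preserves $Z$-membership. Resolving this quantifier swap reuses the determinacy and well-ordering machinery behind \cref{l:positional}: within the induced reachability game $G_{U,\rho}$, positional determinacy together with a well-ordering of the candidate relations $B_e$ singles out a canonical choice whose successor remains in $\pB$'s safety region, which one argues to coincide with $Z$.
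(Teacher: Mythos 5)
Your forward direction is sound, and it reaches the same finite-horizon content as the paper by a different route: the paper proves a per-path weakest-precondition lemma (\cref{l:path} in \cref{a:games}) by induction on the length of a single path chosen by $\pA$, and then identifies $\Psi^{(h)}[v]$ with the conjunction of $\sem{\pi}^\top I$ over all paths $\pi$ of length at most $h$; you instead induct on $h$ directly against a positional winning strategy supplied by \cref{l:positional}. Both arguments yield the statement ``$\pB$ wins all plays of length at most $h$ iff $\Init\implies\Psi^{(h)}[\start]$'', and your version of that half is acceptable.

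The gap is in your backward direction, and it sits exactly where you flag it --- but your final paragraph restates the obstacle rather than resolving it. Write $W_{U,\rho}$ for the set of positions of $G_{U,\rho}$ from which $\pB$ wins the safety game. The inclusion of $W_{U,\rho}$ in your $Z$ is immediate, but what the backward direction needs is the converse, and neither positional determinacy nor the well-ordering delivers it: determinacy says that \emph{if} $\pB$ wins from a position then she wins positionally, and the well-ordering selects a canonical successor \emph{inside} $W_{U,\rho}$; neither says anything about positions in $Z$ that are not already known to lie in $W_{U,\rho}$. Concretely, $W_{U,\rho}$ is a greatest fixpoint of the controllable-predecessor operator obtained by transfinite iteration, while $Z=\bigcap_{h<\omega}\Psi^{(h)}$ is only its stage-$\omega$ approximant from above. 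A $\pB$-position has one successor for every relation $B'\subseteq U^r$, so the arena is infinitely branching whenever $U$ is infinite, and for infinitely branching safety games the iteration need not close at stage $\omega$: having, for each $h$, some witness $B_e^{(h)}$ whose successor satisfies $\Psi^{(h-1)}[v']$ does not produce a single $B_e$ whose successor satisfies all of them. The clause ``which one argues to coincide with $Z$'' is therefore precisely the missing argument, not an appeal to known machinery. To close it you must either exhibit the uniform witness (e.g.\ via a compactness or finiteness argument exploiting the FO structure of the $\Psi^{(h)}$) or reduce to the finite-horizon equivalence and then justify the passage to unbounded horizons explicitly; be aware that the paper's own proof is also terse at exactly this passage, so you cannot simply defer to it.
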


\noindent
The proof can be found in \cref{a:games}.
The characterization of safety due to \cref{tapprox} is precise --- but may require
to construct infinitely many $\Psi^{(h)}$.
Whenever, though, the safety game $\T$ is \emph{finite}, i.e., 
the underlying control-flow graph of $G$ is acyclic,
then $G$ is safe iff $\Init\implies\Psi^{(h)}[\start]$ where
$h$ equals the length of the longest
path in the control-flow graph of $G$ starting in $\start$.
As a result, we get that \emph{finite} first order safety games are as powerful as second order logic.
\begin{theorem}\label{t:finite}
Deciding a finite FO safety game with predicates from $\relsS$ is inter-reducible to
satisfiability of SO formulas with predicates from $\relsS$.
\end{theorem}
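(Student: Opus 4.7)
The plan is to establish both directions of the inter-reducibility, using the finite fixpoint characterization noted just before the theorem. For the forward direction, I would observe that, since the control-flow graph is acyclic, the iteration stabilizes at $\Psi^{(h)}[\start]$ for $h$ equal to the length of the longest path from $\start$. Unfolding the definition of $\sem{e}^\top$, $\Psi^{(h)}[\start]$ is a concrete SO formula over $\relsS$: it interleaves universal SO quantifiers (over $\pA$-controlled input predicates) and existential SO quantifiers (over $\pB$-controlled input predicates) following the edge pattern along all paths from $\start$, with the FO assertions $I[v]$ at the leaves. By \cref{tapprox}, the game is safe iff $\Init \implies \Psi^{(h)}[\start]$ is valid, i.e., iff $\Init \wedge \neg \Psi^{(h)}[\start]$ is unsatisfiable; since SO validity and unsatisfiability are inter-reducible via negation, this yields a reduction to SO satisfiability.

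For the converse, given an SO formula $\phi$ over $\relsS$, I would put it in prenex form $Q_1 P_1 \ldots Q_n P_n\, \psi$ with $Q_i \in \{\exists,\forall\}$ and $\psi$ first-order. I then build a finite game with a linear control-flow graph $\start = v_0, v_1, \ldots, v_n$. For each $i$, introduce a fresh input predicate $X_i$ of the same arity as $P_i$, place it in $\relsA$ if $Q_i = \forall$ and in $\relsB$ if $Q_i = \exists$, and let the edge $e_i = (v_{i-1}, \theta_i, v_i)$ carry the substitution $\theta_i = \{P_i(\bar y) \coloneqq X_i(\bar y)\}$ (identity on the remaining predicates). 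Set $\Init = \true$, $I[v_n] = \psi$, and $I[v_i] = \true$ for $i<n$. Iterating $\sem{\cdot}^\top$ from $v_n$ back to $\start$ composes the quantifications and substitutions so that, up to renaming of bound SO variables, $\Psi^{(n)}[\start] \equiv Q_1 P_1 \ldots Q_n P_n\, \psi \equiv \phi$. Hence the game is safe iff $\phi$ is valid, and (un)satisfiability of $\phi$ can be decided by running the same construction on $\neg \phi$.

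The main obstacle I expect is precisely the compositional bookkeeping showing that the sequence of substitutions and quantifiers produced by $\sem{e_1}^\top \circ \cdots \circ \sem{e_n}^\top$ really reproduces the intended prenex prefix of $\phi$: the edges must be traversed in the order that corresponds to reading the prefix from outside in, and at edge $e_i$ the substitution $P_i \coloneqq X_i$ must turn the subsequent quantifier $Q_i X_i$ into an effective binder of $P_i$ in all later occurrences. Handling this cleanly also requires keeping the arities and free-variable bookkeeping consistent with the arity-respecting substitution format defined in \cref{s:fo-trans}, as well as respecting the restriction that each substitution uses at most one input predicate (which is already satisfied by construction). Modulo these bookkeeping details the correspondence is straightforward.
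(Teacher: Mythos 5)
Your proposal is correct and follows essentially the same route as the paper: the forward direction via the stabilized $\Psi^{(h)}[\start]$ and \cref{tapprox}, and the converse via a linear control-flow graph with one edge per SO quantifier, $\pA$-edges for universal and $\pB$-edges for existential quantifiers, and the FO matrix as the assertion at the final node. The only (cosmetic) difference is the naming scheme — you substitute $P_i \coloneqq X_i$ for a fresh input predicate $X_i$, whereas the paper introduces primed state copies $R'_i$ and substitutes $R'_i \coloneqq C_i$ — which yields the same formula up to renaming of bound SO variables.
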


\begin{proof}
We already showed that solving a finite FO safety game can be achieved by
solving the SO formula $\psi^{(h)}$ for some sufficiently large $h$.
For the reverse implication,
consider an arbitrary closed formula $\phi$ in SO Logic.
W.l.o.g., assume that $\phi$ has no function symbols and is in prenex normal form
where no SO Quantifier falls into the scope of a FO quantifier 
\cite{leivant1994higher}. 
Thus, $\phi$ is of the form $Q_1 C_1 \ldots Q_n C_n.\ \psi$ where all $Q_n$ are SO quantifiers and $\psi$ is a relational formula in FO logic.

We then construct a FO safety game $\T$ as follows. 
The set $\relsS$ of predicates consists of all predicates that occur freely in $\phi$ 
together with copies $R'_{i}$ of all quantified relations $C_i$.
The control-flow graph consists of $n+1$ nodes $v_0, \ldots, v_{n}$,
together with edges $(v_{i-1},\theta_i,v_{i})$ for $i=1,\ldots,n$.
Thus, the maximal length of any path is exactly $n$.
An edge $e = (v_{i-1},\theta_i,v_i)$ 
is used to simulate the quantifier $Q_{i}C_i$. 
The substitution $\theta_i$ is the identity on all predicates from $\relsS$ 
except $R'_i$ which is mapped to $C_i$.
If $Q_{i}$ is a universal quantifier, 
$C_{i}$ is included into $\relsA$, and $e$ is an $\pA$-edge.
Similarly, if $Q_{i}$ is existential, $C_{i}$ is included into $\relsB$ and 
$e$ is a $\pB$-edge.
Assume that $\psi'$ is obtained from $\psi$ by replacing every relation $R_i$ with $R'_i$.
As FO assertion $I$, we then use 
$I[v_i] = \ttrue$ for $i=0,\ldots,n-1$ and $I[v_n]=\psi'$.
Then $\Psi^{(n)}[v_n] = \phi$. 
Accordingly for $\Init = \ttrue$, player $\pB$ can win the game iff $\phi$ is universally true.
\qed
\end{proof}

\noindent
Theorem \ref{t:finite} implies that a FO definable winning strategy for safety player $\pB$ 
(if it exists)
can be constructed whenever the SO quantifiers introduced by the choices of the respective
players can be eliminated. 
Theorem \ref{t:finite}, though, 
gives no clue \emph{how} to decide whether or not safety player $\pB$ has
a winning strategy and if so, whether it can be effectively represented.

\section{Monadic FO Safety Games}\label{s:monadic}

\noindent
Assuming that the universe is finite and bounded in size by some $h \geq 0$, 
then FO games reduce to finite games (of tremendous size, though). 
This means that, at least in principle, both checking of invariants as well as the construction of
a winning strategy (in case that the game is safe) is effectively possible.
A more complicated scenario arises 
when the universe consists of several disjoints \emph{sorts}
of which some are bounded in size and some are unbounded.

We will now consider the special case where each predicate 
has at most one argument which takes elements of an 
unbounded sort.
In the conference management example, we could, e.g., assume that 
PC members, papers and reports constitute disjoint sorts of bounded cardinalities, 
while the number of (versions of) reviews is unbounded.
By encoding the tuples of elements of finite sorts into predicate names,
we obtain FO games where all predicates are either nullary or \emph{monadic}.
Monadic FO logic is remarkable since satisfiability of formulas in that logic is decidable,
and monadic SO quantifiers can be effectively eliminated \cite{behmann1922beitrage,wernhard2015second}.
Due to \cref{t:finite}, we therefore conclude for \emph{finite} monadic safety games
that safety is decidable. Moreover, in case the game is safe, 
a positional winning strategy for safety player $\pB$ can be effectively computed.

Monadic safety games which are not finite, 
turn out to be 
very close in expressive power to \emph{multi-counter machines},
for which reachability is undecidable \cite{holzer2011complexity,rosenberg1966multi}. 
The first statement of the following theorem has been communicated to us by
Igor Walukiewicz:

\begin{restatable}{theorem}{thmcounterex}
\label{t:counter}
For monadic safety games, safety
is undecidable when one of the following conditions is met:
\begin{enumerate}
\item   there are both $\pA$-edges as well as $\pB$-edges;
\item	there are $\pA$-edges and substitutions with equalities or disequalities;
\item	there are $\pB$-edges and substitutions with equalities or disequalities.
\end{enumerate}
\end{restatable}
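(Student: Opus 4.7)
The plan is to reduce the halting problem for two-counter Minsky machines, which is undecidable, to safety of monadic FO games, once under each of the three hypotheses. In all three reductions, the two counter values are encoded as the cardinalities of two monadic state predicates $C_1, C_2$ over the unbounded sort, and the finite control of the machine is encoded as the control-flow graph. A zero-test on counter $C_i$ is expressible directly as the guard $\forall x.\,\neg C_i(x)$, so the entire difficulty is reduced to forcing the players to simulate each increment and each decrement by exactly one element, in such a way that reachability player $\pA$ can steer the play to an invariant violation iff the Minsky machine reaches its halt location.

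For case~1 (both $\pA$- and $\pB$-edges available, no equality needed) I would split each counter update into a two-step gadget: one player proposes the element to add or remove via a monadic input, and the other ratifies it on the following edge, the substitutions combining the two inputs so that singleton proposals are respected while non-singleton proposals can be absorbed by the ratifier without helping the cheater. This is the classical ``two cooperating agents'' trick for simulating counters without equality. For cases~2 and~3 (single player, but with $=$ and $\neq$ available in substitutions) I would instead express ``this input relation is a singleton fresh to $C_i$'' (dually ``a singleton inside $C_i$'') as a monadic formula using equality, e.g.\ $\exists x.\,A(x) \land \neg C_i(x) \land \forall z.\,A(z) \to z = x$, and place it as a side-condition guarding the actual update: if the condition fails, the substitution leaves $C_i$ untouched and sets a latching monadic flag that routes subsequent play into a sink which the cheating player cannot exploit. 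In case~2 the sink is chosen so that a cheating $\pA$ can no longer reach the bad state; in case~3 the sink is chosen so that a cheating $\pB$ inevitably violates the invariant. Hence in each case the only profitable play is faithful simulation of the machine, and the game is unsafe iff the Minsky machine halts.

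The main obstacle is to prove faithfulness of these gadgets, i.e.\ that neither side profits from deviating from the intended single-step update. For case~1 this requires designing the two-player gadget so that neither $\pA$ nor $\pB$ can corrupt it; the delicate point is that the ratifier's correction must itself be expressible without equality, purely as a monadic boolean combination of the two input predicates and $C_i$. For cases~2 and~3 the delicate point is that the singleton/freshness test must be encoded using monadic predicates only (no two-variable atoms), and that the latching flag introduced by a failed check survives all later updates without being tampered with. Once these lemmas are in place, the remaining work --- translating the Minsky instructions into CFG fragments and writing the global invariant as a monadic first-order formula --- is routine, and the three items of the theorem follow in turn.
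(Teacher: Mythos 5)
Your overall architecture is the same as the paper's: reduce from counter machines, encode each counter as the cardinality of a monadic predicate, test for zero with $\forall x.\neg C_i(x)$, use a two-phase propose/check gadget for case~1, and use equality/disequality literals to express singleton-freshness plus a latching flag in cases~2 and~3. Cases~2 and~3 of your sketch are essentially the paper's proof (the paper's latching flag is an error predicate $\err$ conjoined into the invariant for case~3, and for case~2 the cheating $\pA$ simply has all state flags zeroed so it can never reach the bad configuration), modulo the small fix that your singleton test should constrain $A\setminus C_i$ rather than $A$ itself.

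For case~1, however, you have named the crucial step as a ``delicate point'' without supplying it, and the mechanism you gesture at would not work. Without equality, the ratifier cannot ``absorb'' a non-singleton proposal into a singleton: no monadic boolean combination of $A$, $B$ and $C_i$ selects one element out of a set of two indistinguishable fresh elements, so the counter value cannot be repaired after a cheat. The idea that actually carries the paper's proof is \emph{punishment by separation}: the first edge records the old value of $C_i$ in an auxiliary predicate and sets the control flag only if at least one fresh element was added; on the second edge, safety player $\pB$ chooses a monadic input $B$ and the flag survives only if $\forall x_1 x_2$, whenever $x_1,x_2$ are both newly added, $Bx_1\vee\neg Bx_2$ holds. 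If two or more distinct elements were added, $\pB$ picks $B$ to contain one but not the other, falsifies this clause, kills every flag, and wins the play; if exactly one element was added, no choice of $B$ can distinguish anything and the flag survives. This is precisely where a second player substitutes for equality, and it is the content of the theorem's item~(2): replacing $Bx_1\vee\neg Bx_2$ by $x_1=x_2$ eliminates the need for $\pB$. Until you write down this (or an equivalent) gadget and verify that a single-element increment survives it while a multi-element increment does not, case~1 is not proved.
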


\noindent
The proof of statement (1) is by using monadic predicates to simulate the counters of
a multi-counter machine. Statements (2) and (3) follow from the observation 
that one player in this simulation can be replaced by substitutions using equality or disequality
literals (see \cref{a:counter} for details of the simulation). 

There are, though, interesting cases that do not fall into the listed classes and can
be effectively decided.
Let us first consider monadic safety games where no predicate is under the control of either player, 
i.e., $\relsA = \relsB = \emptyset$, but both equalities and disequalities are allowed.
Then, safety of the game collapses to the question if player $\pA$ can pick universe and 
control-flow such that the assertion is violated at some point.
For this case, we show that the conjunction of preconditions from \cref{def:fixpoint} necessarily stabilizes.

\begin{restatable}{theorem}{thmmonadicplain}
\label{t:monadic_plain}
Assume that $\T$ is a monadic safety game,
possibly containing equalities and/or disequalities
with $\relsA = \relsB = \emptyset$.
Then for some $h\geq 0$, $\Psi^{(h)}=\Psi^{(h+1)}$.
Therefore, safety of $\T$ is decidable.
\end{restatable}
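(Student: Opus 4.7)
Since $\relsA = \relsB = \emptyset$ the game has no input predicates, so the weakest-precondition operator simplifies to pure substitution: $\sem{e}^\top\Psi = \Psi\theta_e$ with no second-order quantifiers. Consequently every $\Psi^{(h)}[v]$ is a monadic FO formula with equality, obtained as a conjunction of instances $I[w]\theta_\pi$ for paths $\pi$ of length at most $h$ from $v$, and the sequence $\{\Psi^{(h)}[v]\}_h$ is logically decreasing. The goal is to prove stabilization modulo equivalence, after which decidability of safety follows by iteratively testing $\Psi^{(h)}\equiv\Psi^{(h+1)}$ via the decision procedure for monadic FO with equality (Behmann) and, upon stabilization, checking the single implication $\Init\implies\Psi^{(h)}[\start]$.

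The plan is to exhibit each $\Psi^{(h)}[v]$ as a subset of a finite abstract domain. For $k=|\relsS|$, a monadic structure is characterized for monadic FO purposes by its \emph{type profile} $(\min(|U_c|, N))_c$ over the $2^k$ atomic 1-types $c$, for a suitably large threshold $N$. Choose $N$ to bound the counting thresholds appearing, after Behmann-style quantifier elimination, in any $\theta_e$ and in every $I[v]$; the abstract domain is then the finite set $\{0,\ldots,N\}^{2^k}$.

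The substitution dynamics can be shown to close on this abstract domain by two observations. First, by \emph{uniformity}---any two elements of the same atomic 1-type are related by an automorphism of the structure (the transposition that swaps them), which preserves monadic FO with equality---each formula $R\theta_e(y)$ evaluates identically at any two same-type elements, so $\theta_e$ re-types all elements of a given atomic type $c$ to a single new type $f_{\theta_e}(c)$ that depends only on $c$ and on the current $N$-saturated profile. Second, because the universe size is conserved under re-typing, the new count is $n'_{c'} = \sum_{c:f_{\theta_e}(c)=c'} n_c$, whose saturation at $N$ depends only on the input saturation at $N$: if any summand is already $\geq N$ the sum is, and otherwise each summand is known exactly. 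Hence the profile dynamics of each $\theta_e$ is a well-defined endofunction on $\{0,\ldots,N\}^{2^k}$, and so too is any composition $\theta_\pi$.

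It follows that the truth of each conjunct $I[w]\theta_\pi$ depends only on the initial $N$-saturated profile, and therefore so does the truth of $\Psi^{(h)}[v]$. Thus $\Psi^{(h)}[v]$ corresponds to a subset of the finite domain, and the decreasing sequence $\{\Psi^{(h)}[v]\}_h$ projects to a decreasing chain of such subsets, which must stabilize. Two monadic-FO formulas whose truth is determined by the $N$-saturated profile are logically equivalent iff they agree on every such profile, so stabilization of the subsets entails stabilization of $\Psi^{(h)}[v]$ modulo equivalence. The principal conceptual obstacle is exactly the closure property of the second observation---that a single uniform threshold $N$, determined from the fixed finite set of substitutions and invariants, suffices despite unbounded syntactic quantifier-rank growth under composition---and this is what uniformity plus population conservation is designed to deliver.
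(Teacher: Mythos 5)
Your argument is sound and reaches the same finiteness core as the paper, but by a genuinely different, semantic route. The paper works syntactically: it converts everything into Behmann's counting quantifier normal form and proves (Lemma~\ref{l:monadicsubstitutions}) that applying a substitution to a formula in liberal CQNF does not increase the counting rank beyond the maximum of the ranks of the formula and the substitution; termination then follows because there are only finitely many monadic formulas of bounded counting rank up to logical equivalence. You instead fix a threshold $N$ up front and show that the \emph{semantics} of every $\Psi^{(h)}[v]$ factors through the finite domain of $N$-saturated type profiles, using the automorphism argument for re-typing and conservation of the universe for the count pushforward. The two are equivalent in substance --- ``truth determined by the $N$-saturated profile'' is essentially ``counting rank at most $N$'' --- but your version replaces the paper's syntactic bookkeeping about where dis-equalities and quantifier scopes end up after substitution by the substitution lemma ($s,\rho\models\phi\theta$ iff $s\theta,\rho\models\phi$) plus composition of endofunctions on a finite set. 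That makes the central difficulty (why the effective rank does not blow up under composition of substitutions) disappear rather than having to be answered clause by clause. The paper's normal form pays off elsewhere, since it is reused for Theorems~\ref{t:mono_A} and~\ref{t:mono_B}, but for this theorem alone your route is self-contained and arguably cleaner.

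One concrete omission: you ignore the global constants $\Const$. The theorem permits $I$, $\Init$ and the substitutions to have free variables from $\Const$, interpreted by a valuation $\rho$; the transposition swapping two elements of the same monadic $1$-type is \emph{not} an automorphism of the structure-with-valuation when one of them is $\rho(c)$ for some $c\in\Const$, and the truth of a formula such as $P(c)$ is not determined by the type counts at all. The fix is routine --- refine the atomic $1$-types by the set of constants an element is equal to, and enrich the abstract domain with the (finitely many) monadic types of, and equalities among, the constants, so that it remains finite --- but as written your abstraction is too coarse whenever $\Const\neq\emptyset$. (The paper's CQNF handles this explicitly through the basic formulas $P(x)$ and $x=x'$ for global variables and the dis-equalities $x\neq a$ for free $a$.) Nullary predicates, which the paper's monadic games also allow, require the same trivial extension of the profile.
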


\noindent
Theorem \ref{t:monadic_plain} relies on the observation
that when applying substitutions alone, i.e., without additional SO quantification,
the number of equalities and disequalities involving FO variables,
remains bounded. 
Our proof relies on variants of the \emph{counting quantifier normal form} 
for monadic FO formulas \cite{behmann1922beitrage}
(see \cref{a:monadic_plain}).

\label{ss:monadic_A}
Interestingly, decidability is also retained for assertions $I$ that only contain disequalities, 
if no equalities between bound variables are
introduced during the weakest precondition computation.
This can only be guaranteed if safety player $\pB$ does not have control over any predicates.
\footnote{The simulation in \cref{a:counter} shows how predicates under the control of player $\pB$ can be used to introduce equalities through SO existential quantifier elimination.}

\begin{restatable}{theorem}{thmmonoa}
\label{t:mono_A}\label{t:mono-mc}
Assume that $\T$ is a monadic safety game without $\pB$-edges (i.e. $\relsB = \emptyset$) and 
\begin{enumerate}
	\item there are no disequalities between bound variables in $I$,
	\item in all literals $x = y$ or $x \neq y$ in $\Init$ and substitutions $\theta$, $x\in\Const$ or $y\in\Const$.
\end{enumerate}
Then it is decidable whether $\T$ is safe.
\end{restatable}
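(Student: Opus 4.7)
The plan is to extend the stabilization argument of \cref{t:monadic_plain} to cover the universal second-order quantifier $\forall A_e$ introduced by $\pA$-edges. Because $\relsB=\emptyset$, no existential SO quantifier ever appears in the iteration (\ref{def:fixpoint}), and since all predicates are monadic, every $\forall A_e$ ranges over a monadic relation and is therefore amenable to Behmann-style SO elimination. Once I have shown that the iteration stabilizes within a finite (up to equivalence) class of monadic formulas, \cref{tapprox} reduces safety to a single decidable monadic FO validity check.

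First I would fix a canonical form for monadic formulas in which each bound FO variable $y$ is described only by its \emph{profile}: the Boolean combination of monadic atoms $R(y)$ for $R\in\relsS\cup\relsA$ together with literals $y=c$ for $c\in\Const$, with disequalities between bound variables explicitly forbidden. Under hypothesis (1), the initial assignment $\Psi^{(0)}=I$ lies in this fragment; under hypothesis (2), the substitution $\theta$ of any edge applied to a canonical formula again yields a canonical formula, because all new (dis)equalities introduced by $\theta$ involve a constant and can be absorbed into the profile of the bound variable on the non-constant side. Since profiles are Boolean combinations of finitely many atoms over a fixed set of constants and predicates, the set of canonical formulas is finite up to logical equivalence.

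Next I would argue that $\forall A_e$ preserves canonical form. The Ackermann/Behmann elimination reduces $\forall A_e.\psi$ for monadic $A_e$ to a conjunction indexed by the possible extensions of $A_e$ restricted to the profiles of variables occurring in atoms $A_e(y)$: since $A_e$ is monadic, its value at a bound variable is fully determined by the variable's profile, and the case analysis factors through the finite profile quotient without introducing new bound-vs-bound literals. With the invariant preserved, the iteration is monotone within a finite lattice and therefore stabilizes at some $h^*$; by \cref{tapprox}, safety then reduces to validity of the single monadic FO formula $\Init\implies\Psi^{(h^*)}[\start]$, which is decidable. The main obstacle is precisely this $\forall A_e$ step: one must show rigorously that two bound variables sharing the same profile are truly indistinguishable inside the scope of $\forall A_e$, so that the SO case analysis can be phrased on profiles rather than on individual variables. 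I expect this to go through exactly because, under the preserved invariant, profiles are the only observable information about bound variables, and the Behmann elimination can be performed uniformly over profile classes.
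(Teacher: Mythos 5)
There is a genuine gap at exactly the step you flag as the main obstacle: the claim that eliminating $\forall A_e$ for a monadic input predicate ``factors through the finite profile quotient without introducing new bound-vs-bound literals'' is false. Two \emph{distinct} elements of the universe with identical profiles over $\relsS$ and $\Const$ are not indistinguishable inside the scope of $\forall A_e$: reachability player may choose $A_e$ to contain one but not the other, and this is possible precisely when the two elements differ. Consequently (see \cref{l:forall}), eliminating $\forall A$ from a clause containing both a positive literal $A\bar z$ and a negative literal $\neg A\bar z'$ leaves behind the residual equality $\bar z=\bar z'$, and these may be bound variables. So the exact weakest precondition exits your canonical fragment already after one $\pA$-edge, and as the iteration unrolls, positive bound-bound equalities accumulate; since $\forall x_1\ldots x_n.\bigvee_{i<j}x_i=x_j\vee\cdots$ encodes negated counting quantifiers of growing threshold, there is no finite class of formulas (up to equivalence) in which the exact iteration lives, and your finite-lattice stabilization argument does not get off the ground. (A smaller issue: hypothesis (1) only excludes \emph{dis}equalities between bound variables in $I$, so $I$ itself need not lie in your equality-free canonical fragment either.)

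The paper's proof accepts that the fragment is not preserved exactly and instead \emph{abstracts}: Lemma \ref{l:equality} builds the weakest strengthening $\phi^\sharp$ of a formula with positive bound-bound equalities by replacing each $x=y$ (with $y$ universally quantified) by $\bigvee_{c\in\Const}x=c\wedge y=c$. The three properties that make this work are (i) $\phi^\sharp\implies\phi$; (ii) every monadic formula without bound-bound (dis)equalities that implies $\phi$ already implies $\phi^\sharp$ — proved via exactness of the abstraction on models of sufficiently large multiplicity; and (iii) $(\cdot)^\sharp$ commutes with conjunction and with $\forall A$ up to re-abstraction (Corollary \ref{c:abstract_eq}). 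Running the abstracted iteration $\Psi_0^{(h)}=(\Psi^{(h)})^\sharp$ stays inside the finite equality-free fragment and hence stabilizes; because $\Init$ contains no bound-bound (dis)equalities, property (ii) yields $\Init\implies\Psi^{(h)}[\start]$ iff $\Init\implies\Psi_0^{(h)}[\start]$, so safety is still decided exactly. To repair your argument you must replace the (false) preservation lemma for $\forall A_e$ by such a sound and relatively complete abstraction; note that the hypotheses (no $\pB$-edges, no disequalities between bound variables in $I$, constants on one side of every (dis)equality in $\Init$ and $\theta$) are precisely what guarantees that only \emph{positive} equalities ever arise between bound variables, which is what this particular abstraction can handle.
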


The proof is based on the following observation:
Assume that $\Const$ is a set of variables of cardinality $d$, and
formulas $\phi_1,\phi_2$ have free variables only from $\Const$.
If $\phi_1,\phi_2$ contain no disequalities between bound variables,
then $\phi_1,\phi_2$ are equivalent for all models and all valuations $\rho$
iff they are equivalent for models and valuations with \emph{multiplicity} exceeding $d$.
Here, the \emph{multiplicity} $\mu(s)$ of a model $s$
is the minimal cardinality of a 
non-empty equivalence class of $U$ w.r.t.\ \emph{indistinguishability}.
We call two elements $u,u'$ of the universe $U$ indistinguishable
in a model $s$ iff
$(s,\{x\mapsto u\}\models Rx)\leftrightarrow (s,\{x\mapsto u'\}\models Rx)$
for all relations $R$. 
Then, when computing $\Psi^{(h)}$, we use an \emph{abstraction}
by formulas without equalities, which is shown to be a \emph{weakest strengthening}
(see \cref{proof_mono_A}). 

Analogously, decidability is retained for assertions that only contain positive equalities
if there are no disequalities introduced during the weakest precondition computation.
This is only the case when $\relsA =\emptyset$, i.e.,
reachability player $\pA$ only selects
universe and control-flow path. 
As a consequence, we obtain:

\begin{restatable}{theorem}{thmmonob}
\label{t:mono_B}
Assume that $\T$ is a monadic safety game without $\pA$-edges where
\begin{enumerate}
	\item there are no equalities between bound variables in $I$,
	\item in all literals $x = y,x\neq y$ in $\Init$ and substitutions $\theta$,
	either $x\in\Const$ or $y\in\Const$.
\end{enumerate}
Then it is decidable whether $\T$ is safe.
\end{restatable}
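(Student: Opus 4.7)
My approach would be to mirror the proof of Theorem~\ref{t:mono_A} with the roles of equalities and disequalities swapped, and with the players $\pA$ and $\pB$ reversed. Since $\relsA = \emptyset$, every edge is a $\pB$-edge and the weakest precondition operator collapses to $\sem{e}^\top \Psi = \exists B_e.(\Psi\theta)$. The central technical claim I would establish is that existential monadic SO quantifier elimination does not introduce disequalities between bound variables (it may introduce equalities, but these are benign given hypothesis~(1)), and that substitutions $\theta$ preserve this property thanks to hypothesis~(2), which confines the (dis)equality literals in $\theta$ to those involving a constant from $\Const$.

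First, I would state and prove a dual to the ``multiplicity'' lemma used in the proof of Theorem~\ref{t:mono_A}: if two monadic FO formulas $\phi_1,\phi_2$ contain no equalities between bound variables, then they are equivalent on all models and valuations iff they agree on every model whose quotient by indistinguishability is itself, i.e.\ on models where distinct elements have distinct monadic 1-types. Intuitively, a formula without equalities between bound variables cannot count how many elements realise a given 1-type, so its truth is invariant under duplicating or collapsing indistinguishable elements, and global equivalence reduces to equivalence on the ``collapsed'' models.

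Second, I would define an abstraction $\alpha$ that serves as a \emph{weakest strengthening} inside the fragment of formulas whose only (dis)equalities involve a constant from $\Const$. Applied after each precondition step, $\alpha$ keeps the sequence $\alpha(\Psi^{(h)})$ within a set of formulas that, modulo logical equivalence, is finite---using a variant of the counting quantifier normal form of \cite{behmann1922beitrage} restricted to this fragment. Since the abstracted iterates form a monotonically decreasing chain in a finite space, they must stabilise at some effectively computable index $h^*$, yielding a decidable fixpoint of the abstracted weakest precondition operator.

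Finally, I would verify that the validity check $\Init \implies \Psi^{(h^*)}[\start]$ can be performed on the abstracted formula: by the dual multiplicity lemma combined with hypothesis~(2) on $\Init$, the only models that could distinguish $\Psi^{(h^*)}$ from its abstraction have non-trivial indistinguishability classes, and such models cannot witness a counter-example once $\Init$'s (dis)equalities mention only constants. The main obstacle I anticipate is precisely this last step: carefully verifying that $\alpha$ is both \emph{sound} and \emph{complete} for the implication check under (1)--(2), and in particular showing rigorously that existential monadic SO elimination preserves the no-disequality-between-bound-variables invariant throughout the iteration. This requires a detailed analysis of Ackermann/Behmann-style elimination and its interaction with constant-bounded (dis)equality literals introduced by $\theta$, which is the dual counterpart of the analogous technical lemma in the appendix proof of Theorem~\ref{t:mono_A}.
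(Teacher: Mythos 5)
Your overall plan---dualize the proof of \cref{t:mono_A} by swapping equalities with disequalities and the roles of $\pA$ and $\pB$, abstract each iterate by its weakest strengthening inside the fragment whose only (dis)equality literals mention a constant from $\Const$, and conclude stabilization from the finiteness of that fragment up to logical equivalence---is exactly the route the paper takes: \cref{proof_mono_B} literally reduces to the proof of \cref{t:mono_A} with the abstraction of equalities replaced by an abstraction of disequalities and \cref{c:abstract_eq} replaced by \cref{c:abstract_neq}. However, your central technical claim is backwards, and this is not cosmetic. Existential monadic SO quantifier elimination \emph{does} introduce disequalities between bound variables: bringing $\exists B.\phi$ into the normal form of \cref{l:normal} turns a clause $c'\vee Bz_1\vee\ldots\vee Bz_k$ into $\forall y.\,c'\vee(\bigwedge_i z_i\neq y)\vee By$, and the literals $z_i\neq y$ survive Ackermann's elimination (and the $H^k$ iteration, whose base case is $H^0=\bar y\neq\bar y'$) into the resulting FO formula. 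What the absence of $\pA$-edges buys is the \emph{opposite} guarantee: no equalities between bound variables are ever created, since those arise from eliminating $\forall A$ (cf.\ \cref{l:forall}); together with hypotheses (1) and (2) this keeps the iterates $\Psi^{(h)}$ free of equalities between bound variables while allowing positive disequalities between them. Those disequalities are precisely what the abstraction must soundly remove---in the paper, $x\neq y$ is replaced by the disjunction over $R\in\rels$ of $Rx\wedge\neg Ry\vee\neg Rx\wedge Ry$ and over $c\in\Const$ of $x=c\wedge y\neq c\vee x\neq c\wedge y=c$. As written, your argument leaves the abstraction with nothing to abstract on bound variables and therefore cannot deliver the commutation property $(\exists B.\phi)^\sharp=(\exists B.\phi^\sharp)^\sharp$ that drives the induction $\Psi_0^{(h)}=(\Psi^{(h)})^\sharp$.

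Relatedly, your ``dual multiplicity lemma'' is false as stated: a monadic formula without equalities but \emph{with} disequalities between bound variables, such as $\exists x,y.\,Px\wedge Py\wedge x\neq y$, is not invariant under duplicating or collapsing indistinguishable elements, so agreement on collapsed models does not characterize equivalence for the class of formulas you quantify over. The invariance you want holds only for formulas with \emph{neither} equalities nor disequalities between bound variables, i.e., for the target fragment of the abstraction; the paper invokes minimal models only in the base case of the structural induction establishing that $\phi^\sharp$ is the weakest strengthening (\cref{l:abstract_neq}), and the equivalence of $\Init\implies\Psi^{(h)}[\start]$ with $\Init\implies\Psi_0^{(h)}[\start]$ then follows directly from hypothesis (2) and the weakest-strengthening property. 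With these two corrections your argument goes through and coincides with the paper's.
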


The proof is analogous to the proof of theorem \ref{t:mono-mc}
where the abstraction of equalities now is replaced with an abstraction
of disequalities (see \cref{proof_mono_B}). 
In summary, we have shown that even though monadic logic is decidable,
2-player monadic FO safety games are undecidable in general.
However, for games where one of the players does not
choose interpretations for any relation,
decidability can be salvaged if the safety condition has acceptable 
equality/disequality literals only and neither $\Init$, nor the transition relation 
introduce further equality/disequality literals between bound variables.

\section{Proving Invariants Inductive}\label{s:check}

Even though the general problem of verification is already hard for monadic 
FO games, there are useful incomplete algorithms to
still prove general FO safety games safe.
One approach for verifying infinite state systems is to come up with
a candidate invariant which then is proven \emph{inductive} (see, e.g., \cite{padon2016ivy}).
This idea can be extended to \emph{safety games} where, additionally
strategies must either be provided or extracted.

In the context of FO safety games, 
an invariant $\Psi$ is called \emph{inductive} iff for all edges $e=(u,\theta,v)$,
$\Psi[u]\implies\sem{e}^\top(\Psi[v])$ holds.
\ignore{
\[
{\small
\begin{array}{lll}
\Psi[v]&\implies&
	\forall A_e.(\Psi[v']\theta)\quad\text{if}\;e\in E_\pA	\\
\Psi[v]&\implies&
	\exists B_e.(\Psi[v']\theta)\quad\text{if}\;e\in E_\pB	\\
\end{array}}
\]
}
We have:

\begin{lemma}\label{l:inductive}
Assume that $\Psi$ is inductive, and $\Psi[v]\implies I[v]$ for all nodes $v$. 
Then 
\begin{enumerate}
\item	For all $h\geq 0$, $\Psi[v]\implies\Psi^{(h)}[v]$;
\item	The game $G$ is safe, whenever $\Init\implies\Psi[\start]$ holds.
\end{enumerate}
\end{lemma}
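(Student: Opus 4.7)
The plan is to prove part (1) by induction on $h$ and then to derive part (2) immediately by combining part (1) with \cref{tapprox}.

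For part (1), the base case $h=0$ is just the hypothesis, since $\Psi^{(0)}[v]=I[v]$ and $\Psi[v]\implies I[v]$ is assumed. For the inductive step, I would fix a node $v$ and unfold
\[
\Psi^{(h)}[v] \;=\; \Psi^{(h-1)}[v]\;\wedge\;\bigwedge_{e\in\mathit{out}(v)}\sem{e}^\top\Psi^{(h-1)}.
\]
The first conjunct is handled directly by the induction hypothesis applied at $v$. For each remaining conjunct, corresponding to an outgoing edge $e=(v,\theta,v')$, I would chain two facts: first, inductiveness of $\Psi$ gives $\Psi[v]\implies\sem{e}^\top(\Psi[v'])$; second, the induction hypothesis at $v'$ gives $\Psi[v']\implies\Psi^{(h-1)}[v']$.

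The bridging step is that $\sem{e}^\top$ is \emph{monotone} in its formula argument: substituting a predicate by a fixed $\theta$ preserves implication, and prefixing either $\forall A_e$ or $\exists B_e$ does so as well. Hence $\sem{e}^\top(\Psi[v'])\implies\sem{e}^\top(\Psi^{(h-1)}[v'])$, and composing with inductiveness yields $\Psi[v]\implies\sem{e}^\top\Psi^{(h-1)}$, as required. Conjoining over all outgoing edges completes the step.

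For part (2), I would instantiate part (1) at $v=\start$, obtaining $\Psi[\start]\implies\Psi^{(h)}[\start]$ for every $h\geq 0$. Combining with the hypothesis $\Init\implies\Psi[\start]$ gives $\Init\implies\Psi^{(h)}[\start]$ for all $h$, which by \cref{tapprox} is exactly the criterion for $G$ to be safe. The only substantive observation is the monotonicity of $\sem{e}^\top$; everything else is straightforward bookkeeping, so I do not expect a real obstacle in carrying out the proof.
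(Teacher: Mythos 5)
Your proof is correct: the induction on $h$, the use of inductiveness at each outgoing edge, and the monotonicity of $\sem{e}^\top$ (substitution preserves valid implications, as does prefixing $\forall A_e$ or $\exists B_e$) are exactly the ingredients needed, and part (2) follows from part (1) together with \cref{tapprox} as you say. The paper omits the proof of this lemma entirely, and your argument is the standard one it implicitly relies on, so there is nothing to add.
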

We remark that, under the assumptions of \cref{l:inductive}, a \emph{positional} 
winning strategy $\sigma$ for safety player $\pB$ \emph{exists}.
Checking an FO safety game $\T$ for safety thus boils down to the following 
tasks:
\begin{enumerate}
\item	Come up with a candidate invariant $\Psi$ so that 
	\begin{itemize}
	\item	$\Psi[v]\implies I[v]$ for all nodes $v$, and
	\item	$\Init\implies\Psi[\start]$ hold;
	\end{itemize}
\item	Come up with a strategy $\sigma$ which assigns
	some FO formula to each predicate in $\relsB$;
\item	Prove that $\Psi$ is inductive for the FO transition system $\T\sigma$
	which is obtained from $\T$ by substituting each occurrence of $B$ 
	with $\sigma(B)$ for all $B\in\relsB$.
\end{enumerate}
%%%
For monadic FO safety games, we thereby obtain:

\begin{theorem}\label{t:monadic_check}
Assume that $\T$ is a monadic FO safety game with initial condition $\Init$ and assertion $I$.
Assume further that $\Psi$ is a monadic FO invariant, i.e., maps each program point to a
monadic formula.
Then the following holds:
\begin{enumerate}
\item	It is decidable whether $\Init\implies\Psi[\start]$ as well as
	$\Psi[v]\implies I[v]$ holds for each program point $v$;
\item	It is decidable whether $\Psi$ is inductive, and if so, an FO definable
	strategy $\sigma$ can be constructed which upholds $\Psi$.
\end{enumerate}
\end{theorem}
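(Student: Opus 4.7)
The plan is to reduce every check to validity in monadic first-order logic, which is decidable by the classical Löwenheim--Behmann result, and to extract the strategy from the constructive witnesses produced by monadic second-order quantifier elimination.

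For part (1), the formulas $\Init$, $\Psi[\start]$, $\Psi[v]$, and $I[v]$ are all monadic FO with free variables from $\Const$. Hence the implications $\Init\implies\Psi[\start]$ and $\Psi[v]\implies I[v]$ are, after universal closure over $\Const$, monadic FO sentences, and their validity is decidable.

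For part (2), I would treat each edge $e=(u,\theta,v)$ separately. Because every predicate in the game is monadic or nullary and $\theta$ maps each state predicate $R$ to a formula $R\theta$ built from monadic predicates together with a single input predicate $A_e$ or $B_e$, the formula $\Psi[v]\theta$ is again monadic. The inductiveness obligation takes the form
\[
\Psi[u]\implies\forall A_e.(\Psi[v]\theta)
\qquad\text{or}\qquad
\Psi[u]\implies\exists B_e.(\Psi[v]\theta),
\]
depending on whether $e\in E_\pA$ or $e\in E_\pB$. In each case, the right-hand side is a monadic second-order formula. By the effective elimination of monadic SO quantifiers \cite{behmann1922beitrage,wernhard2015second}, it can be replaced by an equivalent monadic FO formula, and validity of the resulting monadic FO implication is then decided using part (1).

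To produce the strategy at a $\pB$-edge, I would exploit the constructive side of the same elimination: when we eliminate $\exists B_e.(\Psi[v]\theta)$ we obtain not only an equi-valid quantifier-free right-hand side, but also a monadic FO definition $\sigma(B_e)$ of $B_e$ in terms of the remaining state predicates and constants, such that whenever $\exists B_e.(\Psi[v]\theta)$ holds, so does $(\Psi[v]\theta)[B_e\coloneqq\sigma(B_e)]$. Assembling these witnesses per $\pB$-edge yields a strategy $\sigma$, and $\Psi$ becomes an inductive invariant of $\T\sigma$ in the sense of \cref{l:inductive} precisely when all per-edge validity checks succeed. The main obstacle is guaranteeing that SO elimination returns such an explicit witness rather than just an equi-valid formula. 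This is where the monadic restriction pays off: via the counting-quantifier normal form used in the proofs of \cref{t:monadic_plain} and \cref{t:mono_A}, the universe splits into at most $2^{\abs{\relsS}}$ cells defined by the monadic state predicates, and any choice of $B_e$ reduces to choosing a subset of these cells; picking, cell by cell, those that satisfy $\Psi[v]\theta$ maximally gives a canonical FO definable witness for $\sigma(B_e)$.
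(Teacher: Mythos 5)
The paper itself justifies this theorem with a single sentence (``all formulas fall into \ldots{} Monadic Second Order logic''), and your proposal follows the same route for the decidability claims: part (1) and the decidability half of part (2) reduce to validity of monadic FO, respectively monadic SO, sentences, which is decidable and admits effective quantifier elimination. Those steps are correct and match the paper's intent.

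The genuine gap is in your witness-extraction step, i.e.\ exactly the part where the theorem claims more than decidability. It is not true that ``any choice of $B_e$ reduces to choosing a subset of the cells'': in monadic logic the truth of a formula under a choice of $B_e$ depends on the \emph{cardinalities} $\abs{B_e\cap C}$ for each cell $C$ (this is precisely what the counting-quantifier normal form records), not merely on which cells $B_e$ covers. Concretely, take monadic state predicates $P,R_1,R_2$, let $\theta$ map $R_1y\coloneqq B_ey\wedge Py$ and $R_2y\coloneqq\neg B_ey\wedge Py$, and let $\Psi[v]=(\exists x.R_1x)\wedge(\exists x.R_2x)$. Then $\exists B_e.(\Psi[v]\theta)$ is equivalent to ``$P$ has at least two elements,'' so $\Psi$ can be inductive on this edge, yet every witness must split the cell $P$ properly; a monadic FO definition $\sigma(B_e)$ over the state predicates and $\Const$ is all-or-nothing on the non-constant part of each cell, so no such $\sigma$ (in particular not your ``maximal union of cells'') upholds $\Psi[v]\theta$. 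Hence your canonical-witness recipe can fail even when the existential is valid, and the step from effective elimination of $\exists B_e$ to an explicit FO-definable choice function is left unproven. To close the gap you would at least need to work with counting types rather than cells (letting $\sigma(B_e)$ case-split on FO-expressible cardinality conditions) and argue that a cell-respecting witness exists whenever any witness does --- which the example above refutes without further restrictions on $\theta$ or $\Psi$. The paper offers no argument here either, but since the second claim of the theorem is exactly the constructibility of such a strategy, this is the step that actually requires proof.
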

The proof is by showing that all formulas fall into a decidable fragment ---
in this case Monadic Second Order logic.
\noindent
A \emph{monadic} FO safety game can thus be proven safe by providing an appropriate 
monadic FO invariant $\Psi$: the winning strategy itself can be effectively computed.

Another important instance is when the candidate invariant $\Psi$ as well as $I$ 
consists of universal FO formulas only,
while $\Init$ is in the \emph{Bernays-Schönfinkel-Ramsey} (BSR) fragment\footnote{
	The Bernays-Schönfinkel-Ramsey fragment contains all formulas of First Order Logic that
	have a quantifier prefix of $\exists^*\forall^*$ and do not contain function symbols.
	Satisfiability of formulas in BSR is known to be decidable \cite{ramsey2009problem}.
}.

\begin{theorem}\label{t:check}
Let $\T$ denote a safety game
where each substitution $\theta$ occurring at edges of the control-flow graph
uses non-nested FO quantifiers only.
Let $\Psi$ denote a universal FO invariant for $\T$, i.e.,
$\Psi[v]$ is a universal FO formula for each node $v$.
\begin{enumerate}
\item	It is decidable whether $\Init\implies\Psi[\start]$ as well as
	$\Psi[v]\implies I[v]$ holds for each program point $v$;
\item	Assume that no $B\in\relsB$ occurs in the scope of an existential FO quantifier,
	and $\sigma$ is a strategy which provides 
	a universal FO formula for each $B\in\relsB$.
Then it is decidable whether or not $\Psi$ is inductive for $\T\sigma$.
\end{enumerate}
\end{theorem}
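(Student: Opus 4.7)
The plan is to reduce every decidability claim to satisfiability of a formula in the Bernays-Sch\"onfinkel-Ramsey (BSR) fragment, which is decidable and is the same tool the paper already invokes.

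\textbf{Part 1.} To decide $\Init \implies \Psi[\start]$, I would test unsatisfiability of $\Init \wedge \neg\Psi[\start]$. Since $\Init$ is in BSR and $\Psi[\start]$ is universal, $\neg\Psi[\start]$ lies in $\exists^*$; pulling all existentials outward and leaving the $\forall^*$ prefix of $\Init$ innermost yields an $\exists^*\forall^*$ prenex formula, i.e., BSR. The check $\Psi[v] \implies I[v]$ is analogous: its negation is $\forall^* \wedge \exists^*$, which prenexes to $\exists^*\forall^*$.

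\textbf{Part 2.} For each edge $e = (u, \theta, v)$ of $\T\sigma$, inductiveness demands validity of $\Psi[u] \implies \sem{e}^\top(\Psi[v])$. The technical crux is a structural claim: under the stated assumptions, $\Psi[v]\theta$ (with $B \in \relsB$ already replaced by $\sigma(B)$ on $\pB$-edges) is equivalent to a universal formula. To establish this, write $\Psi[v] = \forall \bar y.\,\phi(\bar y)$ with $\phi$ quantifier-free and apply $\theta$: each state atom of $\phi$ is replaced by a formula whose own quantifiers, by the non-nesting assumption, do not contain further quantifiers, so the body is a Boolean combination of quantifier-free literals and subformulas of shape $Q\bar w.\,\text{QF}$. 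Substituting $\sigma(B) = \forall \bar z.\,\phi^\sigma_B$ then introduces only universal quantifiers, since the assumption that $B$ does not lie in the scope of any existential in $\theta$ means---once the Boolean structure is put into negation normal form---that every new binder appears either at the top level or positively under another universal. Using the equivalence $\forall \bar w.(\alpha \lor \forall \bar z.\beta) \equiv \forall \bar w\bar z.(\alpha \lor \beta)$ for fresh $\bar z$, together with the conjunctive analogue, these binders can be pulled out and merged with the outer $\forall \bar y$, delivering a $\forall^*$ formula.

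Given the structural claim, the inductiveness check reduces to BSR. For $\pB$-edges, $\sem{e}^\top(\Psi[v]) = \Psi[v]\theta\sigma$ is universal, so $\Psi[u] \wedge \neg\sem{e}^\top(\Psi[v])$ has the shape $\forall^* \wedge \exists^*$ and prenexes to $\exists^*\forall^*$. For $\pA$-edges, $\sem{e}^\top(\Psi[v]) = \forall A_e.\,\Psi[v]\theta$; the SO existential arising in the negation is dispatched by Skolemization, i.e., $A_e$ is simply treated as a fresh predicate symbol in the signature, and the same $\forall^* \wedge \exists^*$ shape results. I expect the structural claim to be the main obstacle: verifying it cleanly requires carefully tracking polarity and quantifier scope through the substitutions and confirming that the restriction on $B$-occurrences is exactly what rules out any $\forall\exists$ alternation in the final formula.
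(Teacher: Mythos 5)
Your overall architecture coincides with the paper's: negate each verification condition, show it lands in the decidable $\exists^*\forall^*$ (BSR) fragment, and discharge the second-order existential over $A_e$ by treating $A_e$ as a fresh free predicate. Part 1 is correct as written.

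The flaw is the ``structural claim'' in Part 2, which is both stronger than what you need and false under the stated hypotheses. The theorem only assumes that substitutions use \emph{non-nested} quantifiers; it does not forbid existential ones. If some $\theta(R)$ contains a positive existential quantifier (as in the running example, where $\Read(x,p,r)$ is updated using $\exists y.\,\Assign(x,p)\land\Review(y,p,r)$) and $R$ occurs positively in the clause matrix of $\Psi[v]$, then $\Psi[v]\theta$ contains a positive existential subformula and is in general not equivalent to a universal formula. Likewise, a \emph{negative} occurrence of $B$ (permitted, as long as it is not under an existential quantifier) turns $\sigma(B)=\forall\bar z.\,\phi^\sigma_B$ into $\exists\bar z.\,\neg\phi^\sigma_B$ in negation normal form, again introducing an existential binder; your assertion that ``every new binder appears \dots\ positively under another universal'' silently assumes all $B$-occurrences are positive. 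The repair is local but necessary: you do not need $\sem{e}^\top(\Psi[v])$ to be universal, only to be expressible in $\forall^*\exists^*$ with all existential blocks innermost. This does hold, because every offending existential is either at quantifier depth one with fresh variables (coming from $\theta$) or sits directly below a single universal block (a negative $\sigma(B)$ under a $\forall$ of some $\theta(R)$), so in the prenex form all universal blocks, including the outer $\forall\bar y$, can be placed before all existential blocks. Then $\Psi[u]\wedge\neg\sem{e}^\top(\Psi[v])$ is $\exists^*\forall^*$ and the BSR decision procedure applies --- which is exactly the fragment ($\eufol$) that the paper's proof sketch appeals to, rather than the universal fragment your argument tries to force.
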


\noindent
The proof is by showing that all mentioned formulas can be solved by
checking satisfiability of a formula in the decidable fragment $\eufol$.
\Cref{t:check} states that (under mild restrictions on the substitutions
occurring at $\pB$-edges), the candidate invariant $\Psi$ can be checked for 
inductiveness --- at least when
a positional strategy of $\pB$ is provided which is expressed by means of universal FO formulas.
In particular, this implies decidability for the case when the set $E_\pB$ is empty.
The proof works by showing that all verification conditions fall into the BSR fragment of FO Logic.
For the verification of inductive invariants for FO transition systems
(no $\pB$ edges),
the \textsc{Ivy} system essentially relies on the observations summarized in 
\cref{t:check} \cite{padon2016ivy}.

Besides finding promising strategies $\sigma$,
the question remains how for a given assertion $I$ a suitable \emph{inductive} invariant can be
inferred. One option is to iteratively compute the sequence $\Psi^{(h)}, h\geq 0$ as in
\eqref{def:fixpoint}.
In general that iteration may never reach a fixpoint.
Here, however, FO definability implies termination:

\begin{theorem}\label{t:goedel}
Assume that for all program points $u$ and $h\geq 0$, $\Psi^{(h)}[u]$ is FO definable
as well as the infinite conjunction $\bigwedge_{h\geq 0}\Psi^{(h)}[u]$.
Then there exists some $m\geq 0$ such that
$\Psi^{(m)}=\Psi^{(m+k)}$ holds for each $k\geq 0$.
Thus, $\bigwedge_{h\geq 0}\Psi^{(h)}[u] = \Psi^{(m)}[u]$ for all $u$.
\end{theorem}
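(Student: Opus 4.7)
My plan is to use the compactness theorem of first-order logic, exploiting the monotonicity of the sequence $\Psi^{(h)}$ together with the assumed FO-definability of its limit.

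First, I would observe from the recurrence in (\ref{def:fixpoint}) that for each program point $u$ and every $h\ge 0$, the implication $\Psi^{(h+1)}[u]\implies\Psi^{(h)}[u]$ holds, so $\{\Psi^{(h)}[u]\}_{h\ge 0}$ is a monotonically strengthening chain. In particular, any finite subset $\{\Psi^{(h_1)}[u],\dots,\Psi^{(h_n)}[u]\}$ is equivalent to its strongest element $\Psi^{(\max h_i)}[u]$. Moreover, the assumed FO-definable formula $\Phi[u]\coloneqq\bigwedge_{h\ge 0}\Psi^{(h)}[u]$ satisfies $\Phi[u]\implies\Psi^{(h)}[u]$ for every $h$, since it is (semantically) the conjunction of all of them.

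Now fix a program point $u$ and consider the first-order theory
\[
T_u \;=\; \{\Psi^{(h)}[u]\,:\,h\ge 0\}\;\cup\;\{\neg\Phi[u]\}.
\]
Because every model of all $\Psi^{(h)}[u]$ is, by definition of $\Phi[u]$, a model of $\Phi[u]$, the theory $T_u$ is unsatisfiable. By compactness there exists a finite unsatisfiable subset, which by monotonicity can be collapsed into $\{\Psi^{(m_u)}[u],\neg\Phi[u]\}$ for some $m_u\ge 0$. Hence $\Psi^{(m_u)}[u]\implies\Phi[u]$, and combined with the reverse implication $\Phi[u]\implies\Psi^{(m_u)}[u]$ noted above, we obtain $\Psi^{(m_u)}[u]\equiv\Phi[u]$.

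Since the control-flow graph has only finitely many nodes, I would then set $m\coloneqq\max_u m_u$. By monotonicity, $\Psi^{(m)}[u]\implies\Psi^{(m_u)}[u]\equiv\Phi[u]$, while $\Phi[u]\implies\Psi^{(m)}[u]$, so $\Psi^{(m)}[u]\equiv\Phi[u]$ uniformly in $u$. The same argument applied to $m+k$ for any $k\ge 0$ yields $\Psi^{(m+k)}[u]\equiv\Phi[u]\equiv\Psi^{(m)}[u]$, proving stabilization and giving the claimed identity $\bigwedge_{h\ge 0}\Psi^{(h)}[u]=\Psi^{(m)}[u]$. The only subtle point is making sure that the compactness argument is legitimately applicable, which is exactly what the FO-definability hypothesis on both $\Psi^{(h)}[u]$ and on $\Phi[u]$ guarantees; without the latter assumption, $\neg\Phi[u]$ would not even be an FO formula and the compactness reduction could not be invoked.
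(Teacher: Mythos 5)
Your proof is correct and follows essentially the same route as the paper's: both exploit the monotonicity of the chain $\Psi^{(h)}[u]$ together with the assumed FO-definability of the limit, and apply G\"odel's compactness theorem to extract a finite (hence, by monotonicity, a single) index $m_u$ with $\Psi^{(m_u)}[u]\equiv\bigwedge_{h\ge 0}\Psi^{(h)}[u]$. Your phrasing of compactness via unsatisfiability of $\{\Psi^{(h)}[u]\}_h\cup\{\neg\Phi[u]\}$ and your explicit maximum over the finitely many program points are just minor presentational variants of the paper's argument.
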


\begin{proof}
Let $\phi_u$ denote the first order formula which is equivalent $\bigwedge_{h\geq 0}\Psi^{(h)}[u]$.
In particular, this means that $\phi_u\implies\Psi^{(h)}[u]$ for each $h\geq 0$.
On the other hand, we know that $\bigwedge_{h\geq 0}\Psi^{(h)}[u]$ implies $\phi_u$.
Since $\phi_u$ as well as each $\Psi^{(h)}[u]$ are assumed to be FO definable,
it follows from Gödel's compactness theorem that there is a \emph{finite}
subset $J\subseteq\mathcal{N}$ such that $\bigwedge_{h\in J}\Psi^{(h)}[u]$ implies $\phi_u$.
Let $m$ be the maximal element in $J$.
Then, $\bigwedge_{h\in J}\Psi^{(h)}[u] = \Psi^{(m)}[u]$  since the $\Psi^{(h)}[u]$
form a decreasing sequence of formulas.
Together, this proves that $\phi_u$ is equivalent to $\Psi^{(m)}[u]$.
\qed
\end{proof}
\Cref{t:goedel} proves that if there exists an inductive invariant proving a given FO
game safe, then fixpoint iteration will definitely terminate and find it.  

For the case of \emph{monadic} FO safety games, 
this means that the corresponding infinite conjunction 
is not always FO definable --- otherwise decidability would follow.
In general, not every invariant $I$ can be strengthened to an
inductive $\Psi$, and universal strategies need not be sufficient to win a 
universal safety game.
Nonetheless, there is a variety of non-trivial cases where existential SO quantifiers can
be effectively eliminated,  
e.g., by Second Order quantifier elimination algorithms 
SCAN or \DLS\ (see the overview in \cite{gabbay2008second}).
In our case, in addition to plain elimination we need an explicit construction of the 
corresponding strategy, expressed as a FO formula.
We remark that following \cref{t:check},
it is not necessary to perform \emph{exact} quantifier elimination: 
instead, a sufficiently weak \emph{strengthening} may suffice.
Techniques for such \emph{approximate} SO existential quantifier elimination are provided
in the next section.

\section{%SO Quantifier Elimination and 
	Hilbert's Choice Operator for Second Order Quantifiers}\label{s:hilbert}
\label{s:qe}

In this section, we concentrate on formulas with universal FO quantifiers only.
First, we recall 
% that for these, SO universal quantifiers can always be eliminated.
% This is based on 
the following observation:

\begin{lemma}[see Fact 1, \cite{DBLP:conf/csfw/0008SZ18}]\label{l:forall}
Consider a disjunction $c$ of the form
\[
{\small F\vee \begin{array}{l}
		\bigvee_{i=1}^kA\bar z_i\vee \bigvee_{j=1}^l\neg A\bar z'_i
		\end{array}}
\]
for some formula $F$ without occurrences of predicate $A$. Then 
% \[{\small
$
\forall A.c$ is equivalent to $\begin{array}{l}
	F\vee\bigvee_{i,j} (\bar z_i=\bar z'_j)
	\end{array}
$
% }\]
for sequences of variables $\bar z_i=z_{i1}\ldots z_{ir}$, $\bar z'_j=z'_{j1}\ldots z'_{jr}$, 
where $\bar z_i=\bar z_j$ is an abbreviation for $\bigwedge_{k=1}^r z_{ik}=z'_{jk}$.
\qed
\end{lemma}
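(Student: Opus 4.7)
The plan is to prove the equivalence by contrapositive, reducing $\forall A.\,c$ to the non-existence of a separating interpretation of the predicate $A$. Fix any universe $U$ and any valuation $\rho$ of the ambient free variables; this turns each $\bar z_i$ and $\bar z'_j$ into a concrete tuple $\rho(\bar z_i), \rho(\bar z'_j) \in U^r$. Since $F$ does not mention $A$, its truth value under $\rho$ is independent of the choice of $A$, and the problem collapses to a purely propositional question about which tuples $A$ contains.

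First I would handle the forward direction. Assume $\forall A.\,c$ holds under $\rho$ and suppose $F$ is false; the goal is to exhibit some $i,j$ with $\rho(\bar z_i) = \rho(\bar z'_j)$. Consider the canonical witness $A^{*} := \{\rho(\bar z'_j) \mid 1 \le j \le l\}$. By hypothesis $c$ holds when $A$ is interpreted as $A^{*}$; since $F$ is false, some $A^{*}\bar z_i$ must be true or some $\neg A^{*}\bar z'_j$ must be true. The latter is impossible by construction of $A^{*}$, so $\rho(\bar z_i) \in A^{*}$ for some $i$, giving $\rho(\bar z_i) = \rho(\bar z'_j)$ for some $j$ as required. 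For the backward direction, assume the right-hand side holds under $\rho$. If $F$ holds, $c$ holds for every $A$; otherwise some pair satisfies $\bar z_i = \bar z'_j$, so for any interpretation of $A$ the literals $A\bar z_i$ and $A\bar z'_j$ agree in truth value, forcing either $A\bar z_i$ to be true or $\neg A\bar z'_j$ to be true, and $c$ holds in either case.

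The argument is essentially a Hilbert-style choice of the minimal $A$ that could possibly falsify $c$: once no equality is forced among the positive and negative argument tuples, we can freely include the $\rho(\bar z'_j)$ and exclude the $\rho(\bar z_i)$. There is no genuine obstacle; the only subtlety worth flagging is that the witness $A^{*}$ is built uniformly from the given tuples under any valuation $\rho$, which is precisely why the eliminated formula on the right is first-order definable using the same free variables as $c$, with no reference to $A$.
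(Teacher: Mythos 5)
Your proof is correct: the choice of the witness $A^{*}$ as exactly the set of tuples appearing in negative literals is the standard argument for this fact, and both directions are handled cleanly (including the degenerate cases where $k=0$ or $l=0$, for which the empty disjunction of equalities correctly reduces the right-hand side to $F$). The paper itself gives no proof, citing the lemma as Fact~1 of an earlier work, but your argument is exactly the expected one.
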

\noindent
As a consequence, universal SO quantification can always be eliminated from 
universal formulas.

\begin{example}\label{ex:SOQE}
% \itodo{this is not the running example assertion, but this one uses SOQE. do we want to use the same one everywhere?}
Consider the assertion $I=  \forall x,p,r.\neg(\Conf(x,p)\land\Review(x,p,r))$
% from the introductory example
and substitution $\theta$ from the edge between program points 2 and 3 in \cref{fig:easychair},
given by $\Review(x,p,r) \coloneqq \Assign(x,p)\wedge A_3(x,p,r)$
and $\Conf(x,p) \coloneqq \Conf(x,p)$.
Since $I\theta$ contains only negative occurrences of $A_3$,
%SO universal quantifier elimination is particularly simple:
we obtain:
\[
{\small\begin{aligned}
\forall A_3.(I\theta)\;
=\;& \forall x,p,r.\forall A_3.\neg\Conf(x,p)\vee\neg\Assign(x,p)\vee \neg A_3(x,p,r)	\\
=\;& \forall x,p,r. \neg\Conf(x,p)\vee\neg\Assign(x,p)
	\label{def:I2}
\end{aligned}}
\eqno\qed
\]
\end{example}
As we have seen in section \ref{s:check}, 
checking whether a universal FO invariant is inductive
can be reduced to SO existential quantifier elimination.
%
% In the general case where predicates are neither nullary nor monadic,
% SO quantifier elimination may not always be possible --- even if we restrict ourselves to
% formulas with universal FO quantification only.
%
While universal SO quantifiers can always be eliminated in formulas with universal 
FO quantifiers only,
% (see, e.g., \cite{...}), 
this is not necessarily the case for existential SO quantifiers.
As already observed by Ackermann \cite{Ackermann1935}, the formula
\[
{\small\exists B.\,Ba\wedge\neg Bb\wedge \forall x,y.\neg Bx\vee\neg Rxy\vee By}
\]
expresses that $b$ is not reachable from $a$ via the edge relation $R$
and thus cannot be expressed in FO logic. 
%
% Thus, a general construction for eliminating SO existential quantification
% from otherwise FO universal formulas cannot exist. 
%
This negative result, though, does not exclude that in a variety of meaningful cases,
equivalent FO formulas can be constructed.
For formulas with universal FO quantifiers only, 
we provide a simplified algorithm for existential SO quantifier elimination.
Moreover, we show that the construction of a \emph{weakest} SO \emph{Hilbert choice operator}
can be reduced to existential SO quantifier elimination itself.
In terms of FO safety games, the latter operator enables us to 
extract \emph{weakest} winning strategies for safety player $\pB$.
%
% we introduce a sequence of candidate substitutions to approximate this operator.
% In particular, we show that 
% for these formulas,
% the construction of a \emph{weakest} SO Hilbert choice operator 
% We also indicate large classes of formulas where SO quantifier elimination is possible. 
%
For an in-depth treatment on SO existential quantifier elimination,
% and a detailed exposition of the (partial) SCAN as well as the \DLS\ algorithm, 
we refer to \cite{gabbay2008second}.
%
%%%

% As already observed by Ackermann \cite{Ackermann1935}, 
% \emph{existential} SO quantification 
% for an universal FO formula 
% need not always be equivalent to some FO formula.
% \itodo{insert ackermann example}
%
%
% Nonetheless, SO existential quantifier elimination from FO universal formulas
% is possible in several non-trivial cases.
%
% Refer to textbook and the scan algorithm!!

Let $\phi$ denote some universally quantified formula, possibly containing
a predicate $B$ of arity $r$. Let $\bar y = y_1\ldots y_r$, and $\bar y'=y'_1\ldots y'_r$.
We remark that for \emph{any} formula $\psi$ with free variables in $y$,
$\phi[\psi/B] \implies \exists B.\phi$ holds.
Here, this SO substitution means that every literal $B\bar z$ and every literal $\neg B\bar z'$ is
replaced with $\psi[\bar z/\bar y]$ and $\neg\psi[\bar z'/\bar y]$, respectively.
Let $H_{B,\phi}$ denote the set of \emph{all} FO formulas $\psi$ such that 
$\exists B.\phi$ is equivalent to $\phi[\psi/B]$. 
A general construction for $B$ and $\phi$ 
(at least from some suitably restricted class of formulas) 
of some FO formula $\psi\in H_{B,\phi}$
% such that $\phi[\psi/B] = \exists B.\phi$
is an instance of Hilbert's (second-order) \emph{choice} operator.
If it exists, we write $\psi=\mathcal{H_B}(\phi)$.
In order to better understand the construction of such operators,
we prefer to consider universal FO formulas in \emph{normal form}.

\begin{restatable}{lemma}{lemnormalform}
\label{l:normal}
Every universal FO formula $\phi$ possibly containing occurrences of $B$
% $\forall x.\phi'$, $\phi'$ quantifierfree, $x$ a sequence of
% variables distinct from $y,y'$, 
is equivalent to a formula 
{\small
\begin{equation}
{\small E\wedge(\forall \bar y.F\vee B\bar y)\wedge(\forall \bar y'.G\vee\neg B\bar y')\wedge(\forall \bar y\bar y'.H\vee B\bar y\vee\neg B\bar y')}
\label{def:normal}
\end{equation}}
where $E,F,G,H$ are universal formulas without 
% occurrences of 
$B$.
% $E$ has free variables from $x$, 
% $G$ has free variables from $y$, and
% $H$ has free variables from $x,y,y'$.
\end{restatable}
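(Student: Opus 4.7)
The plan is to reduce the proof to a careful normalization of $\phi$ in prenex CNF and then re-shape each clause individually. I would first bring $\phi$ into the form $\forall \bar x.\,\bigwedge_i C_i(\bar x)$ by pulling quantifiers out and converting the matrix to CNF. Each clause $C_i$ is then classified by the pattern of its $B$-literals: (a) no $B$-literal, (b) only positive $B$-literals, (c) only negative $B$-literals, or (d) mixed. Clauses of type (a) are conjoined directly into $E$, and the rest must be re-shaped to fit the three $B$-containing conjuncts of the target form.

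The basic workhorse equivalence is a ``freshening'' rewrite: for any tuple $\bar z$ and $B$-free formula $D$,
\[
B\bar z \vee D \;\equiv\; \forall \bar y.\;(\bar y \neq \bar z)\vee B\bar y \vee D,
\]
where $\bar y$ is fresh and $\bar y \neq \bar z$ abbreviates the componentwise disequality. The left-to-right direction is by instantiating $\bar y := \bar z$ in the right-hand side; the right-to-left direction is immediate because for $\bar y \neq \bar z$ the first disjunct already holds. Used on a clause with a single positive $B$-literal, this rewrite pushes $B$ onto the canonical bound tuple $\bar y$ and so produces a conjunct of shape $\forall \bar y.\, F \vee B\bar y$ with $F$ being $B$-free. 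The symmetric rewrite $\neg B\bar z' \vee D \equiv \forall \bar y'.\,(\bar y' = \bar z') \vee \neg B\bar y' \vee D$ handles a single negative $B$-literal, and applying both simultaneously handles mixed clauses, producing the shape $H \vee B\bar y \vee \neg B\bar y'$.

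The main obstacle, and where the technical effort lies, is Step 4: clauses containing several $B$-literals of the same polarity, such as $B\bar z_1 \vee B\bar z_2 \vee D$. Iterating the freshening rewrite only factors one $B$-literal at a time and leaves behind extra $B$-literals, e.g.\ $\forall \bar y_1,\bar y_2.\,(\bar y_1\neq\bar z_1 \vee \bar y_2\neq\bar z_2) \vee B\bar y_1 \vee B\bar y_2 \vee D$. My plan for discharging this case is a case analysis on the fresh witnesses $\bar y_i$: by splitting on whether $\bar y_i = \bar z_i$, exploiting that the clause is universally quantified over these fresh variables and reorganizing the resulting disjuncts via standard equivalences of disjunctions of disequalities, each resulting piece is either $B$-free or has its $B$-occurrences reduced to a single canonical pair $(B\bar y,\neg B\bar y')$. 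This is the step where I expect most of the bookkeeping to be needed, since equivalence must be preserved throughout.

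Once every clause has been rewritten into one of the four target shapes, the final step is aggregation: clauses of the same shape are merged into a single conjunct using the distributivities $\forall \bar y.\,(F_1\vee B\bar y)\wedge(F_2\vee B\bar y)\equiv \forall \bar y.\,(F_1\wedge F_2)\vee B\bar y$ and its analogues for the $G$- and $H$-conjuncts, and the $B$-free $E$-clauses are simply conjoined. The resulting four-conjunct formula has $E,F,G,H$ as universal $B$-free formulas, matching the statement of the lemma.
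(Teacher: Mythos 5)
Your overall route coincides with the paper's own proof: bring $\phi$ into prenex CNF, classify clauses by the polarity pattern of their $B$-literals, replace the argument tuples of the $B$-literals by fresh universally quantified tuples guarded by componentwise disequalities, and merge clauses of equal shape by distributivity. Steps 1--3 and 5 of your plan are exactly the paper's argument. The gap is your Step 4, and it is not a matter of bookkeeping: a clause with two or more positive (or two or more negative) $B$-literals is in general \emph{not} equivalent to any formula of shape \eqref{def:normal}, so the case analysis you sketch cannot be completed ``preserving equivalence throughout''. Concretely, let $B$ be unary, let $a,b$ be free variables, and consider $\phi = Ba\vee Bb$ over a two-element universe in which $a,b$ denote distinct elements and there are no other predicates. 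In \eqref{def:normal}, $E$ is independent of $B$; the conjunct $\forall \bar y.F\vee B\bar y$ evaluates to $F[b/\bar y]$ at $B=\{a\}$, to $F[a/\bar y]$ at $B=\{b\}$, and to $F[a/\bar y]\wedge F[b/\bar y]$ at $B=\emptyset$; and the two remaining $B$-conjuncts are vacuously true at $B=\emptyset$. Hence any formula of shape \eqref{def:normal} that is true at $B=\{a\}$ and at $B=\{b\}$ is also true at $B=\emptyset$, where $\phi$ is false. So the step where you ``split on whether $\bar y_i=\bar z_i$'' must either fail or silently strengthen the formula.

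You should also know that the paper's own proof disposes of this case in one line, asserting that $c'\vee Bz_1\vee\dots\vee Bz_k$ is equivalent to $\forall y.\,c'\vee(\bigwedge_{i=1}^k z_i\neq y)\vee By$. Only the right-to-left direction of that equivalence holds (instantiate $y:=z_1$); left-to-right it replaces the disjunction $\bigvee_i Bz_i$ by the conjunction $\bigwedge_i Bz_i$, which is strictly stronger whenever $k\geq 2$ and the $z_i$ can denote distinct tuples --- the same counterexample applies. Your single-literal freshening rewrites are correct, so your plan does give a complete proof under the additional hypothesis that every clause of the CNF contains at most one positive and at most one negative $B$-literal; without that hypothesis the statement itself fails, and the analogous multi-literal rewrite reappearing in the proof of \cref{t:positive} deserves the same scrutiny.
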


\noindent
The corresponding construction is provided in \cref{l:normalproof}. 
For that, disequalities between variables, and fresh
auxiliary variables $\bar y$ and $\bar y'$ are introduced,
where the sequence $\bar y'$ is only required when
both positive and negative $B$ literals occur within the same clause.
%
% The given normal form for formulas $\forall x.\phi'$ is closely related to the
% \emph{Eliminationshauptform} proposed by Behmann \cite{...} for monadic formulas ---
% with the notable difference that we allow clauses containing occurrences both of positive
% and negative $B$-literals.
% 
% CITATION OF GABBAY AS WELL?
% 
In case these are missing, the formula is said to be in \emph{simple} normal form.
%
% In fact, it is hopefully proven somewhere that it does not even exist in general!?
%
For that case, Ackermann's lemma applies:

\begin{lemma}[Ackermann's lemma \cite{Ackermann1935}]\label{l:choice}
% Ackermann's lemma
Assume that $\phi$ is in simple normal form
% {\small
% \begin{equation*}
$
% \phi = 
E\wedge(\forall \bar y.F\vee B\bar y)\wedge(\forall \bar y.G\vee\neg B\bar y)
$.
% \label{def:simple_normal}
% \end{equation*}}
Then we have:
\begin{enumerate}
\item	$\exists B.\phi = E\wedge(\forall \bar y.F\vee G)$;
\item	For every FO formula $\psi$,
	$\exists B.\phi = \phi[\psi/B]$ iff
	$(E\land\neg F)\implies\psi$ and
	$\psi\implies(\neg E\lor G)$.
	\qed
\end{enumerate}
\end{lemma}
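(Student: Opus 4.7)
The plan is to prove part (1) directly as a semantic equivalence by exhibiting an explicit witness, and then obtain part (2) as a corollary by comparing the quantifier-free equivalent from (1) with the particular formula $\phi[\psi/B]$.

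For (1), I would argue by double implication over an arbitrary structure (with any fixed valuation of the free variables in $\Const$). The ``$\Rightarrow$'' direction is immediate: if some interpretation of $B$ validates $\phi$, then for every $\bar y$ we have both $F\vee B\bar y$ and $G\vee\neg B\bar y$; a case split on $B\bar y$ forces $F\vee G$, and $E$ is preserved because it does not mention $B$. For the ``$\Leftarrow$'' direction, assume $E\wedge(\forall\bar y.F\vee G)$ holds. Then $\neg F\implies G$ pointwise, so the interpretation $B\bar y := \neg F[\bar y]$ (equivalently, any interpretation satisfying $\neg F\implies B\bar y\implies G$) lies in the admissible ``interval,'' makes $F\vee B\bar y$ trivially true, and makes $G\vee\neg B\bar y$ true because whenever $B\bar y$ holds we have $\neg F$, hence $G$.

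For (2), recall from the remark preceding the lemma that $\phi[\psi/B]\implies\exists B.\phi$ holds unconditionally. Hence equivalence $\exists B.\phi = \phi[\psi/B]$ is equivalent to the converse implication $\exists B.\phi\implies\phi[\psi/B]$. Using (1), this becomes
\[
E\wedge(\forall\bar y.F\vee G)\implies E\wedge(\forall\bar y.F\vee\psi)\wedge(\forall\bar y.G\vee\neg\psi).
\]
The conjunct on $E$ is automatic, and the two universally quantified conjuncts split independently. Moving the quantifier inside, the first demands that $\psi$ hold whenever $E$ and $\neg F$ do, i.e.\ $(E\wedge\neg F)\implies\psi$; the second demands that $\neg\psi$ hold whenever $E$ and $\neg G$ do, i.e.\ $\psi\implies(\neg E\vee G)$. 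Both directions of this reduction are straightforward, since the two conjuncts are independent and each reduces to a pointwise implication under the universal quantifier over $\bar y$.

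I do not anticipate a real obstacle: the arguments are purely propositional once the universal quantifiers and the free variables of $\Const$ are handled by working pointwise in an arbitrary model. The only point that deserves care is in (1), where the witness for $B$ is chosen \emph{per structure} (and need not correspond to any fixed FO formula, as \cref{lem:expressiveness} already indicated); this is precisely why (2) requires the explicit FO witness $\psi$ to satisfy the two sandwiching side conditions $(E\wedge\neg F)\implies\psi\implies(\neg E\vee G)$.
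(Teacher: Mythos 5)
The paper offers no proof of this lemma at all---it is stated with a citation to \cite{Ackermann1935} and closed by \qed---so your argument has to stand on its own. Part (1) does: the case split on $B\bar y$ for the forward direction and the per-structure witness $B\bar y:=\neg F$ for the backward direction are exactly the standard argument, and your remark that this witness is chosen per structure and need not be FO-definable is well placed.

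The gap is in part (2). Your reduction correctly turns the problem into the validity of $E\wedge(\forall\bar y.F\vee G)\implies E\wedge(\forall\bar y.F\vee\psi)\wedge(\forall\bar y.G\vee\neg\psi)$, and the \emph{if} direction (the two side conditions imply this validity) goes through pointwise as you describe. But in the \emph{only if} direction you silently discard the hypothesis $\forall\bar y.F\vee G$ when you claim each conjunct ``reduces to a pointwise implication'': from the validity of the displayed implication you may only conclude the relativized condition $(E\wedge(\forall\bar y.F\vee G)\wedge\neg F)\implies\psi$, not the unconditional $(E\wedge\neg F)\implies\psi$ that the lemma asserts. This is not a repairable oversight, because that direction of the biconditional is false as literally stated: take $E=\ttrue$ and $F=G=\tfalse$, so that $\phi$ is unsatisfiable and hence $\exists B.\phi$ and $\phi[\psi/B]$ are both equivalent to $\tfalse$ for \emph{every} $\psi$, while no $\psi$ can satisfy both $\ttrue\implies\psi$ and $\psi\implies\tfalse$. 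So you have proved the direction that the paper actually uses downstream (membership of the interval $[E\wedge\neg F,\;\neg E\vee G]$ suffices), but the converse requires either the side conditions relativized to $E\wedge\forall\bar y.(F\vee G)$ or an explicit acknowledgment that the stated ``iff'' only holds when $\exists B.\phi$ is satisfiable in the relevant models; a correct proof cannot treat both directions as symmetric pointwise manipulations.
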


\noindent
% to \cref{l:choice}, 
For formulas in simple normal form
a Hilbert choice operator thus is given by:
\begin{equation}
{\small {\mathcal{H}}_B\phi	= \neg E\lor G}	\label{def:simple_hilbert}
\end{equation}
% In general, though, the formula $\exists B.\phi$ need not have an equivalent 
% first-order formula.
% The given definition $\mathcal{H}_B\phi$ provides us with 
--- which is the \emph{weakest} $\psi$ 
for which $\exists B.\phi$ is equivalent to $\phi[\psi/B]$.

\begin{example}\label{ex:weakest}
% \itodo{update for new example}
For the invariant from \cref{ex:release},
% as arising from the running example before the last statement. 
the weakest precondition w.r.t.\ the second statement amounts to:
% \[
$
{\small\exists B_1.\forall x,p.\neg\Conf(x,p)\vee\neg B_1(x,p)}
$
which is $\ttrue$ for any formula $\Psi$ for $B_1$ (with free $x,p$) implying
$\neg\Conf(x,p)$.
\qed
\end{example}

\noindent The \emph{strongest} solution according to \cref{ex:weakest} thus is that the PC chair 
decides to
assign papers to \emph{no} PC member. 
While guaranteeing safety, this choice is not very useful. 
The \emph{weakest} choice on the other hand, provides us here with a decent
strategy. In the following we therefore will aim at constructing as \emph{weak} strategies
as possible.
% situations where weakest choices can be constructed.

Ackermann's Lemma gives rise to a nontrivial class of safety games where existential 
SO quantifier elimination succeeds.
% to our setting, we find that a Hilberts Choice Operator exists
% for several fragments of First Order Safety Games:
% For a given substitution $\theta$,
% FO safety game $G$ and a predicate $B \in \relsB$,
% 
We call $B \in \relsB$ \emph{ackermannian} in the substitution $\theta$ iff 
for every predicate $R \in \relsS$, if $\theta(R)$ contains $B$ literals, 
$\theta(R)$ is quantifierfree and its CNF does not contain clauses 
with both positive and negative $B$ literals.

\begin{theorem}\label{t:positive}
Assume we are given a FO Safety Game $\T$ where all substitutions contain nonnested quantifiers only, and a universal inductive invariant $\Psi$.
Assume further that the following holds:
\begin{enumerate}
\item 	All predicates $B$ under the control of safety player $\pB$ are ackermannian in
		all substitutions $\theta$;
\item	For every $\pB$-edge $e=(u,\theta,v)$, 
	every clause of $\Psi[v]$ contains at most one literal with a
	predicate $R$ where $\theta(R)$ has a predicate from $\relsB$.
\end{enumerate}
Then the weakest FO strategy for safety player $\pB$ can be effectively computed
for which $\Psi$ is inductive.
\end{theorem}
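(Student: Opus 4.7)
The plan is to construct, edge-by-edge, the weakest FO strategy for each $\pB$-controlled input predicate. Fix a $\pB$-edge $e = (u, \theta, v)$ with input predicate $B_e$. Since $\Psi$ is assumed inductive, $\Psi[u] \implies \exists B_e.(\Psi[v]\theta)$ already holds, and I seek the weakest FO formula $\sigma(B_e)$ such that $\Psi[u] \implies (\Psi[v]\theta)[\sigma(B_e)/B_e]$. By \eqref{def:simple_hilbert}, this weakest formula is Hilbert's choice $\mathcal{H}_{B_e}(\Psi[v]\theta) = \neg E \vee G$, provided $\Psi[v]\theta$ can be brought into the simple normal form $E \wedge (\forall \bar y.\, F \vee B_e\bar y) \wedge (\forall \bar y'.\, G \vee \neg B_e\bar y')$ with $E, F, G$ free of $B_e$.

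The central technical step is establishing this normal form. I would start by writing $\Psi[v]$ in prenex CNF as $\forall \bar y.\, \bigwedge_i C_i(\bar y)$ and applying $\theta$; the nonnested-quantifier hypothesis on $\theta$ then lets us pull all introduced quantifiers to the prefix, yielding again a universal formula. By condition~(2), each $C_i$ contains at most one literal $L_i$ whose predicate $R$ has $\theta(R)$ involving a $\relsB$ predicate; the substitutions of the remaining literals are $B_e$-free. By condition~(1), $\theta(R)$ is quantifier-free and its CNF $\bigwedge_j D_j$ has every $D_j$ of uniform $B_e$-polarity. Distributing the conjunction arising from $L_i\theta$ over the $B_e$-free disjuncts of $C_i\theta$ (applying de Morgan if $L_i$ is negated) yields a conjunction of clauses, each of which is either $B_e$-free or contains $B_e$-literals of a single polarity. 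Regrouping these clauses --- $B_e$-free ones into $E$, positive-$B_e$ ones into the $F \vee B_e\bar y$ part, negative-$B_e$ ones into the $G \vee \neg B_e\bar y'$ part --- puts $\Psi[v]\theta$ in simple normal form, and \cref{l:choice} then identifies $\sigma(B_e) = \neg E \vee G$ as the weakest FO choice.

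Collecting these strategies across all $\pB$-edges gives the globally weakest FO winning strategy preserving inductiveness of $\Psi$. All steps --- CNF conversion, substitution, quantifier pulling, de Morgan, and the application of the Hilbert choice operator --- are effective, so the construction is algorithmic. The main obstacle I foresee lies in the normalization when $L_i$ is a negated literal $\neg R(\bar t)$ and $\theta(R)$'s CNF contains clauses of \emph{both} $B_e$-polarities (as permitted by the Ackermannian condition): naively converting $\neg \theta(R)$ to CNF by de Morgan can produce clauses mixing $B_e$-polarities, which would violate the simple normal form required by \cref{l:choice}. Handling this cleanly will require either the auxiliary-variable separation trick underlying the proof of \cref{l:normal} to split such clauses back into pure-polarity conjuncts, or an argument exploiting condition~(2)'s one-literal-per-clause guarantee to show that any mixed-polarity clause that arises can be absorbed into $E$ without affecting the computation of $\mathcal H_{B_e}$.
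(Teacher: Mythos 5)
Your proposal follows essentially the same route as the paper's own proof: write $\Psi[v]$ in universal prenex CNF, use condition~(2) to isolate the single $B$-relevant literal per clause, use the ackermannian condition to obtain clauses of uniform $B$-polarity, fold the several $B\bar y_i$-literals of each clause into a single $\forall\bar y$-guarded $B\bar y$ via the disequality trick of \cref{l:normal}, and read off the weakest choice $\neg E\vee G$ from Ackermann's lemma (\cref{l:choice}). The obstacle you flag at the end --- a negated literal $\neg R(\bar t)$ whose $\theta(R)$ has pure clauses of \emph{both} polarities, so that de Morgan plus redistribution into CNF can create mixed-polarity clauses --- is a genuine subtlety, but the paper's proof does not treat it either: it simply asserts that every clause $c$ of $\Psi[v]$ has $\theta(c)$ containing only positive or only negative $B$-occurrences, so on this point your proposal matches (and is no less careful than) the published argument.
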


\begin{proof}
Consider an edge $(u,\theta,v)$ where the predicate $B$ under control of safety player occurs
in $\theta$. Assume that $\Psi[v]=\forall\bar z.\Psi'$ where $\Psi'$ is quantifierfree and
in conjunctive normal form. Since $\theta$ is ackermannian and due to the restrictions given
for $\Psi'$, $\Psi'$ can be written as $\Psi' = \Psi_0\land\Psi_1\wedge\Psi_2$ where
$\Psi_0,\Psi_1,\Psi_2$ are the conjunctions of clauses $c$ of $\Psi'$  where $\theta(c)$
contains none, only positive or only negative occurrences of $B$-literals, respectively.
In particular, $\theta(\Psi_0)$ is a FO formula without nested quantifiers.
The formula $\theta(\Psi_1)$ is equivalent to a conjunction of formulas of the form
% \[
$
F\vee B\bar y_1\vee \ldots B\bar y_r
$
% \]
where $F$ has non-nested quantifiers only, which thus are equivalent to 
\[
\forall\bar y.F\vee(\bar y_1\neq\bar y)\wedge\ldots\wedge(\bar y_r\neq\bar y)\vee B\bar y
\]
Likewise, 
$\theta(\Psi_2)$ is equivalent to a conjunction of formulas of the form
% \[
$
G\vee\neg B\bar y_1\vee \ldots\neg B\bar y_r
$
% \]
where $G$ has non-nested quantifiers only, which thus are equivalent to 
\[
\forall\bar y.G\vee(\bar y_1\neq\bar y)\wedge\ldots\wedge(\bar y_r\neq\bar y)\vee\neg B\bar y
\]
Therefore, we can apply Ackermann's lemma to obtain a formula $\bar\Psi$ in \uefol
equivalent to $\exists B.\,\theta(\Psi[v])$. Likewise, we obtain a weakest FO formula 
$\phi$ for $B$ so that $\theta(\Psi[v])[\phi/B] = \bar\Psi$.
Since $\Psi[u]$ only contains universal quantifiers, $\Psi[u]\implies\bar\Psi$ 
is effectively decidable.
\qed
\end{proof}
\begin{example}
Consider the leader election protocol from \cref{e:leader_election},
together with the invariant from \cite{padon2016ivy}. Therein, the predicate
$B$ is ackermannian, and
$\Msg$ appears once in two different clauses of the invariant.
Thus by \cref{t:positive}, the weakest safe strategy for player $\pB$ can be effectively computed.
Our solver, described in \cref{s:implementation} finds it to be
\[
B(a, i, b) \coloneqq \neg E\lor\neg\Next(a,b)\lor\left( 
\begin{array}{l}
\forall n.\ (i \geq n \lor b \neq i)\ \land\\
\forall n.\ (\neg \Btw(b,i,n) \lor i > n)
\end{array}
\right)
\]
%
% \Btw (x,a,x) = \ffalse
% \Btw (x,b,b) = \ffalse
%
where $E$ axiomatizes the ring architecture, i.e., the predicate
$\Btw$ as the transitive closure of $\Next$ together with the predicate $\leq$.
The given strategy is weaker than the intuitive (and also safe) strategy of $(i = a)$,
and allows for more behaviours --- for example $a$ can send messages that are
greater than its own id in case they are not greater than the ids of nodes along the
way from $b$ back to $a$. 
% It also allows any $i$ in case that $b$ is not the next node
% in the ring, since messages are only sent if this is the case.
\qed
\end{example}

In general, though, existential SO quantifier elimination must be applied
to universally quantified formulas which cannot be brought into simple normal form.
In particular, we provide
a sequence of \emph{candidates} for the Hilbert choice operator which 
provides the \emph{weakest} Hilbert choice operator --- whenever it is FO definable.
\ignore{
The candidates in this sequence are then correct for 
larger and larger universes.
}
Consider a formula $\phi$ in normal form \eqref{def:normal}.
% , i.e., given by
% \[
% {\small
 % E\wedge
	% (\forall y.F\vee By)\wedge
	% (\forall y'.G\vee\neg By')\wedge
	% (\forall yy'.H\vee By\vee\neg By')}
% \]
%
Therein, the sub-formula $\neg H$ can be understood as a binary predicate between 
the variables $\bar y'$ and
$\bar y$ which may be composed, iterated, post-applied to predicates on $\bar y'$ and pre-applied to
predicates on $\bar y$.
We define $H^k$, $k\geq 0$, with free variables from $\bar y,\bar y'$ by
\[
{\small
\begin{array}{lll}
H^0 &=& \bar y\neq \bar y'	\\
H^k &=&	\forall \bar y_1.H^{k-1}[\bar y_1/\bar y']\vee H[\bar y_1/\bar y]\qquad\text{for}\;k>0
\end{array}}
\]
We remark that by this definition, 
{\small
\begin{eqnarray*}
H^{k+l} &=& \forall y_1.H^{k}[\bar y_1/\bar y']\vee H^l[\bar y_1/\bar y]
\end{eqnarray*}}
for all $k,l\geq 0$.
% This follows by associativity of composition of relations.
%
Furthermore, we define the formulas:
{\small
\begin{eqnarray*}
G\circ H^k	&=& \forall \bar y.G[\bar y/\bar y']\vee H^k	\\
G\circ H^k\circ F 
		&=& \forall\bar  y'.(G\circ H^k)\vee F[\bar y'/\bar y]
\end{eqnarray*}}
Then, we have:
\begin{lemma}\label{l:elim}
If $\exists B.\phi$ is FO definable, then it is equivalent to 
$E\wedge\bigwedge_{i=0}^k G\circ H^i\circ F$ for some $k\geq 0$.
\ignore{
\begin{enumerate}
\item	For all $k\geq 0$, $\exists B.\phi$ implies $E\wedge G\circ H^k\circ F$;
\item	If $\exists B.\phi$ implies some % universal 
	FO formula $\phi$, then for some $k\geq 0$, 
	$E\wedge\bigwedge_{i=0}^k G\circ H^i\circ F$ implies $\exists B.\phi$;
\item	If $\exists B.\phi$ is equivalent to some FO formula,
	then it is equivalent to 
	$E\wedge\bigwedge_{i=0}^k G\circ H^i\circ F$ for some $k\geq 0$.
	\qed
\end{enumerate}}
\end{lemma}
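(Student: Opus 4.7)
My plan is to first establish a semantic equivalence between $\exists B.\phi$ and the \emph{infinite} conjunction $E\wedge\bigwedge_{k\geq 0}G\circ H^k\circ F$, and then to appeal to Gödel's compactness theorem --- exactly as in the proof of \cref{t:goedel} --- to cut it down to a finite prefix whenever $\exists B.\phi$ happens to be FO definable.

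For the semantic step, I would read off from the normal form \eqref{def:normal} that the three $B$-clauses of $\phi$ force exactly the following constraints on $B$: the inclusions $\{\bar y:\neg F(\bar y)\}\subseteq B$ and $B\subseteq\{\bar y':G(\bar y')\}$, together with closure of $B$ under ``preimage along $\neg H$'', i.e.\ $\neg H(\bar y,\bar y')\wedge B\bar y'\implies B\bar y$. Hence, in any structure satisfying $E$, there is a choice for $B$ making $\phi$ true iff the \emph{smallest} such $B$ --- the set $B^\ast$ of tuples $\bar y$ from which some tuple satisfying $\neg F$ is reachable in finitely many $\neg H$-steps --- is contained in $G$. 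A straightforward induction on $k$ should then show that $\neg H^k(\bar y,\bar y')$ coincides with the $k$-fold relational composition of $\neg H$ (with $\neg H^0$ being equality), and consequently that $G\circ H^k\circ F$ is equivalent to $\forall \bar y\,\bar y'.\,G(\bar y)\vee H^k(\bar y,\bar y')\vee F(\bar y')$. The statement $B^\ast\subseteq G$ is then precisely $\bigwedge_{k\geq 0}G\circ H^k\circ F$, which yields both directions of the semantic equivalence; the ``$\Leftarrow$'' direction picks $B:=B^\ast$ as the witness, while the ``$\Rightarrow$'' direction uses minimality of $B^\ast$ among all valid witnesses.

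The final step is then routine: suppose $\exists B.\phi$ is equivalent to some FO formula $\psi$. By the previous paragraph, $\{E\}\cup\{G\circ H^k\circ F:k\geq 0\}\models\psi$ (and conversely), so by compactness some finite subset already entails $\psi$; taking $k$ to be the largest index in that subset, and using that a conjunction is at least as strong as each of its conjuncts, gives the desired equivalence of $\psi$ with $E\wedge\bigwedge_{i=0}^{k}G\circ H^i\circ F$. I expect the main obstacle to lie entirely in the algebraic bookkeeping: unfolding the nested substitutions in the definitions of $H^k$, $G\circ H^k$ and $G\circ H^k\circ F$ cleanly enough to verify the two equivalences above, and justifying that $B^\ast$ really is the minimum valid $B$ in every structure (so that the semantic step is an ``iff'' and not just one direction). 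Once that bookkeeping is in place, compactness does the rest.
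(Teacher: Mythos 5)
Your proof is correct and takes essentially the approach the paper intends: the paper's own (commented-out) refinement of the lemma decomposes it into exactly your two steps, namely that $\exists B.\phi$ implies each $E\wedge G\circ H^i\circ F$ and that FO definability forces a finite cut-off via a compactness argument. Your identification of the least witness $B^\ast$ as the $\neg H$-reachability closure of the $\neg F$-tuples, giving the semantic equivalence of $\exists B.\phi$ with the infinite conjunction, correctly supplies the details of that outline.
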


\noindent
Starting from $G$ and iteratively composing with $H$, provides us with a sequence of 
candidate SO Hilbert choice operators. Let
{\small
\begin{eqnarray}
\gamma_k	&=& \begin{array}{l}
		\neg E\vee\bigwedge_{i=0}^k (G\circ H^i)[\bar y/\bar y']
		\end{array}\label{def:gamma_k}
\end{eqnarray}}
for $k\geq 0$. The candidate $\gamma_k$ 
takes all $i$fold compositions of $H$ with $i\leq k$ into account.
Then the following holds:

\begin{lemma}\label{l:hilbert}\label{l:approx_hilbert}
For every $k\geq 0$,
\begin{enumerate}
\item	$\phi[\gamma_k/B]$ implies $\exists B.\phi$;
%
% NEW!!
%
\item	If $\exists B.\phi$ is equivalent to $\phi[\psi/B]$ for some FO formula $\psi$,
	then $\psi\implies\gamma_k$.
\ignore{
%
% MOVED AS SEPARATE LEMMA TO SECTION APPROX
%
\item	If $k> N^{r}$, then $\phi'[\gamma_k/B]$ is equivalent to $\exists B.\phi$
	in all universes up to size $N$;
}
\item	$\gamma_{k+1}\implies\gamma_{k}$, and
	if $\gamma_k\implies\gamma_{k+1}$, then 
	% $\bigwedge_{i=0}^k G\circ H^i\implies G\circ H^{k+1}$, then	\\
	$\phi[\gamma_k/B]=\exists B.\phi$.
\end{enumerate}
\end{lemma}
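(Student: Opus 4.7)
The three parts are driven by the semantic reading of $\gamma_k(\bar y)$: modulo $E$, it says that no $\neg H$-chain of length at most $k$ starts outside $G$ and ends at $\bar y$. This is the finite approximation of the largest $B$ that can serve as a valid choice for $\exists B.\phi$, namely a $B$ contained in $G$ and closed under $\neg H$-predecessors. All three parts exploit this reading.

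Part 1 is immediate by the SO existential introduction remark preceding the lemma: for any formula $\chi$, $\phi[\chi/B] \implies \exists B.\phi$, and we specialise to $\chi = \gamma_k$.

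For Part 3, the monotonicity $\gamma_{k+1} \implies \gamma_k$ is direct from the definition, since $\gamma_{k+1}$ is the conjunction defining $\gamma_k$ together with the extra conjunct $(G \circ H^{k+1})[\bar y/\bar y']$. For the fixpoint clause, under $\gamma_k \implies \gamma_{k+1}$ (and hence $\gamma_k$ and $\gamma_{k+1}$ equivalent) I aim to establish the converse $\exists B.\phi \implies \phi[\gamma_k/B]$ by showing that $\gamma_k$ validly instantiates $B$ in every model of $\exists B.\phi$, i.e.\ satisfies the three clauses of the normal form. The containment $\gamma_k \implies G$ holds because $G \circ H^0$ is equivalent to $G$ and is a conjunct of $\gamma_k$ modulo $E$; the inclusion $\neg F \implies \gamma_k$ holds because $\exists B.\phi$ excludes any $\neg H$-chain from $\neg G$ to $\neg F$; and the predecessor-closure clause $\gamma_k(\bar y') \land \neg H(\bar y, \bar y') \implies \gamma_k(\bar y)$ is exactly what the fixpoint delivers, since any $\neg H$-chain to $\bar y$ of length at most $k$ extends along $\neg H(\bar y, \bar y')$ to a $\neg H$-chain to $\bar y'$ of length at most $k+1$, which $\gamma_{k+1}(\bar y')$ forbids. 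Together with Part 1 this yields $\phi[\gamma_k/B] = \exists B.\phi$.

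For Part 2, consider any FO formula $\psi$ with $\phi[\psi/B]$ equivalent to $\exists B.\phi$. In any model where $\exists B.\phi$ holds, $\phi[\psi/B]$ also holds, so the set $B_\psi = \{\bar y : \psi(\bar y)\}$ is itself a valid choice of $B$: contained in $G$ and closed under $\neg H$-predecessors. Iterating the closure, no element of $B_\psi$ can be the endpoint of a $\neg H$-chain from $\neg G$, so $\psi$ entails every conjunct $(G \circ H^i)[\bar y/\bar y']$ and therefore $\psi \implies \gamma_k$. The step I expect to require the most care is extending this from models of $\exists B.\phi$ to arbitrary models: any putative counter-model witnessing $\psi(\bar y_0) \land \neg\gamma_k(\bar y_0)$ exhibits a $\neg H$-chain of length at most $k$ from some $\bar z$ with $\neg G(\bar z)$ to $\bar y_0$, and I expect to use the predecessor-closure clause of $\phi[\psi/B]$ to propagate $\psi$ back along this chain, forcing $\psi(\bar z)$ and hence a failure of the clause $\forall \bar y'.\, G \lor \neg\psi(\bar y')$, which breaks the assumed equivalence of $\phi[\psi/B]$ with $\exists B.\phi$ at that model.
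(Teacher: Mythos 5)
The paper suppresses its own proof of this lemma (the draft kept in the source covers only parts (1) and (3), the latter by an explicit syntactic computation of $\phi[\gamma_k/B]$), so your argument must largely stand on its own. Parts (1) and (3) do. Part (1) is the generic substitution remark, exactly as in the paper. For part (3), monotonicity is immediate, and your semantic route to the fixpoint clause --- showing that in every model of $\exists B.\phi$ the relation defined by $\gamma_k$ contains $\neg F$ (because the actual witness $B$ sits between $\neg F$ and $G$ and is closed under $\neg H$-predecessors, so no $\neg H$-chain from $\neg G$ reaches $B$, hence none reaches $\neg F$), is contained in $G$ (the $i=0$ conjunct), and is itself predecessor-closed (here using $\gamma_k\equiv\gamma_{k+1}$) --- is a clean model-theoretic reformulation of the paper's clause-by-clause calculation; combined with part (1) it yields the equivalence. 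That is a legitimate and arguably more transparent alternative.

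Part (2) is where the gap is, and you have located it yourself but not closed it. Your argument inside models of $\exists B.\phi$ is fine: there $B_\psi$ is a valid witness, so it avoids all endpoints of $\neg H$-chains starting in $\neg G$, whence $\psi\implies\gamma_k$ pointwise for every $k$. But the proposed extension to arbitrary models fails: propagating $\psi$ back along the chain to some $\bar z$ with $\neg G(\bar z)$ only shows that the clause $\forall\bar y'.\,G\vee\neg\psi(\bar y')$, and hence $\phi[\psi/B]$, is false in that model. This contradicts the assumed equivalence only if $\exists B.\phi$ is true there; in a model where $\exists B.\phi$ is false, $\phi[\psi/B]$ being false is exactly what the equivalence demands, and no contradiction arises. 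The implication $\psi\implies\gamma_k$ as an unrestricted validity can in fact fail: already in the Ackermann case ($H$ absent, $E=\ttrue$) the formula $\psi=G\bar y\vee\exists\bar z.(\neg F\bar z\wedge\neg G\bar z)$ satisfies $\phi[\psi/B]\equiv\exists B.\phi$ in every model, yet in a model containing a point where both $F$ and $G$ fail it holds everywhere while $\gamma_k=G$ does not. So part (2) must be read, as the paper implicitly does (compare its statement of Ackermann's lemma), as an implication relative to models of $\exists B.\phi$ --- in which case the first half of your paragraph already proves it and the ``extension'' step should simply be dropped; as an attempt to prove the unrestricted reading, the step cannot be repaired.
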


\ignore{
\begin{proof}
The first statement follows by definition. 
For a proof of the second statement, assume that 
$\gamma_k=\gamma_{k+1}$. This means that 
$\bigwedge_{i=0}^k (\neg E\vee G)\circ H^i = \bigwedge_{k=0}^{k+1}(\neg E\vee G)\circ H^{k+1}$.
We therefore conclude that in particular,
Then $\bigwedge_{i=0}^k (\neg E\vee G)\circ H^i =\bigwedge_{i=0}^r (\neg E\vee G)\circ H^i$ 
for all $k\leq r$ --- implying
that $\exists B.\phi = E\wedge\bigwedge_{i=0}^k G\circ H^i\circ F$ holds.
Furthermore, we calculate:
\[
{\small
\begin{array}{lll}
\phi[\gamma_k/B]	&=& E\wedge
	(\bigwedge_{i=0}^k \forall y.(\neg E\vee G\circ H^i[y/y']\vee F)\wedge	\\
	& &(\forall y'.G\vee \neg\bigwedge_{i=0}^k(\neg E\vee G)\circ H^i)\wedge	\\ 
	& &(\forall yy'. H\vee\bigwedge_{i=0}^k(\neg E\vee G)\circ H^i[y/y']\vee
		\neg(\bigwedge_{i=0}^k(\neg E\vee G)\circ H^i)) 	\\
\end{array}}
\]
Let us first concentrate on the third clause. 
Since $(\neg E\vee G)\circ H^0 =\neg E\vee G$, the negated conjunction has $E\wedge\neg G$ 
as a disjunct.
Therefore, the conjunction of that clause with $E$ is equivalent to $E$. % $\ttrue$.
For the fourth clause, we calculate:
\[
{\small
\begin{array}{lll}
\forall\bar  y. H\vee\bigwedge_{i=0}^k (\neg E\vee G\circ H^i[\bar y/\bar y'] &=&
	\bigwedge_{i=0}^k G\circ H^{i+1} 	\\
&=& \bigwedge_{i=1}^{k+1} (\neg E\vee G)\circ H^i	\\
&\leftarrow& \bigwedge_{i=0}^k (\neg E\vee G)\circ H^i
\end{array}}
\]
Accordingly, the fourth clause also must necessarily be equal to $\ttrue$.
We conclude that
\[
{\small
\begin{array}{lll}
\phi[\gamma_k/B]	&=& E\wedge
	\bigwedge_{i=0}^k \forall y.G\circ H^i[\bar y/\bar y']\vee F	\\
	&=& \exists B.\phi
\end{array}}
\]
% \qed
which is the statement we wanted to show.
\end{proof}
}
	
\noindent
As a result, the $\gamma_k$ form a decreasing sequence of candidate strategies
for safety player $\pB$.
%
% A nontrivial sufficient condition for ultimate stability is provided in 
% \cref{a:hilbert}.
%
We remark that due to statement (2), the sequence $\gamma_k$ results in the \emph{weakest}
Hilbert choice operator --- whenever it becomes stable.
\ignore{
 indeed implies that , though, that the condition 
$\bigwedge_{i=1}^k G\circ H^i =\bigwedge_{i=1}^{k+1} G\circ H^i$
% $\gamma_{k+1} = \gamma_k$ 
from statement (2) of 
lemma \ref{l:hilbert}, is \emph{sufficient} but not necessary for $\exists B.\phi'$ 
to be FO definable.
}

\noindent
We close this section by noting that there is a SO Hilbert choice operator 
which can be expressed in SO logic itself. The following theorem is related 
to Corollary 6.20 of \cite{gabbay2008second}, but avoids the explicit use
of fixpoint operators in the logic.

\begin{restatable}{theorem}{tsochoice}\label{t:SO_choice}
The weakest Hilbert choice operator $\mathcal{H}_B\phi$ for the universal formula \eqref{def:normal}
is definable by the SO formula:
\[
{\small
\neg E\lor\exists B. B\bar y\wedge (\forall\bar y'.G\vee\neg B\bar y')\wedge
		(\forall\bar y\bar y'.H\vee B\bar y\vee\neg B\bar y')
}
\]
\end{restatable}

\noindent
The weakest Hilbert choice operator itself can thus be obtained
by SO existential quantifier elimination.
The proof is by rewriting the formula and can be found in \cref{a:tsochoiceproof}.
\section{Implementation}\label{s:implementation}\label{s:experimental}

We have extended our solver NIWO for FO transition systems \cite{DBLP:conf/csfw/0008SZ18} 
to a solver for FO safety games which is able to verify inductive universal FO invariants 
and extract corresponding winning strategies for safety player $\pB$.
It has been packaged and published under \cite{helmut_seidl_2019_3514277}.
Our solver supports \emph{inference} of inductive invariants if the given candidate invariant 
is not yet inductive. 
For that it relies on the abstraction techniques from \cite{DBLP:conf/csfw/0008SZ18}
to strengthen arbitrary FO formulas by means of FO formulas using universal FO quantifiers
only.
For the simplification of FO formulas as well as for satisfiability of BSR formulas,
it relies on the EPR algorithms of the automated theorem prover Z3 \cite{de2008z3}.

We evaluate our solver on three kinds of benchmark problems.
First, we consider FO games with safety properties such as the running example
``Conference, Safety'' from \cref{fig:easychair,ex:release}.
For all of its variants, the fixpoint iteration \eqref{def:fixpoint} terminates in less than one second 
with a weakest winning strategy (w.r.t.\ the found inductive invariant)
whenever possible.
The second group ``Leader Election'' considers variants of the leader election protocol from 
\cref{e:leader_election}, initially taken from
\cite{padon2016ivy}.
Since the inductive invariant 
implies some transitively closed property, it cannot be inferred
automatically by our means. Yet, our solver succeeds in proving the
invariant from
\cite{padon2016ivy} inductive, and moreover, infers a FO definition
for the message to be forwarded to arrive at a single leader.
The third group ``Conference, NI'' deals with noninterference 
for variants of the conference management example
where the acyclic version has been obtained by unrolling the loop twice.
The difference between the \emph{stubborn} and \emph{causal} settings is
the considered angle of attack (see \cite{ATVA16} for an in-depth explanation).
In the setting of stubborn agents, the attackers try to break
the Noninterference property with
no specific intent of working together.
Here, the solver infers inductive invariants together 
with winning FO strategies (where possible) in $5-7$ seconds.
The setting of causal agents is inherently more complex as it
allows for groups of unbounded size that are working together 
to extract secrets from the system.
This allows for elaborate attacks where multiple
agents conspire to defeat noninterference \cite{DBLP:conf/ccs/Finkbeiner0SZ17}. 
The weakest strategy that is safe for stubborn agents 
($\neg \Conf(x,p)$ as a strategy for $B_1$) can can no longer be proven correct ---
instead the solver finds a counterexample for universes of size $\geq 5$.
To infer an inductive invariant and a safe strategy for causal agents, 
multiple iterates of the fixpoint iteration from \cref{def:fixpoint} must be
computed. Each iteration requires formulas to be brought into conjunctive normal form
--- possibly increasing formula size drastically.
To cope with that increase, 
formula simplification turns out not to be sufficient.
We try two different approaches to overcome this challenge: 
First, we provide the solver with parts of the inductive invariant, so fewer strengthening
steps are needed.
Given the initial direction, inference terminates much faster and provides us with a useful strategy.
For the second approach, we do not supply an initial invariant, but
accelerate fixpoint iteration by further strengthening of formulas. 
This enforces termination while still verifying safety. 
The extracted strategy, though, is much stronger 
and essentially rules out all intended behaviours 
of the system.

\begin{figure}[bt]
% \begin{adjustbox}{angle=90}
\centering
{\small
\begin{tabular}{| l | r | r || r | r | r | r |}
\hline
Name & Mode & Size & Invariant & \#Str. & Max. inv. & Time \\
\hline
\hline
Conference, Safety & synthesis & 6 & inferred & 4 & 50 & 736 ms\\
\hline
\hline
Leader Election & verification & 4 & inductive & 0 & 42 & 351 ms\\
\hline
Leader Election & synthesis & 4 & inductive & 0 & 42 & 346 ms\\
\hline
\hline
Conference, NI, stubborn & verification & 6 & inferred & 4 & 850 & 6782 ms\\
\hline
Conference, NI, stubborn & synthesis & 6 & inferred & 4 & 850 & 6817 ms\\
\hline
\hline
% strategy not false, no CNF approx
Conference acyclic, NI, causal & synthesis & 8 & inferred & 4 & 137 & 1985 ms\\
\hline
% changes with higher approx elim, strategy false
Conference, NI, causal & verification & 11 & counterex. & 7 & - & 2114 ms\\
\hline
Conference, NI, causal & synthesis & 11 & inferred & 2 & 102 & 2460 ms\\ 
\hline
Conference, NI, causal, approx. & synthesis & 11 & inferred & 8 & 5090 & 3359 ms\\
% counterexample
\hline
\end{tabular}}
% \end{adjustbox}
\caption{\label{fig:data}Experimental Results}
\end{figure}

All benchmarks were run on a workstation running Debian Linux on
an Intel i7-3820 clocked at $3.60$ GHz with $15.7$ GiB of RAM.
The results are summarized in \cref{fig:data}.
The table gives the group and type of experiment as well as the size of
the transition system in the number of nodes of the graph.
For the examples that regard Noninterference, the
agent model is given.
For verification benchmarks, the solver either proves the given invariant inductive
or infers an inductive invariant if the property is not yet inductive.
For synthesis benchmarks, it additionally extracts a universal formula for each $B \in \relsB$ 
to be used as a strategy.
The remaining columns give the results of the solver:
Could the given invariant be proven inductive, could an inductive strengthening be found, 
or did the solver find a counterexample violating the invariant?
We list the number of times any label of the invariant needed to be 
strengthened during the inference algorithm, the size of the largest label formula 
of the inferred invariant measured in the number of nodes of the syntax tree
and the time the solver needed in milliseconds (averaged over $10$ runs). 

Altogether, the experiments confirm that verification of provided invariants
as well as synthesis of inductive invariants and winning strategies
is possible for nontrivial transition systems with safety as well as noninterference objectives.

\section{Related Work}\label{s:refs}

In AI, First Order Logic has a long tradition for representing
potentially changing states of the environment \cite{brachman1992knowledge}.
First-order transition systems have then been used
to model reachability problems
that arise in robot planning
(see, e.g., chapters 8-10 in \cite{russell2016artificial}). 
The system GOLOG~\cite{golog}, for instance, is a
programming language based on 
FO logic.
A GOLOG program specifies the behavior of the agents in the system.
The program is then evaluated with a theorem prover, and thus
assertions about the program can be checked for validity.
Automated synthesis of predicates to enforce safety of the resulting system 
has not yet been considered.

There is a rich body of work on \emph{abstract state machines} (ASMs)~\cite{gurevich2018evolving}, i.e., state machines whose states are first-order structures. ASMs have been used to give comprehensive specifications of programming languages such as Prolog, C, and Java, and design languages, like UML and SDL (cf.~\cite{Boerger2003history}). A number of tools for the verification and validation of ASMs are available~\cite{Boerger2003tools}. Known decidability results for ASMs require, however, on strong restrictions such as sequential nullary ASMs~\cite{DBLP:phd/dnb/Spielmann00}.

In \cite{padon2015decentralizing,ball2014vericon}, it is shown that the
semantics of switch controllers of software-defined networks 
as expressed by \emph{Core SDN} can be nicely compiled into 
FO transition systems. The goal then is to use this translation 
to verify given invariants by proving them inductive.
Inductivity of invariants is checked by means of the theorem prover Z3 \cite{de2008z3}.
The authors report that, if their invariants are not already inductive,
a single strengthening, corresponding to the computation of $\Psi^{(1)}$
is often sufficient.
In \cite{Sagiv16-inferring}, the difficulty of inferring 
universal inductive invariants is investigated
for classes of transition systems whose
transition relation is expressed by FO logic formulas 
over theories of data structures.
The authors show that inferring universal inductive 
invariants is decidable when the transition relation is expressed by
formulas with unary predicates and a single binary predicate
restricted by the theory of linked lists
and becomes undecidable as soon as the binary symbol is not restricted
by background theory.
By excluding the binary predicate, this result is related to our 
result for transition systems with monadic predicates, equality and disequality, but
neither $\pA$- nor $\pB$-predicates. 
In \cite{karbyshev2017property}, an inference method is provided for universal invariants
as an extension of Bradley’s PDR/IC3 algorithm for inference of propositional invariants
\cite{bradley2011sat}.
The method is applied to variants of FO transition systems (no games) within a fragment of 
FO logic which enjoys the finite model property and is decidable.
Whenever it terminates, it either returns a universal invariant which
is inductive, or a counter-example.
% Ivy
This line of research has led to the tool
\textsc{Ivy} which generally applies FO predicate logic for the verification of
parametric systems \cite{padon2016ivy,mcmillan2018deductive}.
Relying on a language similar to \cite{padon2015decentralizing,ball2014vericon}, 
it meanwhile has been used, e.g., for the verification of network protocols such as 
leader election in ring topologies and the PAXOS protocol \cite{padon2017paxos}. 

% Game theory: Reachability Games?
In \cite{DBLP:conf/ccs/Finkbeiner0SZ17,ATVA16,DBLP:conf/csfw/0008SZ18}, 
hypersafety properties such as 
noninterference are studied for \emph{multi-agent workflows}.
These workflows are naturally generalized by our notion of FO transition systems.
The transformation in \cref{a:ni} for reducing noninterference to 
universal invariants originates from \cite{DBLP:conf/csfw/0008SZ18} where also
an approximative approach for inferring inductive invariants is provided. 
When the attempt fails, a counter-example can be extracted ---
but might be spurious.

All works discussed so far are concerned with verification rather than synthesis.
For synthesizing controllers for systems with an infinite state space, several approaches have been introduced
that automatically construct, from a symbolic description of a given concrete game, a finite-state abstract game~\cite{cegarcontrol,dealfaro07,DBLP:conf/fsttcs/DimitrovaF08,10.1007/978-3-642-33365-1_9,6987617}.
The main method to obtain the abstract state space is predicate abstraction, which
partitions the states according to the truth values of a set of predicates. States
that satisfy the same predicates are indistinguishable in the abstract game.
The abstraction is iteratively refined by introducing new predicates.
Applications include the control of real-time systems~\cite{10.1007/978-3-642-33365-1_9} and
the synthesis of drivers for I/O devices~\cite{6987617}.
In comparison, our approach provides a general modelling framework
of First Order Safety Games to unify different applications
of synthesis for infinite-state systems.

\section{Conclusion}\label{s:conclusion}

We have introduced \emph{First Order Safety Games} 
as a model for reasoning about games on parametric systems
where attained states are modeled as FO structures.
We showed that this approach allows to model
interesting real-world synthesis problems from the domains network protocols and 
informationflow in multiagent systems.
We examined the important case where all occurring predicates are monadic or nullary
and provided a complete classification into decidable and undecidable cases.
For the non-monadic case, we concentrated on \emph{universal} FO safety properties.
We provided techniques for certifying 
safety and also designed methods for synthesizing FO definitions
of predicates as strategies to enforce the given safety objective.
We have implemented our approach and succeeded to infer contents of
particular messages in 
the leader election protocol from \cite{padon2016ivy} 
in order to prove the given invariant inductive.
Our implementation also allowed us to synthesize predicates 
for parametric workflow systems as in \cite{DBLP:conf/csfw/0008SZ18},
to enforce noninterference in presence of declassification.
In this application, however, we additionally must take into account that the synthesized formulas
only depend on predicates whose values are independent of the secret. 
Restricting the subset of predicates possibly used by strategies, turns FO safety games into 
\emph{partial information} safety games.
It remains for future work, to explore this connection in greater detail in order, e.g., 
to determine whether strategies can be automatically
synthesized which only refer to specific \emph{admissible} predicates and, perhaps,
also take the \emph{history} of plays into account.

% \printbibliography % BibLatex version
% natbib version
% LICS Format
% \bibliographystyle{ACM-Reference-Format}
% LNCS format
\bibliographystyle{splncs04}
\bibliography{refs}

\appendix

\section{Proof of Theorem \ref{tapprox}}\label{a:games}

\label{tapproxproof}

\tapprox*

\begin{proof}

First, we extend the \emph{weakest precondition} operator to paths $\pi$ in the control-flow graph
of the FO safety game $\T$.
Assume that $\pi$ ends in program point $v$, and $\Psi$ is a property for that point. 
The weakest precondition
$\sem{\pi}^\top \Psi$ of $\Psi$ is defined by induction on the length of $\pi$.
If $\pi=\epsilon$, then $\sem{\pi}^\top \Psi = \Psi$. 

Otherwise, $\pi=e\pi'$ for some edge $e=(v_1,\theta,v')$. Then
\[
{\small\sem{\pi}^\top \Psi = \sem{e}^\top(\sem{\pi'}^\top\Psi)}
\]
For any path through a FO safety game $\T$ chosen by player $\pA$, 
the weakest precondition can be used to decide if player $\pB$ can
enforce a given assertion.
\begin{lemma}\label{l:path}
Let $\pi$ be a path 
% with start point $v'$, 
and $\Init$ some initial condition and $\Psi$ an
assertion about the endpoint of $\pi$. 
Safety player $\pB$ has a winning strategy for all plays on $\pi$ iff 
$\Init\implies\sem{\pi}^\top \Psi$.
\end{lemma}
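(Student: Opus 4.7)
The plan is to prove Lemma~\ref{l:path} by induction on the length of $\pi$. The payoff of phrasing things pathwise is that a path has no branching, so the only nondeterminism on $\pi$ comes from the successive choices of input predicates along its edges; this lines up perfectly with the inside-out structure of $\sem{\pi}^\top\Psi$. The base case $\pi=\epsilon$ is immediate: $\sem{\pi}^\top\Psi=\Psi$, the single play is $(v,s)$ with $s\models\Init$, and $\pB$ trivially wins iff $s\models\Psi$, i.e.\ iff $\Init\implies\Psi$.

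For the inductive step write $\pi=e\pi'$ with $e=(v_1,\theta,v')$, and let $\Phi=\sem{\pi'}^\top\Psi$, which by induction hypothesis characterises safety on the remaining path: $\pB$ has a winning strategy on all plays of $\pi'$ starting from $(v',s')$ iff $s'\models\Phi$ (with the appropriate valuation~$\rho$). The task is then to show that $\pB$ wins all plays on $\pi$ from $(v_1,s)$ iff $s\models \sem{e}^\top\Phi$. I would isolate this by first stating a small \emph{substitution fact}: for any state $s$, any valuation $\omega$ of the input predicate $X_e\in\{A_e,B_e\}$, and any formula $\Phi$, the successor state $s'$ obtained via $\theta$ satisfies $s',\rho\models\Phi$ iff $s\oplus\omega,\rho\models\Phi\theta$. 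This is essentially the definition of the transition relation applied pointwise to the predicates occurring in $\Phi$, and it is the one bookkeeping step that makes everything else go through.

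With the substitution fact in hand, the two edge cases become direct translations. If $e$ is an $\pA$-edge, then $\pA$ first picks $\omega$ for $A_e$; $\pB$ wins all resulting continuations iff $s\oplus\omega\models\Phi\theta$ holds for \emph{every} $\omega$, which is exactly $s\models\forall A_e.(\Phi\theta)=\sem{e}^\top\Phi$. If $e$ is a $\pB$-edge, then $\pB$ must be able to choose $\omega$ for $B_e$ so that the continuation is winning; existence of such an $\omega$ is $s\models\exists B_e.(\Phi\theta)=\sem{e}^\top\Phi$. Composing with the induction hypothesis on $\pi'$ yields the equivalence for $\pi$, and specialising to the start point gives the $\Init\implies\sem{\pi}^\top\Psi$ characterisation.

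The main obstacle is making the existential case rigorous as a statement about \emph{strategies} rather than just about the existence of a single good $\omega$: a strategy must make a choice at each $\pB$-edge as a function of the play so far, not just of the current state. Two points save us. First, on a fixed path the history is determined by the sequence of $\pA$-choices already made, so the strategy on $\pi$ decomposes into a choice at $e$ depending on $s$ together with a strategy on $\pi'$ at each successor $s'$; the $\exists B_e$ quantifier corresponds exactly to the former, and the induction hypothesis handles the latter. Second, Lemma~\ref{l:positional} guarantees that we do not lose generality by restricting to positional strategies, so the functional-in-$s$ choice at $e$ is enough. With those observations in place the induction closes and the lemma follows.
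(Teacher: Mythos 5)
Your proof is correct and follows essentially the same route as the paper's: induction on the length of $\pi$, with the base case $\pi=\epsilon$ reducing to $\Init\implies\Psi$, and the inductive step splitting on whether the first edge is an $\pA$- or $\pB$-edge and using the substitution fact $s',\rho\models\Phi$ iff $s\oplus\omega,\rho\models\Phi\theta$ to translate the quantifier over input relations into the $\forall A_e$/$\exists B_e$ prefix of $\sem{e}^\top$. Your extra care about decomposing a strategy on $\pi$ into a choice at $e$ plus a strategy on $\pi'$ is a welcome elaboration of a step the paper's proof treats implicitly.
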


\begin{proof}
We proceed by induction on the length of $\pi$. 
If $\pi=\epsilon$, then safety player $\pB$ wins all games in universes $U$ with valuations $\rho$ 
starting in states $s$ with $s,\rho\models\Init$ iff
$\Init\implies\Psi$, and the assertion holds.
Now assume that $\pi= e\pi'$ where $e=(u,\theta,v)$.
Let $\Psi' = \sem{\pi'}^\top\Psi$. 
By inductive hypothesis, safety player $\pB$ has a winning strategy for
$\pi'$ with initial condition $\Psi'$. This means
that she can force to arrive at the end point in some $s$ such that
$s,\rho\models\Psi$, given that she can start in some $s'$
with $s',\rho\models\Psi'$. 
By case distinction on whether edge $e$ is an $\pA$- or a $\pB$-edge, 
this holds whenever
the play starts in some $s$ with $s,\rho\models\Init$.

For the reverse direction, assume that for every universe $U$ and valuation $\rho$ 
chosen by reachability player $\pA$,
safety player $\pB$ can force to arrive at the end point of $\pi$ 
by means of the strategy $\sigma_{U,\rho}$
in a state $s$ such that $s,\rho\models\Psi$ whenever 
the play starts in some state $s_0$ with $s_0,\rho\models\Init$.
Assume that $s_0$ is an initial state with $s_0,\rho\models\Init$.
Again, we perform a case distinction on the first edge $e$.
First assume that $e$ is a $\pB$-edge. Let $B$ denote the relation selected by strategy 
$\sigma_{U,\rho}$ for $e$. 
We construct the successor state $s_1$ corresponding to edge $e$ and relation $B$.
Since safety player $\pB$ can win the game on $\pi'$ when starting in $s_1$, we conclude by
inductive hypothesis that $s_1,\rho\models\Psi'$.
This means that $s_0\oplus\{B_e\mapsto B_1\},\rho\models\Psi'\theta$ and therefore 
$s_0,\rho\models\exists B_e.\Psi'\theta$.
If $e$ is an $\pA$-edge, then for every choice $A$ of reachability
player $\pA$, we obtain a successor state $s_1$ such that by inductive hypothesis,
$s_1,\rho\models\Psi'$. This means that for all $A$,
$s_0\oplus\{A_e\mapsto A\},\rho\models\Psi'\theta$ and therefore also
$s_0,\rho\models\forall A_e.\Psi'\theta$.
In both cases, $s_0,\rho\models\sem{e}^\top(\sem{\pi'}^\top\Psi)$ and the claim follows.
\qed
\end{proof}

\noindent
For any node in the graph of the game, we successively construct the conjunction of 
the weakest preconditions of longer and longer paths starting in this particular node.
By induction, we verify that
\[
{\small\Psi^{(h)}[v]	= \bigwedge\{\sem{\pi}^\top I\mid
	          \pi\,\text{path starting at}\,v,|\pi|\leq h\}}
\]
for all $h\geq 0$.
Thus, safety player wins on all games of length at most $h$ starting at $\start$
iff $\Init\implies\Psi^{(h)}[\start]$ holds, and 
%
% From that, 
the assertion of the lemma follows.
\qed
\end{proof}

\section{Games for Noninterference} \label{s:noninterference}\label{a:ni}

In a multi-agent application such as the conference management system in \cref{fig:easychair}, 
not only safety properties
but also 
% \emph{hypersafety} properties such as 
\emph{noninterference} properties are of interest
\cite{DBLP:conf/ccs/Finkbeiner0SZ17,ATVA16,DBLP:conf/csfw/0008SZ18}. 
Assume, e.g., that 
% for the conference management system in \cref{fig:easychair},
no PC member should learn anything about the
reports provided for papers for which she has declared conflict of interest.
Our goal is to devise a predicate $\Assign$ for the edge between program points
1 and 2 which enforces this property.
%
% In the following, we make these notions precise.

In order to formalize noninterference for FO transition systems, 
we require a notion of \emph{observations} of participating agents.
% In such a multi-agent system, not every agent has access to every piece of information.
Following the conventions in \cite{DBLP:conf/ccs/Finkbeiner0SZ17,ATVA16,DBLP:conf/csfw/0008SZ18}, 
we assume that from every predicate $R$ of rank at least 1, 
agent $a$ \emph{observes} the set of all tuples $\bar z$ so that $Ra\bar z$ holds.
Moreover, we assume that there is a set $\Omega$ of input predicates predicates whose 
values are meant to be \emph{disclosed} only to privileged agents.
In the example from \cref{fig:easychair}, we are interested in
a particular agent $a$. The predicate $A_2$ which provides reports for papers, constitutes a
predicate whose tuples are only disclosed to agent $a$
if they speak about papers with which $a$ has no conflict.

In general, we assume that for each input predicate $O\in\Omega$, 
we are given a FO declassification condition $\Delta_{O,a}$ which specifies the set 
of tuples $\bar y$ where the value of $O\bar y$ may be disclosed to $a$. 
In the example from \cref{fig:easychair}, 
we have:
\begin{equation*}
{
\small
\Delta_{A_2,a} \;=\;\neg\Conf(a,y_1)
}
\end{equation*}
Noninterference for agent $a$ at a program point $v$ then means
that for each predicate $R\in\relsS$, the set of tuples observable by $a$
does not contain information about the sets of nondisclosed tuples of the input predicates
in $\Omega$, i.e., stay the same when these sets are modified.
Hereby, we assume that the control-flow is public and does not depend on secret information.
In a conference management system, e.g., the control-flow does not depend on
on the contents of specific reviews or posted opinions on papers.

While noninterference is best expressed as a \emph{hypersafety} property $\phi_a$ 
\cite{DBLP:conf/ccs/Finkbeiner0SZ17,ATVA16},
we will not introduce that logic here, but remark 
that the verification of $\phi_a$ for a FO transition system $\T$
can be reduced to the verification of an ordinary safety property $\phi^2_a$ 
of the (appropriately defined) \emph{self-composition} of $\T$
\cite{DBLP:conf/csfw/0008SZ18}.

In the following we recall that construction for the case 
that all agents of the system $\T$ are \emph{stubborn} \cite{ATVA16}. 
Intuitively, this property means that all agents' decisions
are \emph{independent} of their respective knowledge about input predicates.
A more elaborate construction $\T_a^{(c)}$ works for \emph{causal} agents whose
decisions may take their acquired knowledge into account (see
\cite{DBLP:conf/csfw/0008SZ18} for the details).
%
%
%\itodo{
%	\begin{itemize}
%		\item Describe Noninterference with Declass
%		\item Describe stubborn agents?
%		\item Solve by Selfcomposition
%		\item Show example selfcomposition
%	\end{itemize}
%}
% \itodo{
% Introduce the notion of stubborn as well as for causal agents!}
% noninterference with declassification
% for FO transition systems 
% can be effectively reduced to ordinary safety of suitably defined
% FO transition system \cite{DBLP:conf/csfw/0008SZ18}. 
%
\emph{Self-composition} $\T_a^{(s)}$ of the FO transition system $\T$ for stubborn agents
keeps track of two traces of $\T$. For that,
a copy $R'$ is introduced for each predicate $R$ in $\relsS\cup\Omega$.
For a formula $\phi$, let $[\phi]'$ denote the formula obtained from
$\phi$ by replacing each occurrence of a predicate $R\in\relsS\cup\Omega$ with
the corresponding primed version $R'$.
Initially, predicates and their primed versions have identical values, but later-on may diverge
due to differences in predicates from $\Omega$.
The initial condition $\Init^2$ is obtained from the initial condition $\Init$
of $\T$ by setting
\begin{equation}
{\small\begin{array}{lll}\Init^2 &=& \Init\wedge\bigwedge_{R\in\relsS}\forall \bar z. R\bar z\leftrightarrow R'\bar z\end{array}}
\label{eq:Init2}
\end{equation}
where we assume that the length of the sequence of variables $\bar z$
matches the rank of the corresponding predicate $R$.
The first track of $\T_{a}^{(s)}$ operates on the original predicates from $\relsS$, 
while the second track executes the same operations, but on the primed predicates.
%
% Subsequently, we only sketch the construction for the case of \emph{stubborn} agents,
% i.e., where the decisions of agents are not influenced by the secret.
%
Let $(u,\theta,v)$ denote a transition of the original system $\T$.
First assume that $\theta$ does not query any of the predicates from $\Omega$.
Then the self-composed system has a transition $(u,\theta^2,v)$ where
$\theta^2$ agrees with $\theta$ on all predicates from $\relsS$
where the right-hand side of $\theta^2$ for $R'$ is obtained from the right-hand side
of $\theta$ for $R$ by replacing each predicate $R\in\relsS$ with its primed counterpart.

When all agents are assumed to be \emph{stubborn}, their choices may not
depend on their acquired knowledge about the input predicates from $\Omega$.
For that reason, the same predicates from $\relsA$ are used on both tracks 
of the self-composition.
Thus, e.g., the second substitution applied in the loop of 
\cref{fig:easychair}
is extended with
\[
{\small \Review'(x,p,r) \wadd \Assign'(x,p)\land A_3(x,p,r)}
\]
% \itodo{Example from the workflow!}
Next, assume that $\theta$ has no occurrences of predicates in $\pA\cup\pB$,
but accesses some predicate $O\in\Omega$.
% Assume further that we are interested in noninterference for the particular agent $a$.
%
Then the value of the declassification predicate $\Delta_{O,a}$ for agent $a$ is 
queried to determine for which arguments $O$ is allowed to \emph{differ}
on the two tracks.
% receive identical results for agents where 
% for whom declassification returns $\true$, and possibly different
% results for the others. 
%
Accordingly, the substitution $\theta$ in $\T$ is replaced with $\theta^2_a$, followed by
{\small
\begin{eqnarray}
% \theta^2_{a} &=&\theta^2;\theta_{O,a} \qquad\text{where}\nonumber	\\
% \theta_{O,a} &=&\{\begin{array}[t]{@{}l}
\!\!	O\bar y	&\coloneqq& Ay \nonumber \\
\!\!	O'\bar y&\coloneqq& % \begin{array}[t]{@{}l}
			\Delta_{O,a}\land[\Delta_{O,a}]'\land A\bar y\lor % \\
			(\neg\Delta_{O,a}\lor\neg[\Delta_{O,a}]')\land A'\bar y
			% \end{array}
	% \end{array}
	\label{eq:oracle}
\end{eqnarray}}
for fresh input predicates $A, A'$ controlled by player $\pA$.
% where ``;'' means that 
% first the left argument substitution is applied and then the right argument substitution.
%
In \cref{fig:selfcomposed}, this means that at the edge from program point 2 to 3, we
have
\[
{\small
\!\!\!\begin{array}{lll}
\Review (x,p,r)\! &%\coloneqq
		{:=}& \!\Assign (x,p) \land A_2(x,p,r)	\\
\Review'(x,p,r)\! &%\coloneqq
		{:=}& \!\Assign'(x,p) \land \\
        &&     \begin{array}[t]{@{}l}
              (\neg \Conf(a,p)\land\neg\Conf'(a,p)\land A_2(x,p,r)\ \lor \\
              (\Conf(a,p)\lor\Conf'(a,p))\land A_2'(x,p,r))
                               \end{array} 	
\end{array}}
\]
Thus, classification on one of the tracks suffices to allow distinct results
when querying the oracle.

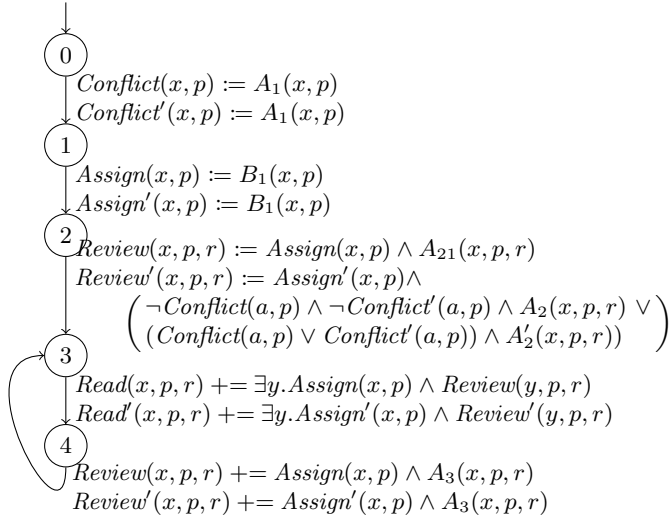
\begin{figure}[h!]
	\begin{tikzpicture}[
  node distance = 1.2cm
 ]
\useasboundingbox (-1,0.6) rectangle (9.2,-6.2);
	\node[fotsnode, initial above, initial text={}] (0) at (0,0) {0};
	\node[fotsnode] (1) [below of=0] {1};
	\node[fotsnode] (2) [below of=1] {2};
	\node[fotsnode,yshift=-0.4cm] (3) [below of=2] {3};
	\node[fotsnode] (4) [below of=3] {4};

	\path[->, draw]
		(0) edge node[right, align=left] {
			$\Conf(x,p) \coloneqq A_1(x,p)$\\
			$\Conf'(x,p) \coloneqq A_1(x,p)$
			} (1)
		(1) edge node[right, align=left] {
			$\Assign(x,p) \coloneqq B_1(x,p)$\\
			$\Assign'(x,p) \coloneqq B_1(x,p)$
		} (2)
		(2) edge node[right, align=left] {
			$\Review(x,p,r) \coloneqq \Assign(x,p) \land A_{21}(x,p,r)$\\
			$\Review'(x,p,r) \coloneqq \Assign'(x,p) \land$ \\
			\qquad
			$ \left(
				\begin{array}{l}
				\neg \Conf(a,p)\land\neg\Conf'(a,p)\land A_2(x,p,r)\ \lor \\
				(\Conf(a,p)\lor\Conf'(a,p))\land A_2'(x,p,r))
				\end{array} \right)$
 			} (3)
		(3) edge node[right, align=left] {
			$\Read(x,p,r) \wadd \exists y. \Assign(x,p) \land \Review(y,p,r)$\\
			$\Read'(x,p,r) \wadd \exists y. \Assign'(x,p) \land \Review'(y,p,r)$
		} (4);

		\path[->, draw] (4) edge [out=-90,in=180, looseness=2]
			node[pos=0.1, yshift=-0.6em, right, align=left] {
				$\Review(x,p,r) \wadd \Assign(x,p) \land A_3(x,p,r)$\\
				$\Review'(x,p,r) \wadd \Assign'(x,p) \land A_3(x,p,r)$\\
			}  (3);
\end{tikzpicture}	
	\caption{\label{fig:selfcomposed}Self-composition of the FO transition system from 
	\cref{fig:easychair}}
\end{figure}

The safety property $\phi_a^2$ to be verified for $\T_{a}^{(s)}$ then amounts to:
\begin{equation}
{\small\begin{array}{l}\bigwedge_{R\in\relsS}\forall \bar z.Ra\bar z\leftrightarrow R'a\bar z\end{array}}
\label{eq:ni}
\end{equation}
where we assume that the length of the sequence of variables $a\bar z$
matches the rank of the corresponding predicate $R$.
Let $\T_{a}^{(s)}$ denote the FO transition system obtained from $\T$ for stubborn agents 
in this way.
In case that player $\pB$ has no choice,
an adaptation of Theorem 1 from \cite{DBLP:conf/csfw/0008SZ18} implies that 
that $\T_{a}^{(s)}$ satisfies the invariant \eqref{eq:ni} for all program points $u$
iff $\phi_a$ % NI with declassification 
holds for $\T$.
This correspondence can be extended to an FO safety game $\T$ where the set of $\pB$ predicates
$\relsB$ is non-empty. However, we must ensure that the FO formulas describing the winning strategy $\sigma$ 
of the self-composition can be translated back into a meaningful strategy for $\T$.
A meaningful sufficient condition is that for each $B\in\relsB$, $B\sigma$ depends only
on predicates $R\in\relsS$ for which $R$ and $R'$ are equivalent.
More generally, assume that we are given for each program point $u$, a set $\rels_u$
where
\begin{equation}
{\small\begin{array}{l}\forall\bar y.R\bar y\leftrightarrow R'\bar y\qquad(R\in\rels_u)\end{array}}	\label{eq:NI}
\end{equation}
holds whenever program point $u$ is reached.
Then the strategy $\sigma$ for $\T_a^{(s)}$ is \emph{admissible} if for
each edge $(u,\theta,v)$ containing some $B\in\relsB$,
$B\sigma$ contains predicates from $R\in\rels_u$ only.
Due to \eqref{eq:NI}, the formulas $B\sigma$ and $[B\sigma]'$ then are equivalent.
Therefore, we obtain:

\begin{theorem}\label{t:partial}
Let $\T$ be an FO safety game with initial condition $I$ and subset $\rels_u\subseteq\relsS$ 
of predicates for each program point $u$ of $\T$.
Assume that $\sigma$ is a strategy so that for each 
predicate $B$ occurring at some edge $(u,\theta,v)$, the FO formula $B\sigma$ 
only uses predicates from $\rels_u$. 
Let $\T_{a}^{(s)}$ denote the corresponding
FO game with respect to stubborn agents and declassification predicates $\Delta_{O,a}$,
and assume that for each program point $u$, property \eqref{eq:NI} holds whenever $u$
is reached by $\T_a^{(s)}\sigma$.
Then the following two statements are equivalent:
\begin{enumerate}
\item
$\T\,\sigma$ satisfies the noninterference property $\phi_a$;
\item
$\T_a^{(s)}\sigma$ satisfies the safety property $\phi_a^2$.
\end{enumerate}
In particular, each admissible winning strategy for the FO safety game $\T_{a}^{(s)}$ gives rise to
a strategy for $\T$ that enforces noninterference.
\end{theorem}

Finding a strategy that enforces noninterference, thus turns into the synthesis problem for an
FO safety game --- with the extra obligation that 
potential winning strategies only
access subsets of \emph{admissible} predicates only.
In the conference management workflow from \cref{fig:easychair} with stubborn agents,
a strategy is required at the edge from program point $1$ to program point $2$. 
At that point, no secret has yet been encountered. Therefore, \emph{all} predicates are 
admissible ---
implying that \emph{any} winning strategy for the FO safety game in
\cref{fig:selfcomposed} can be translated back to a strategy which enforces
noninterference in $\T$.
In particular, we obtain (via $\T_{a}^{(s)}$) that any FO formula $\psi_a$ 
guarantees noninterference for which $\psi_a\implies\neg\Conf(y_1,y_2)$ holds.

% \input{content/partial}
%%%
\section{Proof of Theorem~\ref{t:counter}}
\label{a:counter}

\thmcounterex*

We first consider the case where there are no equalities, but $\pA$- as well as $\pB$-edges.
We show how to construct a safety game $G$ such that an automaton $M$ 
with multiple counters has a 
run, starting with empty counters and reaching some designated state $\term$ iff
safety player $\pB$ has no winning strategy in $G$.
The states $q\in\{1,\ldots,n\}$ of $M$ are encoded into flags 
$f_1,\ldots,f_n$ where $f_1$ and $f_n=\term$
correspond to the initial and final states, respectively.
The invariant $I$ is given by $\neg\term$.
Each counter $c_i$ of $M$ is represented by a monadic predicate $P_i$.
Incrementing the counter means to add exactly one element to $P_i$.
In order to do so, we set all flags $f_i$ to $\tfalse$ 
whenever the simulation was faulty.
Accordingly, we use as initial condition
\[
{\small f_1\wedge \begin{array}{l}\bigwedge_{j>1}\neg f_j\wedge\bigwedge_{i} \forall x.\neg P_ix
		\end{array}}
\]
Consider a step of $M$ which changes state $f_l$ to $f_{l'}$ and increments
counter $c_i$.
The simulation is split into two steps, one $A$-step followed by one $B$-step.
The $A$-step uses the substitution:
\[
{\small
\theta_1= \{ 
	\begin{array}[t]{lll}
	P_iy	&\mapsto& P_iy\vee Ay,	\\
	f_{l''} &\mapsto& \left\{
			\begin{array}{ll}
		f_l\wedge(\exists x.Ax\wedge\neg P_ix)
			&\text{if}\;l'' = l'	\\
			\tfalse	&\text{if}\;l''\neq l',	
			\end{array}\right.	\\
	P'y	&\mapsto& Py\;\}
	\end{array}
}
\]
where $P'$ is meant to record the values of the predicate $P_i$ before the transition. 
By this transition, some flag $f_{l''}$ is set only when the new predicate $P_i$ has
received some new element.
By the subsequent second transition, safety player $\pB$ can achieve $\bigwedge_{l}\neg f_l$ 
whenever the predicate $A$ chosen by reachability player $\pA$ in the previous step,
has more than one element outside $P_i$:
\[
{\small
\theta_2= \{ 
	\begin{array}[t]{lll}
	P_{i'}y	&\mapsto& P_{i'}y,	\\
	f_{l''}	&\mapsto& f_{l''}\wedge 
		\forall x_1x_2. \left(
					\begin{aligned}
						&P_ix_1\vee \neg P'x_1\vee \\
						&P_ix_2\vee\neg P'x_2\vee \\
						&B x_1\vee\neg B x_2
					\end{aligned}
					 \right),\\
	P'y	&\mapsto& \tfalse \;\}
	\end{array}
}
\]
Decrement by 1 can be simulated analogously. 
Since counters can also be checked for 0, we find that safety player $\pB$ wins
a play iff either the simulation of the counters was erroneous or 
reachability player $\pA$ is not able to reach $\term$.
Accordingly, statement (1) of the theorem follows.

In the given simulation,
the $B$-predicates can be replaced by means of an equality in the substitution:
\[
{\small f_{l''} \mapsto f_{l''}\wedge 
		\forall x_1x_2. P_ix_1\vee \neg P'x_1\vee 
			 P_ix_2\vee\neg P'x_2\vee x_1= x_2}
\]
Therefore, also statement (2) follows.
A disequality would have served the same purpose if 
deviation from the correct simulation would have been tracked by means of an error flag.
This kind of simulation is exemplified for the proof of statement (3).

For statement (3), we introduce a dedicated error flag $\err$ and
sharpen the invariant to
\[
{\small \neg\err\wedge\begin{array}{l}
		(\bigvee_{j=1}^{n-1}f_j\vee\bigwedge_{j=1}^n\neg f_j)
		\end{array}}
\]
The error flag is initially assumed to be $\tfalse$, and used to force safety player $\pB$ 
to choose sets $B$ with appropriate properties. Thus, we use
\[
{\small\neg\err\wedge f_1\wedge \begin{array}{l}
	\bigwedge_{j>1}\neg f_j\wedge\bigwedge_{i} \forall x.\neg P_ix
	\end{array}}
\]
as initial condition.
For the actual simulation, we use a single program point together 
with edges for each transition of the counter machine.
Incrementing counter $c_i$ (combined with state transition from $q_l$ to $q_{l'}$), e.g.,
is simulated by an edge with the substitution
\[
\theta'= \{ 
	\begin{array}[t]{lll}
	P_iy	&\mapsto& P_iy\vee By,	\\
	f_{l''} &\mapsto& \left\{\begin{array}{ll}
		f_l	&\text{if}\;l'' = l'	\\
		\tfalse	&\text{if}\;l''\neq l'\;,
		\end{array}\right. \\
	\err	&\mapsto& \err \vee (\forall x.\neg Bx\vee P_ix)\;\vee	\\
		&&	(\exists x_1x_2.\begin{array}[t]{l}
			\neg(Bx_1\wedge\neg P_ix_1)\;\vee	\\
			\neg(Bx_2\wedge\neg P_ix_2)\vee x_1\neq x_2)\;\}
			\end{array}\\
	\end{array}
\]
Due to $\neg\err$ in the invariant, safety player $\pB$ is forced to choose a set $B$ which
adds exactly one element to $P_i$,
while the subformula $\bigwedge_{j}\neg f_j$ forces
reachability player $\pA$ is choose edges according to the state transitions of the counter machine.
\qed

\section{Proof of Theorem~\ref{t:monadic_plain}}\label{a:monadic_plain}

\thmmonadicplain*

\noindent
For the proof of \cref{t:monadic_plain}, we rely on a technique similar to the 
\emph{Counting Quantifier Normal Form} (CQNF) as introduced by Behmann
in \cite{behmann1922beitrage} and picked up in \cite{wernhard2017heinrich}.
A \emph{counting quantifier} $\exists^{\geq n} x. \phi(x)$ expresses that at least $n$ individuals exist for which $\phi$ holds, i.e.
\[
{\small	\exists^{\geq n} x. \phi \equiv \exists x_1\ldots x_n. 
	\begin{array}{l}
	\bigwedge_{1\leq i \leq n} \phi[x_i/x] \land \bigwedge_{i < j \leq n} x_i \neq x_j
	\end{array}}
\]
The main theorem is:
A monadic FO formula $\phi$ is said to be in \emph{liberal counting quantifier normal form}
(liberal CQNF) iff $\phi$ 
	is a Boolean combination of \emph{basic formulas} of the form:
	\begin{itemize}
	\item	${\small\exists^{\geq n}x.\ \bigwedge\limits_{1\leq i \leq m} L_i(x)}$
	
	where $n \geq 1$, $m \geq 0$, and the $L_i(x)$ 
	are pairwise different and pairwise non-complementary 
	positive or negative literals with unary predicates applied to the individual 
	variable $x$, and dis-equalities $x\neq a$ for free variables $a$,
	\item nullary predicates $P$, 
	\item $P(x)$, where $P$ is a unary predicate and $x$ is a global variable,
	\item $x = x'$, where $x,x'$ are global variables.
	\end{itemize}
$\phi$ is in \emph{strict counting quantifier normal form} (strict CQNF) 
if it is in liberal CQNF and additionally does not have dis-equalities of bound FO variables with
free FO variables.
We remark that the notion of strict CQNF has been called just CQNF in \cite{wernhard2017heinrich}.
We have:
	
\begin{theorem}[CQNF for Monadic FO Formulas \cite{wernhard2017heinrich,behmann1922beitrage}]\label{t:counting-qnf} 
	From each monadic FO formula
	$\phi$ equivalent FO formulas $\phi_1,\phi_2$ can be constructed such that
	\begin{enumerate}
	\item $\phi_1$ is in liberal CQNF;
	\item $\phi_2$ is in strict CQNF;
	\item all FO variables and predicates in $\phi_1, \phi_2$ 
		also occur in $\phi$. 
	\qed
	\end{enumerate}
\end{theorem}
We remark that the construction of $\phi_1$ in liberal CQNF follows the same lines
as the construction of $\phi_2$ where only the step of eliminating dis-equalities between
bound variables and free variables is omitted.

The transformation into strict CQNF is illustrated by the following example.
\begin{example}
{\small
	\begin{align*}
		& \exists y.\ \exists x.\ px \land x \neq y & \equiv\\
		& \exists y.\ (\exists^{\geq 1} x.\ px) \land ((\exists^{\geq 2} x.\ px) \lor \neg py) & \equiv\\
		& (\exists^{\geq 1} x.\ px) \land ((\exists^{\geq 2} x.\ px) \lor \exists y. \neg py) & \equiv\\
		& (\exists^{\geq 1} x.\ px) \land ((\exists^{\geq 2} x.\ px) \lor \exists^{\geq 1} y. \neg py)
	\end{align*}}
\end{example}
\noindent

For a monadic FO formula $\phi$ in liberal or strict CQNF, the \emph{quantifier rank} $\rank(\phi)$
equals the maximal $k$ such that $\exists^{\geq k}$ occurs in $\phi$.
Likewise, for a substitution $\theta$ where all images of predicates are in strict CQNF, 
$\rank(\theta)$ equals the maximal rank of a formula
in the image of $\theta$.
For the rest of this subsection, we assume that for all substitutions, all formulas in their
images are in strict CQNF.
We now state our results for such substitutions on monadic FO formulas in CQNF.

\begin{lemma}\label{l:monadicsubstitutions}
	Given a monadic FO formula $\phi$ in liberal CQNF and a substitution $\theta$,
	the quantifier rank of $\phi\theta$ in liberal CQNF is at most the maximum of $\rank(\phi)$ and 
	$\rank(\theta)$.
\end{lemma}

\begin{proof}
	Since $\phi$ is in liberal CQNF and $R\theta$ $(R\in\relsS)$ are all in strict CQNF,
	none of them contain equalities between bound variables, and
	All quantifier scopes $\exists^{\geq k} x.$ contain 
	only literals that mention $x$.
	While in $\phi$ these scopes may contain inequalities $x \neq a$ for free variables $a$,
	this is not allowed in the $R\theta$. In particular, there are no dis-equalities between
	$y$ and a bound FO variable.
	Thus, we can write $R\theta$ in the form 
	$\bigvee_{j=1}^{l_R}\psi_{R,j}(y) \wedge \psi'_{R,j}\vee\psi_R''$ where each $\psi_{R,j}(y)$ is 
	a \emph{quantifierfree} boolean combination of literals applied to $y$,
	equalities or dis-equalities of $y$ with further free variables, 
	and all $\psi'_{R,j}$, $\psi_R''$ do not contain $y$.
	Applying $\theta$ to a literal $L(a)$, where $a$ is a free variable, does not introduce
	new nested quantifiers. 
	Now consider a quantified basic formula 
	\[
	{\small\exists^{\geq k} x. 
	\begin{array}{l}
	(\bigwedge_{i=1}^{l_1} L_i(x))\wedge
	(\bigwedge_{i=1}^{l_2} \neg L'_i(x))\wedge D(x)
	\end{array}}
	\]
	of $\phi$ where $D(x)$ is a conjunction of disequalities with free variables of $\phi$.
	Application of $\theta$  
	yields a formula which is a boolean combination of
	\begin{itemize}
	\item	basic formulas from $\theta$ without occurrences of $y$
		since these can be extracted out of the scope of any quantifier
		of $\phi$;
	\item	basic formulas from $\phi$ without occurrences of predicates;
	\item	basic formulas arising of the CQNF of a formula
		\[
		\exists^{\geq k} (\bigwedge_{j=1}^{m} 
			\psi_{R_j,i_j}[x/y])\wedge\neg\psi_{R,i}[x/y]\wedge D(x)
		\]
		for some predicates $R_j,R$ and indices $i_j,i$.
		By construction, each formula $\psi_{R_j,i_j}[x/y]$ as well as 
		formula $\neg\psi_{R,i}[x/y]$ is quantifierfree. 
		Therefore, it is equivalent to a boolean combination of
		basic formulas of rank at most $k$.
	\end{itemize}
	Altogether, the rank of $\phi\theta$ is thus bounded by
	the maximum of the ranks of $\phi$ and $\theta$.
\end{proof}

\begin{lemma}
	For any monadic FO formula $\phi$ in liberal CQNF and a sequence of substitutions $\theta_0,\ldots,\theta_n$, 
	in strict CQNF, it holds that
	\[
		\rank(\phi \theta_0 \ldots \theta_n) \leq \max(\rank(\phi), \rank(\theta_0), \ldots, \rank(\theta_n))
	\] 
\end{lemma}
\noindent The proof follows from the repeated application of \cref{l:monadicsubstitutions}.

Now that we proved the intermediate steps, we can prove the initial \cref{t:monadic_plain}.

\begin{proof}[Proof of \cref{t:monadic_plain}]
For all $h \geq 0$ and nodes $v$, $\Psi^{(h)}[v]$ is a conjunction of sequences of 
substitutions $\theta$ from $E$ applied to some FO formula $I[v']$.
Thus, $\rank(\Psi^{(h)}[v])$ is at most
\[
\max (\{ \rank(\theta) \mid (v,\theta,v') \in E \} \cup 
\{ \rank(I[v']) \mid v' \in V \})
\]
Let $r$ be this maximum.
For a given set of constants, there are only finitely many formulas of fixed quantifier rank 
(up to logical equivalence).
Thus, fixpoint computation as given in \cref{s:games} necessarily terminates.
According to the proof of \cref{tapprox}, 
a game $G$ is safe iff for all $h \geq 0$, $\Init \implies \Psi^{(h)}[v_0]$.
Therefore, \cref{t:monadic_plain} follows.
\end{proof}

\noindent
We remark that the given finite upper bound $r$ to the ranks of all formulas $\Psi^{(h)}[v]$
together with
the finite model property \cite{borger2001classical} implies that
reachability player $\pA$ can win iff $\pA$ can win in a  
a universe of size at most $r' 2^{\abs{\relsS}}$ where $r'$ is the maximum of $r$ and the rank of
$\Init$.
 
\section{Proof of Theorem \ref{t:mono_A}} \label{proof_mono_A}

\thmmonoa*

We have:

\begin{lemma}\label{l:equality}
Let $\phi$ be a FO formula with free variables from $\Const$ 
possibly containing equalities or disequalities between bound variables.
We construct a formula $\phi^\sharp$ with free variables from $\Const$ 
and neither positive nor negative equalities between bound variables
such that the following holds:
\begin{enumerate}
\item	$\phi^\sharp\implies\phi$;
\item	If $\psi\implies\phi$ holds for any other monadic formula $\psi$
	without (dis-)equalities between bound variables,
	then $\psi\implies\phi^\sharp$.
\item There exists some $d\geq 0$ such that for a model $s$ of multiplicity at least $d$ and a valuation $\rho$,
	$s,\rho\models\phi^\sharp$ iff $s,\rho\models\phi$.
\end{enumerate}
\end{lemma}

\noindent
If the assumptions of lemma \ref{l:equality} are met, $\phi^\sharp$ is called the 
\emph{weakest strengthening} of $\phi$ by formulas without equalities.
\begin{proof}
Assume that $\phi$ is in prenex normal form and that the quantifierfree part
$\phi'$ is in disjunctive normal form.
By transitivity of equality, we may assume that in each monomial $m$ of $\phi'$
for each occurring equality $x=y$ one of the following properties holds:
\begin{itemize}
\item	both $x,y$ are free variables; or
\item	$x$ is free and $y$ occurs in the quantifier prefix; or
\item	neither $x$ nor $y$ are free, $x$ is different from $y$ and
	the leftmost variable in the quantifier prefix
	which is transitively equal to $y$.
\end{itemize}
Next, $m$ is rewritten in such a way that additionally 
no variable $y$ on a right side of an equality is existentially quantified.
As a result, each remaining right side of an equality literal is 
either free in $\phi'$ (in which case the left side is also free) 
or universally quantified. 
Now consider any model $s$ such that $\mu(s)\geq d$ for some $d>0$ exceeding 
the number of free variables plus the length of the quantifier prefix of $\phi$. 
Then we verify for each universally quantified variable $y$
(by induction on the number of universally quantified 
variables occurring in a quantifier prefix $Qz$),
that  $s,\rho\models\forall y\,Qz.\phi'$ iff $s,\rho\models\forall y\,Qz.\phi''$
where $\phi''$ is obtained from $\phi'$ by replacing each occurrence of an equality 
$x=y$ with $x \sim_{\Const} y$, defined as $\bigvee_{c\in\Const}x=c\wedge y=c$.

Accordingly, we construct $\phi^\sharp$ from $\phi$ by replacing all 
equalities $x=y$ where $y$ is universally quantified with 
$x \sim_{\Const} y$.
Then $\phi^\sharp$ satisfies statements (1) and (3).
In order to prove statement (2), we first observe that $\psi\implies\phi$ also holds for
all models $s$ with $\mu(s)\geq d$ for all values of $d$ exceeding the cardinality of $\Const$. 
By property (3), we therefore have that
$\psi\implies\phi^\sharp$ in all models $s$ and all valuations $\rho$
where $\mu(s)$ is sufficiently large.
Since (dis-)equalities in $\phi^\sharp$ and in $\psi$ are not applied 
to pairs of bound variables,
% only refer to free variables,
the assertion follows.
\end{proof}

\noindent
We conclude:

\begin{corollary}\label{c:abstract_eq}
Assume that $\phi,\phi'$ are monadic FO formulas with positive occurrences of equality only.
Then
\begin{enumerate}
\item	$(\phi\wedge\phi')^\sharp = \phi^\sharp\wedge(\phi')^\sharp$, and
\item	$(\forall A.\phi)^\sharp = (\forall A.\phi^\sharp)^\sharp$
\end{enumerate}
\end{corollary}

\noindent
With this, we can now prove the initial \cref{t:mono-mc}.

\begin{proof}[Proof of \cref{t:mono-mc}]
Let $\Psi^{(h)}$ denote the $h$th iteration of the weakest precondition \eqref{def:fixpoint}
as defined in section \ref{s:games}.
Due to SO Quantifier Elimination as in \cite{behmann1922beitrage},
each formula $\Psi^{(h)}[v]$ is equivalent to a monadic FO formula.
If neither $I$ nor $\theta$ contain equalities, $\Psi^{(h)}[v]$ has positive occurrences of
equalities only.

\noindent
The sequence $\Psi^{(h)}[v]$ for $h\geq 0$ still need not stabilize
as more and more FO variables may be introduced.
Let $\Psi_0^{(h)}$ denote the $h$th iteration of the \emph{abstraction} of 
the weakest precondition:
\[
{\small
\begin{array}{lll}
\Psi_0^{(0)}[v]	&=&	I[v]	\\
\Psi_0^{(h)}[v]	&=&	\Psi_0^{(h-1)}\;\wedge\\
	&&\bigwedge\limits_{(v,\theta,v')\in E}
			(\forall A_e.(\Psi_0^{(h-1)}[v']\theta))^\sharp\quad
			\text{for}\; h>0
\end{array}}
\]
where the abstraction operator $(\cdot)^\sharp$ returns the weakest strengthening
by means of a monadic FO formula without equality.
Recall from corollary \ref{c:abstract_eq}
that the abstraction operator commutes with conjunctions.
Also, we have that $(\forall A_e.\phi)^\sharp = (\forall A_e.\phi^\sharp)^\sharp$
for each monadic FO formula with positive occurrences of equality only.
By induction on $h$, we find that
$\Psi_0^{(h)}[v] = (\Psi^{(h)}[v])^\sharp$ holds for all $h\geq 0$.
Since $\Init$ does not contain equalities, we therefore have for all $h\geq 0$, that 
$\Init\implies\Psi^{(h)}[\start]$ iff
$\Init\implies\Psi_0^{(h)}[\start]$.
Since (up to equivalence) the number of monadic formulas without
equalities or disequalities is finite, the sequence $\Psi_0^{(h)}$
for $h\geq 0$ eventually stabilizes.
This means that there is some $h'\geq 0$ such that for each program point $v$,
$\Psi_0^{(h')}[v] = \Psi_0^{(h'+1)}[v]$. Thus, game $G$ is safe iff
$\Init\implies\Psi_0^{(h')}[\start]$.
Since the implication is decidable, the theorem follows.
\end{proof}

\section{Proof of Theorem \ref{t:mono_B}}
\label{proof_mono_B}

\thmmonob*
The proof is analogous to the proof of theorem \ref{t:mono-mc}
where the abstraction of equalities now is replaced with an abstraction
of disequalities, and corollary \ref{c:abstract_eq} is replaced with a similar
corollary \ref{c:abstract_neq} dealing with disequalities.

In analogy to safety games with invariants containing equalities,
we provide a weakest strengthening of 
monadic FO formulas, now containing positive occurrences of disequalities only. 
Let $\phi$ denote a monadic formula in negation normal form
with free variables from $\Const$,
and no positive occurrences of equalities between bound variables.
We define $\phi^\sharp$ now as the formula obtained from 
$\phi$ by replacing each literal $x\neq y$ ($x,y$ bound variables) with 
\begin{equation}
{\small\begin{array}{l}
	(\bigvee_{R\in\rels}Rx\wedge\neg Ry\vee \neg Rx\wedge Ry)\;\vee\\
	(\bigvee_{c\in\Const}x=c\wedge y\neq c\vee x\neq c\wedge y=c)
	\end{array}}
\label{def:abstract_neq}
\end{equation}
Then, $\phi^\sharp\implies\phi$ holds, and we claim:

\begin{lemma}\label{l:abstract_neq}
Let $\psi$ be any monadic FO formula without equalities or disequalities between 
bound variables such that
$\psi\implies\phi$ holds. Then also $\psi\implies\phi^\sharp$ holds.
\end{lemma}

\begin{proof}
We proceed by induction on the structure of $\phi$.
Clearly, the assertion holds whenever $\phi$ does not contain disequalities
between bound variables.
Assume that $\phi$ is the literal $x\neq y$ for bound variables $x,y$.
Assume that $\psi\implies (x\neq y)$, but 
$\psi$ does not imply formula \eqref{def:abstract_neq}.
This means that there is a model $M$ and an assignment $\rho$ such that
$M,\rho\models\psi\wedge\bigwedge_{R\in\rels}Rx\wedge Ry\vee\neg Rx\wedge\neg Ry
	\wedge\bigwedge_{c\in\Const}(x\neq c\vee y=c)\wedge(x=c\vee y\neq c)$ holds.
W.l.o.g., $M$ is minimal, i.e., elements which cannot be distinguished by means of 
predicates in $\rels$ or free variables in $\Const$, are equal. 
But then $M,\rho\not\models (x\neq y)$ --- in contradiction
to the assumption.

Now assume that $\phi= \phi_1\wedge \phi_2$. 
Then $\phi^\sharp = \phi_1^\sharp\wedge\phi_2^\sharp$.
Let $\psi$ imply $\phi$. Then $\psi\implies\phi_i$ for each $i$.
Therefore, by induction hypothesis, $\psi\implies\psi_i^\sharp$ for all $i$.
As a consequence, $\psi\implies\phi^\sharp$.

Now assume that $\phi= \phi_1\vee\phi_2$. 
Then $\phi^\sharp = \phi_1^\sharp\vee\phi_2^\sharp$.
If $\psi$ implies $\phi$, then for each model $M$ variable assignment $\rho$,
there is some $i$ so that $M,\rho\models\psi\implies\phi_i$. 
Assume for a contradiction that $\psi\wedge\neg(\phi_1^\sharp\vee\phi_2^\sharp)$ is satisfiable.
Then there is some model $M$, assignment $\rho$ so that
$M,\rho\models\psi\wedge\neg\phi_1^\sharp\wedge\neg\phi_2^\sharp$.
In particular, there is some $i$ so that 
$M,\rho\models\phi_i\wedge\neg\phi_1^\sharp\wedge\neg\phi_2^\sharp$.
By inductive hypothesis, $\phi_i^\sharp\implies\phi_i$ holds. 
We conclude that
$M,\rho\models\phi_i\wedge\neg\phi_1\wedge\neg\phi_2$ holds --- contradiction.
Similar arguments also apply to existential and universal quantification in $\phi$.
As a consequence, $\psi\implies\phi^\sharp$ holds.
\end{proof}

\begin{corollary}\label{c:abstract_neq}
Assume that $\phi,\phi'$ are monadic FO formulas without positive occurrences of equalities
between bound variables.
Then
\begin{enumerate}
\item	$(\phi\wedge\phi')^\sharp = \phi^\sharp\wedge(\phi')^\sharp$, and
\item	$(\exists B.\phi)^\sharp = (\exists B.\phi^\sharp)^\sharp$
\end{enumerate}
\end{corollary}

\section{Proof of Lemma \ref{l:normal}}
\label{l:normalproof}

\lemnormalform*

\begin{proof}
W.l.o.g., we assume that $\phi=\forall x.\phi'$ where $\phi'$ is quantifierfree
and in conjunctive normal form. 
Let $E,F',G',H'$ equal the conjunction of all clauses in $\phi'$ containing
no occurrence of $B$, only positive, only negative and both positive and negative
occurrences of $B$.
Each clause of the form $c'\vee Bz_1\vee\ldots\vee Bz_k$ ($c'$ without $B$)
is equivalent to 
\[
{\small \forall y.c'\vee\begin{array}{l}
		(\bigwedge_{i=1}^kz_i\neq y)\vee By
		\end{array}}
\]
where $z_i\neq y$ abbreviates the disjunction $\bigvee_{j=1}^rz_{ij}\neq y_j$
--- given that $z_i=z_{i1}\ldots z_{ir}$.
Likewise, each clause of the form $c'\vee \neg Bz_1\vee\ldots\vee \neg Bz_k$ ($c'$ without $B$)
is equivalent to 
\[
{\small \forall y'.c'\vee\begin{array}{l}
		(\bigwedge_{i=1}^kz_i\neq y')\vee \neg By'
		\end{array}}
\]
Finally, each clause of the form 
$c'\vee \neg Bz_1\vee\ldots\vee \neg Bz_k\vee\neg z'_1\vee\ldots\neg Bz'_l$ ($c'$ without $B$)
is equivalent to 
\[
{\small
\forall yy'.c'\vee
\begin{array}{l}
(\bigwedge_{i=1}^kz_i\neq y)\vee 
(\bigwedge_{i=1}^lz'_i\neq y')
\end{array}
\vee By\vee \neg By'}
\]
Applying these equivalences to the clauses in the conjunctions in $F',G',H'$, respectively,
we arrive at conjunctions of clauses which all contain just the $B$-literal $By$, 
the $B$-literal $\neg By'$ or 
$By\vee\neg By'$, respectively. From these, the formulas $F,G$ and $H$ can be constructed
by distributivity.
% \qed
\end{proof}

\section{Proof of \cref{t:SO_choice}}
\label{a:tsochoiceproof}

\tsochoice*

\begin{proof}
Our goal is to prove that $\exists B.\phi$ implies the formula $\phi[{\cal H}_B\phi/B]$.
We consider each conjunct of $\phi$ in turn.
\[
{\small
    \begin{array}{lcl}
\forall \bar y. F\vee{\cal H}_B\phi &=&
\forall \bar y.\exists B. F\vee (B\bar y\;\wedge \\
&&(\forall \bar y'.G\vee\neg B\bar y')\wedge(\forall \bar y\bar y'.H\vee B\bar y\vee\neg B\bar y')) \\
&\leftarrow&
\forall \bar y.\exists B.  (F\vee B\bar y)\;\wedge\\
&&(\forall \bar y'.G\vee\neg B\bar y')\wedge(\forall \bar y\bar y'.H\vee B\bar y\vee\neg B\bar y')  \\
&\leftarrow&
\exists B.\forall \bar y.  (F\vee B\bar y)\;\wedge\\
&&(\forall \bar y'.G\vee\neg B\bar y')\wedge(\forall \bar y\bar y'.H\vee B\bar y\vee\neg B\bar y')  \\
&=&\exists B.\phi
\end{array}
}
\]
\[
{\small
    \begin{array}{lcl}
\forall \bar y'. G\vee\neg{\cal H}_B\phi &=&
    \forall \bar y'.\forall B.G\vee\neg B\bar y'\;\vee  \\
&&(\exists \bar y'.B\bar y'\wedge\neg G)\vee (\exists \bar y\bar y'.\neg H\wedge\neg B\bar y\wedge B\bar y')    \\
&=& \forall B.\forall \bar y'.G\vee\neg B\bar y'\;\vee  \\
&&(\exists \bar y'.\neg G\wedge B\bar y')\vee (\exists \bar y\bar y'.\neg H\wedge\neg B\bar y\wedge B\bar y')   \\
&=& \ttrue
    \end{array}
}
\]
\[
{\small
\begin{array}{cl}
\multicolumn{2}{l}{\forall \bar y\bar y'.H\vee{\cal H}_B\phi\vee\neg{\cal H}_B\phi[\bar y'/\bar y]} \\
=& 
\forall \bar y\bar y'.H\;\vee\\
&(\exists B. B\bar y\wedge
    (\forall \bar y'.G\vee\neg B\bar y')\wedge(\forall \bar y\bar y'.H\vee B\bar y\vee\neg B\bar y'))\;\vee \\
 &(\forall B.\neg B\bar y'\vee \neg(\forall \bar y'.G\vee\neg B\bar y')\vee
                \neg(\forall \bar y\bar y'.H\vee B\bar y\vee \neg B\bar y'))    \\
\leftarrow&
\forall \bar y\bar y'.\forall B.H\;\vee \\
&B\bar y\wedge(\forall \bar y'.G\vee\neg B\bar y')\wedge(\forall \bar y\bar y'.H\vee B\bar y\vee\neg B\bar y')\;\vee    \\
&\neg B\bar y'\vee \neg(\forall \bar y'.G\vee\neg B\bar y')\vee
                \neg(\forall \bar y\bar y'.H\vee B\bar y\vee \neg B\bar y') \\
= & 
\forall \bar y\bar y'.\forall B.H\vee B\bar y\vee\neg B\bar y'\;\vee    \\
 &\neg B\bar y'\vee \neg(\forall \bar y'.G\vee\neg B\bar y')\vee
                \neg(\forall \bar y\bar y'.H\vee B\bar y\vee \neg B\bar y') \\
=& \ttrue
\end{array}
}
\]
Altogether therefore,
$\exists B.\phi \implies \phi[{\cal H}_B\phi/B]$, and the assertion follows.  
\end{proof}

\end{document}